\newtheorem{theorem}{Theorem} 	      	      	                              
\newtheorem{corollary}[theorem]{Corollary}     	      	      	      	      
\newtheorem{lemma}[theorem]{Lemma}     	       	      	      	      	      
\newtheorem{proposition}[theorem]{Proposition} 	      	      	      	      
\newtheorem{definition}[theorem]{Definition} 	      	      	                
\newtheorem*{remark}{Remark}                                                  
\numberwithin{equation}{section}                                              
\numberwithin{theorem}{section}                                               
\newcommand{\ul}[1]{\underline{#1}}                                           
\newcommand{\mf}[1]{\mathfrak{#1}}                                            
\newcommand{\mc}[1]{\mathcal{#1}}                                             
\newcommand{\N}{\mathbb{N}}                                                   
\newcommand{\R}{\mathbb{R}}                                                   
\newcommand{\Sph}{\mathbb{S}}                                                 
\newcommand{\grad}{\nabla^\sharp}                                             
\newcommand{\nasla}{\slashed{\nabla}}                                         
\newcommand{\tfr}{y}                                                          
\newcommand{\AdS}{{ \rm AdS }}                                                
\begin{document}

\title[Unique continuation]{Unique continuation from infinity in \\
asymptotically Anti-de Sitter spacetimes}

\author{Gustav Holzegel}
\address{Department of Mathematics\\
South Kensington Campus\\
Imperial College\\
London SW7 2AZ\\ United Kingdom}
\email{gholzege@imperial.ac.uk}

\author{Arick Shao}
\address{Department of Mathematics\\
South Kensington Campus\\
Imperial College\\
London SW7 2AZ\\ United Kingdom}
\email{c.shao@imperial.ac.uk}

\begin{abstract}
We consider the unique continuation properties of asymptotically Anti-de Sitter spacetimes by studying Klein-Gordon-type equations $\Box_g \phi + \sigma \phi = \mathcal{G} ( \phi, \partial \phi )$, $\sigma \in \mathbb{R}$, on a large class of such spacetimes.
Our main result establishes that if $\phi$ vanishes to sufficiently high order (depending on $\sigma$) on a sufficiently long time interval along the conformal boundary $\mathcal{I}$, then the solution necessarily vanishes in a neighborhood of $\mathcal{I}$.
In particular, in the $\sigma$-range where Dirichlet and Neumann conditions are possible on $\mathcal{I}$ for the forward problem, we prove uniqueness if \emph{both} these conditions are imposed.
The length of the time interval can be related to the refocusing time of null geodesics on these backgrounds and is expected to be sharp.
Some global applications as well a uniqueness result for gravitational perturbations are also discussed.
The proof is based on novel Carleman estimates established in this setting.
\end{abstract}

\maketitle

\section{Introduction} \label{sec.intro}

Asymptotically anti-de Sitter (aAdS) spacetimes are solutions $( \mathcal{M}, g )$ of the vacuum Einstein equations with negative cosmological constant $\Lambda < 0$,
\begin{equation} \label{ee}
\operatorname{Ric} [g] = \Lambda g \text{,}
\end{equation}
which asymptotically in space behave like the maximally symmetric solution of \eqref{ee}, anti-de Sitter (AdS) space. Such spacetimes play a dominant role in theoretical physics, mostly due to their putative relation to strongly coupled fields theories \cite{Malda}, and, more recently, phenomena in condensed matter theory \cite{Hartnoll}.

On the mathematical side, the rigorous study of the classical dynamical aspects of these spaces has only been initiated very recently.
Somewhat surprisingly, perhaps, the causal geometry of aAdS spacetimes (in particular, the existence of a timelike conformal boundary at infinity) already renders nontrivial the simplest conceivable hyperbolic problem in this context: that of establishing well-posedness for the massive linear wave equation on such a fixed aAdS background $( \mc{M}, g )$,
\begin{align} \label{weintro}
\Box_g \phi - \frac{\Lambda}{n} \sigma \phi = 0 \text{.}
\end{align}
Through the works of \cite{Bachelot, BF, Hol:wp, Vasy}, and most importantly \cite{Warnick:2012fi}, we now have a definite understanding of the boundary initial value problem associated with \eqref{weintro}, reviewed briefly below.
This theory has been extended to non-linear equations \cite{EncisoK}.
For \eqref{ee} itself we have the classical result of Friedrich \cite{Friedrich}; see also \cite{Enciso}.

While it is reassuring that basic hyperbolic equations on aAdS spacetimes admit well-posed formulations, some physicists prefer instead to entertain the following loosely formulated question:
\begin{itemize}
\item[] \emph{In what way is the trace on the boundary of a solution of \eqref{ee}  ``in correspondence" with the solution in the interior?} 
\end{itemize}

The purpose of this paper is to develop the analytical techniques leading to both a precise formulation and a satisfactory answer to the above question.
We begin here by treating wave equations on \emph{fixed} aAdS backgrounds before turning to the full non-linear Einstein equations (\ref{ee}) in \cite{ASfollowup}.

We emphasize that while the background metric will be fixed here, our results will hold for a large class of asymptotically AdS spacetimes\footnote{See Definitions \ref{def.aads_manifold} and \ref{def.aads}.} (of any dimension), and for a general class of tensorial\footnote{See Section \ref{sec.aads_tensor}.} and even non-linear wave equations \eqref{we}; see the discussion in Section \ref{sec.intro_gen}.
However, for the present introductory discussion, the reader may restrict attention to linear waves on $(3+1)$-dimensional ``pure" AdS spacetime $( \mathbb{R}^4, g_{\rm AdS} )$, whose metric in global coordinates $( t, r, \theta, \phi )$ takes the form
\begin{equation} \label{ads}
g_{\rm AdS} = - \left( 1 - \frac{\Lambda}{3} r^2 \right) dt^2 + \left( 1 - \frac{\Lambda}{3} r^2 \right)^{-1} dr^2 + r^2 ( d \theta^2 + \sin^2 \theta d \phi^2 ) \text{,}
\end{equation}
where for convenience, we will set $\Lambda = -3$ below.

\subsection{Unique Continuation and Pseudoconvexity}

It is well-known that the AdS metric \eqref{ads} is conformally equivalent to half of the Einstein static cylinder $( \R \times \Sph^3_+, g_E )$, where $g_E = \Omega^2 g_{\rm AdS} = -dt^2 + d\Sigma^2$, with 
\[ d \Sigma^2 = d \chi^2 + \sin^2 \chi (d \theta^2 + \sin^2 \theta d \varphi^2 ) \]
the standard round metric on a hemisphere $\Sph^3_+$ of $\Sph^3$.
In the case of conformal mass, $\sigma = 2$ in \eqref{weintro}, the rescaled field $\psi = \Omega^{-1} \phi$ satisfies the wave equation $\Box_{g_E} \psi - \psi = 0$ on $( \mathbb{R} \times \Sph^3_+, g_E )$, allowing one to transform the problem to a regular initial boundary value problem on a bounded domain.\footnote{For $\sigma \neq 2$, one can also transform the problem to a bounded domain at the cost of introducing divergent (towards the boundary) potentials in the transformed wave equation. These divergent terms are, however, absent for $\sigma = 2$.}

In this formulation, it is quite clear from the analogy with a timelike hypersurface in Minkowski spacetime (examples going back to Hadamard \cite{Hadamard}) that the problem arising from specifying Cauchy data for \eqref{weintro} on the timelike boundary of AdS will generally not be well-posed.
In particular, the solution, if it exists (it does, for instance, for analytic data), may not depend continuously on the data prescribed.

In spite of this, one can still hope for (local) \emph{unique continuation}, which can be formulated in a general manner as follows:
\begin{itemize}
\item[] \emph{Does Cauchy data on a boundary hypersurface determine the solution---if it exists---of a PDE uniquely in a neighbourhood of (one side of) the boundary.}
\end{itemize}
In the setting of linear equations, this can be equivalently stated as:
\begin{itemize}
\item[] \emph{Does zero Cauchy data on a boundary hypersurface imply that the solution of a PDE must vanish in a neighbourhood of (one side of) the boundary.}
\end{itemize}
With regards to the above discussion, the PDE under consideration is \eqref{weintro}, perhaps with additional lower-order terms present, and the ``Cauchy data" is imposed on AdS infinity.
The precise definition of ``Cauchy data" at conformal infinity is naturally suggested by the forward well-posedness theory, which, as explained below, introduces a notion of Dirichlet and Neumann data at infinity.
Ideally, we wish to show that any solution having \emph{both} vanishing Dirichlet and Neumann data at infinity must necessarily vanish in the interior as well.

\subsubsection{Classical Methods of Analysis}

There exist at least two well-known techniques to prove such uniqueness results.
The first is via Holmgren's theorem, which implies uniqueness in the class of distributions whenever the PDE is both linear and analytic and the boundary hypersurface is noncharacteristic.
In particular, this would be directly applicable to \eqref{weintro} in the case of pure AdS spacetime (\ref{ads}) and the conformal mass $\sigma = 2$ (in the finite setting $( \mathbb{R} \times \Sph^3_+, g_E )$).

For our purposes, however, this approach would be unsatisfactory for a number of reasons.
The first is the requirement that the PDE be analytic; this method breaks down entirely if one adds a non-analytic potential to \eqref{weintro} or if $g_{\rm AdS}$ is replaced by a non-analytic aAdS metric.\footnote{Indeed, Holmgren's theorem fails for linear PDE with only smooth coefficients. See the discussion in \cite[Sect. 13.6]{hor:lpdo2} and the references within.}
Secondly, since Holmgren's theorem is applicable only to linear equations (see \cite{meti:counter_holmg}), this cannot provide a viable path toward attacking our main goal: the highly nonlinear vacuum Einstein equations \eqref{ee}.

The second technique, which applies to PDEs with only sufficiently smooth coefficients, is based on a general class of weighted $L^2$-estimates known as \emph{Carleman estimates}, pioneered by Carleman in \cite{carl:uc_strong}. For geometric wave equations of the form 
\begin{align} \label{welinear}
\Box_g \phi = a^\alpha \nabla_\alpha \phi + V \phi \text{,}
\end{align}
where the lower-order coefficients $a^\alpha$ and $V$ are sufficiently smooth but not necessarily analytic (the metric $g$ is also allowed to be non-analytic), the main assumption required for deriving unique continuation via this method is that of (strong) \emph{pseudoconvexity}, a notion introduced by H\"ormander.

Intuitively, when an oriented hypersurface $S$ in a spacetime $( \mc{M}, g )$ is pseudoconvex (with respect to $\Box_g$, and in the ``positive" direction), then any null geodesic which is tangent to $S$ at a point remains strictly to the ``negative" side of $S$ nearby.
The precise geometric characterization of pseudoconvexity that we will use throughout this paper can be found in Definition \ref{def.pseudoconvex}.

The classical unique continuation result for wave equations is roughly stated below.
For details, the reader is referred to \cite{hor:lpdo4}.

\begin{proposition} \label{thm.uc_classical}
Let $( \mc{M}, g )$ and $S$ be as above, and suppose $\phi$ is a (sufficiently) smooth solution of \eqref{welinear} on $\mc{M}$.
Suppose $S$ is pseudoconvex (with respect to $\Box_g$, and in the ``positive" direction), and suppose $\phi$ has vanishing Cauchy data on some open subset $V \subseteq S$.
Then, there is an $\mc{M}$-neighborhood $U$ of $V$ such that $\phi$ vanishes on the portion of $U$ on the ``positive" side of $S$.
\end{proposition}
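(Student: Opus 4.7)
The plan is to prove this classical result by means of a Carleman estimate, which is the standard method available for non-analytic coefficients. I would localize near an arbitrary point $p \in V$ and reduce everything to showing that $\phi$ vanishes in some $\mc{M}$-neighborhood of $p$ lying on the positive side of $S$.

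First, I would use the pseudoconvexity of $S$ at $p$ to construct a smooth real-valued function $\psi$ defined on a small $\mc{M}$-neighborhood $W$ of $p$ with the following properties: $\psi(p) = 0$, $\{\psi = 0\} \cap W$ coincides with $S \cap W$, $d\psi \neq 0$, and the level hypersurfaces $\{\psi = -\epsilon\}$ for small $\epsilon > 0$ are themselves pseudoconvex with respect to $\Box_g$ in the positive direction. Such a $\psi$ is produced in the standard way by starting from a defining function of $S$ and deforming it using a convex function of itself; the pseudoconvexity of $S$ at $p$ is precisely what allows the convexification to upgrade to pseudoconvexity of all nearby level sets.

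The central step is the Carleman estimate. The goal is to show that there exist constants $\tau_0, C > 0$ and a neighborhood $W' \subseteq W$ of $p$ such that for every $u \in C_c^\infty(W')$ and every $\tau \geq \tau_0$,
\begin{equation*}
\tau \int_{W'} e^{2\tau \psi} |\nabla u|^2 + \tau^3 \int_{W'} e^{2\tau \psi} |u|^2 \leq C \int_{W'} e^{2\tau \psi} |\Box_g u|^2 \text{.}
\end{equation*}
This is proved by conjugating $\Box_g$ by the weight, i.e.\ writing $P_\tau = e^{\tau \psi} \Box_g e^{-\tau \psi}$, splitting into self-adjoint and skew-adjoint parts $A, B$, and expanding $\|P_\tau v\|^2 = \|A v\|^2 + \|B v\|^2 + \langle [A, B] v, v \rangle$. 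The commutator $[A, B]$ is a second-order operator whose principal symbol along the characteristic set of $\Box_g$ is, up to positive factors, the pseudoconvexity quantity for the level sets of $\psi$; by construction this is strictly positive, which delivers the lower bound after a G\aa rding-type argument. I expect this to be the main technical obstacle: one must carefully track positive commutator terms, absorb lower-order errors into the dominant $\tau$ and $\tau^3$ terms for large $\tau$, and verify that the positivity is uniform on $W'$.

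With the Carleman estimate in hand, the unique continuation follows by a standard cutoff argument. Let $\chi$ be a smooth cutoff that equals $1$ on a smaller neighborhood $W''$ of $p$ and is supported in $W'$, and apply the estimate to $u = \chi \phi$. Using \eqref{welinear}, we have $\Box_g u = \chi ( a^\alpha \nabla_\alpha \phi + V \phi ) + [\Box_g, \chi] \phi$; the first group of terms is absorbed into the left-hand side by taking $\tau$ large enough to beat the $L^\infty$ bounds on $a^\alpha$ and $V$, while the commutator $[\Box_g, \chi] \phi$ is supported in the region $\{\chi \neq 1\}$, which we arrange to lie in $\{\psi \leq -\delta\}$ for some $\delta > 0$. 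The boundary contributions on $S$ vanish because $\phi$ has vanishing Cauchy data on $V$. One then obtains an inequality of the form
\begin{equation*}
\tau^3 \int_{W''} e^{2\tau \psi} |\phi|^2 \leq C e^{-2\tau \delta} \int_{W'} e^{2\tau(\psi + \delta)} \big( |\phi|^2 + |\nabla \phi|^2 \big) \text{,}
\end{equation*}
and restricting the left-hand integral to $\{\psi \geq -\delta/2\}$ and letting $\tau \to \infty$ forces $\phi \equiv 0$ on $\{\psi \geq -\delta/2\} \cap W''$, which is an $\mc{M}$-neighborhood of $p$ on the positive side of $S$. Covering $V$ by such neighborhoods yields the conclusion.
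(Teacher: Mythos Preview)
The paper does not give its own proof of this proposition: it is stated as a classical background result, and the reader is referred to H\"ormander's book \cite{hor:lpdo4} for details. Your outline is essentially the standard Carleman-estimate argument one finds there, so in spirit you are reproducing exactly the proof the paper has in mind.

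One point worth tightening: in the cutoff step you say ``the boundary contributions on $S$ vanish because $\phi$ has vanishing Cauchy data on $V$,'' but for $u = \chi\phi \in C_c^\infty(W')$ there are no boundary terms at all---the issue is rather how to arrange that the support of $[\Box_g,\chi]\phi$ lies entirely in $\{\psi \leq -\delta\}$. The usual device is to first extend $\phi$ by zero across $S$ (which is legitimate precisely because of the vanishing Cauchy data on $V$), and to bend $\psi$ so that $\{\psi = 0\}$ touches $S$ only at $p$ and $\{\psi > 0\}$ is a small lens on the positive side; then the transition region of $\chi$ on the negative side contributes nothing because $\phi \equiv 0$ there, and on the positive side it sits in $\{\psi \leq -\delta\}$ as you want. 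Without this extension-by-zero step the geometry of the cutoff does not quite close.
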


In other words, solutions of (\ref{welinear}) can be locally uniquely continued from the ``negative" side of pseudoconvex $S$ to its ``positive" side.

The following two examples are instructive in this context.
The timelike cylinder
\[ C_R = \{ (t, x, y, z ) \in \mathbb{R}^{1+3} \mid x^2 + y^2 + z^2 = R \} \text{,} \qquad R > 0 \]
in Minkowski spacetime is pseudoconvex in the inward direction, as null geodesics tangent to $C_R$ at a point remain outside $C_R$ elsewhere.
As a result, solutions can indeed be (locally) uniquely continued from any portion of $C_R$ into its interior.
On the other hand, a timelike hyperplane $H$ in Minkowski spacetime is not pseudoconvex, as there exist null geodesics remaining entirely in $H$ for all times.
Results of Alinhac and Baouendi, \cite{alin_baou:non_unique}, imply that one does not have such a local unique continuation result for wave equations with smooth coefficients.\footnote{On the other hand, Holmgren's theorem still guarantees unique continuation from $H$ for a wave equation with analytic coefficients.}

\begin{remark}
There also exist uniqueness theorems which combine elements of both the Holmgren and H\"ormander theories to obtain improved results for equations that are analytic in only some of the variables; see \cite{hor:uc_interp, robb_zuil:uc_interp, tata:uc_interp}.
\end{remark}

\subsubsection{Degenerate Pseudoconvexity}

The situation becomes less clear when pseudoconvexity degenerates.
To be more precise, the preceding description of pseudoconvexity in terms of null geodesics suggests the following:

\begin{definition}
A hypersurface $S$ in a spacetime $( \mc{M}, g )$ is \emph{zero}, or \emph{degenerate}, \emph{pseudoconvex} (with respect to $\Box_g$) iff it is ruled by null geodesics.
\end{definition}

In the zero pseudoconvex setting, various scenarios can occur.
A more careful analysis of the geometry \emph{near} the hypersurface $S$ is required in order to determine whether a (degenerate) Carleman estimate, and hence a unique continuation result, still holds for $S$ and, if so, what the nature of the result is.

For example, a timelike hyperplane $H$ in Minkowski spacetime is zero pseudoconvex, and the preceding \emph{local} unique continuation result fails for $H$.
On the other hand, there do exist \emph{global} unique continuation results for $H$. 
For instance, it was shown in \cite{ken_ruiz_sog:sobolev_unique} that if a solution $\phi$ of a wave equation on $\R^{n+1}$ vanishes on one side of $H$ and is globally regular, then it must in fact vanish everywhere.

Another natural appearance of zero pseudoconvex hypersurfaces is in unique continuation problems ``from infinity".
In particular, both strong and zero pseudoconvexity are conformally invariant notions, thus one can define pseudoconvexity of a hypersurface at infinity in terms of its properties in a conformally compactified picture.
In this sense, we can think of (future and past) null infinity $\mc{I}^\pm$ in Minkowski spacetime as yet another example of zero pseudoconvexity.\footnote{The same is also true for null infinity in other asymptotically flat spacetimes.}

In \cite{alex_schl_shao:uc_inf}, it was shown that if a wave vanishes to infinite order \emph{on more than half of $\mc{I}^\pm$} (with these portions being connected to spacelike infinity), then it must also vanish in the interior near this portion of $\mc{I}^\pm$.
While this is not a fully global result like the preceding example, there is still a non-local aspect: one needs vanishing on a sufficiently large portion of infinity for unique continuation to hold.
In addition, in some cases, the infinite-order vanishing assumption can be further relaxed if one makes additional global regularity assumptions; see \cite{alex_shao:uc_global}.

Finally, \cite{alex_schl_shao:uc_inf} also showed that for Schwarzschild, Kerr, and other positive mass spacetimes, one in fact has a purely local result: one can uniquely continue from an arbitrarily small portion of $\mc{I}^\pm$ near spacelike infinity.

The preceding examples show that there are multiple possibilities in zero pseudoconvex settings.
Whether one can prove global, semi-global, or local results depends crucially on the geometry near the hypersurface under consideration.

The proofs of uniqueness results in zero pseudoconvex settings also become more difficult.
For instance, while the general strategy still revolves around Carleman-type estimates, the degenerating pseudoconvexity often produces additional weights that must be matched or balanced with other terms.
This produces additional complications that one does not see in the classical finite problems assuming strong pseudoconvexity; see \cite{alex_schl_shao:uc_inf} for examples of these issues.

\subsection{The Asymptotic Geometry of AdS} \label{sec:geoads}

Turning to the AdS spacetime \eqref{ads}, one can see in the conformal picture on $(\mathbb{R} \times \Sph^3_+, g_E)$ that there are null geodesics everywhere tangent to conformal infinity, i.e., along the equator $\chi = \pi/2$.
Thus, we can think of AdS infinity as being zero pseudoconvex.
Can one nevertheless expect unique continuation?
If so, what kind of uniqueness result can we derive?

The first observation one can make is that the level sets of $r$ are strongly pseudoconvex near infinity, with this pseudoconvexity degenerating as one reaches infinity.
From this observation, one can derive (degenerate) Carleman estimates on the shaded region in Figure \ref{fig.carleman_r}, which implies a unique continuation result from all of AdS infinity, \emph{provided one assumes in addition that the wave must decay sufficiently quickly as $|t| \nearrow \infty$}.\footnote{This extra decay condition arises from the fact that the level sets of $r$ and AdS infinity do not form a bounded region, and one must hence treat boundary terms tending toward $|t| \nearrow \infty$.}
However, one can ask whether this can be improved:
\begin{itemize}
\item Can one do away with the additional decay assumption at $|t| \nearrow \infty$.

\item Can one also prove local unique continuation from arbitrarily small neighborhoods of a single point in AdS infinity, or perhaps a weaker result from some sufficiently large open subset of infinity?
\end{itemize}

A crucial step in answering these questions affirmatively would be to localize our arguments.
This can be done by finding foliations of (strongly) pseudoconvex hypersurfaces near infinity which also terminate at infinity at finite times.
In particular, we ask whether these level sets of $r$ remain pseudoconvex if one were to ``bend them back toward infinity" to some degree.

\begin{figure}
\centering
\begin{tikzpicture}[scale=1.75]
\draw[shade, white] (0, 1) -- (1, 1) -- (1, -1) -- (0, -1) -- cycle;
\draw[style=dashed, <->] (1, 1.2) -- (1, -1.2);
\draw[<->] (0, 1.2) -- (0, -1.2);
\draw[style=dashed, color=red] (0, 1) -- (1, 1);
\draw[style=dashed, color=red] (0, -1) -- (1, -1);
\node[right] at (1, 0) {$\scriptscriptstyle r = \infty$};
\node[left] at (0, 0) {$\scriptscriptstyle r = c$};
\node[above, color=red] at (0.5, 1) {$\scriptscriptstyle t \nearrow +\infty$};
\node[below, color=red] at (0.5, -1) {$\scriptscriptstyle t \searrow -\infty$};
\end{tikzpicture}
\caption{Level sets of $r$ (solid lines) near conformal infinity (dashed line). Red lines denote open ends of this foliation going to $|t| \nearrow \infty$ in the limit.}
\label{fig.carleman_r}
\end{figure}
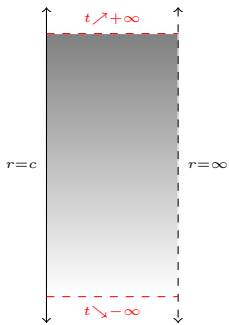

\subsubsection{Pseudoconvexity Near Infinity}

To find such hypersurfaces, we first recall that in AdS spacetime, there exists a family of null geodesics emanating from the conformal boundary at $t = 0$ that return to the boundary at a later finite time.
In addition, all these geodesics actually refocus at infinity at the \emph{same} characteristic AdS time $T_{AdS} = \pi$, as indicated in the left illustration on Figure \ref{fig.adscarly}.
Moreover, these geodesics generate a foliation of zero pseudoconvex (timelike) hypersurfaces near the portion $0 < t < \pi$ of AdS infinity.

This suggests that one could perhaps construct a pseudoconvex foliation if one were to ``elongate" the above hypersurfaces, i.e., initiating from $t = - \varepsilon$ and terminating at $t = \pi + \varepsilon$; see the right illustration in Figure \ref{fig.adscarly}.
In fact, we will show in Section \ref{sec.aads_defn} that such a foliation can be explicitly constructed.
This then further suggests that one could expect a unique continuation result if one assumes vanishing at infinity on a sufficiently long (but finite) time interval, such as $- \varepsilon < t < \pi + \varepsilon$.

\begin{figure}
\input{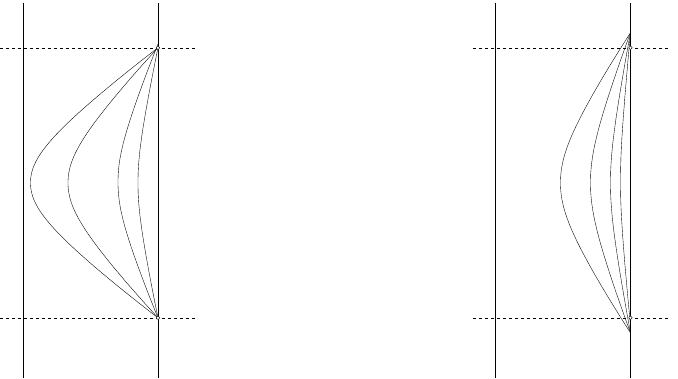_t}
\caption{Left: Null geodesics from AdS infinity which refocus back at infinity at a later time.
Right: A family of strongly pseudoconvex hypersurfaces near infinity.}
\label{fig.adscarly}
\end{figure}

The above pictures also provide some heuristic justification to the conjecture that this ``sufficiently long time interval" assumption is in general necessary.
Suppose we assume that a wave vanishes at infinity (in whichever well-defined sense) on a ``short" time interval $\varepsilon < t < \pi - \varepsilon$.
Then, using the method of Gaussian beams \cite{Ralston, Sbierski:2013mva}, one could, roughly, construct waves which are ``locally concentrated" near one of the null geodesics in the left illustration in Figure \ref{fig.adscarly}.
However, these geodesics can be chosen to lie arbitrarily close to the boundary.

\subsubsection{Vanishing Assumptions} \label{sec:vanishing}

The above geometric insights suggest that it may be possible to prove (degenerate) Carleman estimates near AdS infinity, which would then lead to corresponding unique continuation results from infinity.
However, the observations so far do not yet suggest what order of vanishing imposed on $\phi$ at the conformal boundary will lead to these Carleman estimates and hence uniqueness. 

To gain some intuition as to what are reasonable vanishing assumptions to guarantee uniqueness, we consider solutions of the wave equation \eqref{weintro} (with $\Lambda = -3$ and $n = 3$) on AdS spacetime that are purely radial.
Defining the inverse radius $\rho := r^{-1}$, \eqref{weintro} reduces to a linear second-order ODE in $\rho$, which, via the Frobenius method, yields general solutions of the form\footnote{When $\beta_+ - \beta_-$ differs by an integer, $\phi_-$ may have an extra term of the form $c \cdot \phi_+ \ln \rho$.}
\[ \phi_\pm = \rho^{ \beta_\pm } \sum_{ k = 0 }^\infty a^\pm_k \rho^k \text{,} \qquad \beta_\pm := \frac{3}{2} \pm \sqrt{ \frac{9}{4} - \sigma } \text{} \]
near $\rho=0$.
Clearly, any unique continuation result from infinity must eliminate both branches $\phi_\pm$ of such solutions, hence we must at least assume
\begin{equation} \label{eq.vanishing_initial} 
r^{ \beta_+ } \phi = \rho^{ - \beta_+ } \phi \rightarrow 0 \text{,} \qquad r \nearrow \infty \text{.} 
\end{equation}

It is also easy to see that a more general separation ansatz of the form 
\[ \phi (t, r, \omega) := e^{i \lambda t} \alpha (r) Y_{l, m} (\omega) \text{,} \]
where $\lambda \in \R$, where $\omega \in \Sph^2$, and where $Y_{l, m}$ denote the standard spherical harmonics on $\Sph^2$ yields the same boundary asymptotics in powers of $r$ as before \cite{BF}.
The natural question is whether \eqref{eq.vanishing_initial} is also sufficient to imply vanishing.

\begin{remark}
Note the analogue of \eqref{eq.vanishing_initial} in Minkowski spacetime fails to imply vanishing.
Indeed, both $1$ and $r^{-1}$ solves the free wave equation on $\R^{3+1}$ away from $r = 0$.
However, by taking spatial Cartesian derivatives of $r^{-1}$, we can generate functions which vanish to any finite order at infinity.
\end{remark}

The exponents $\beta_\pm$ also play a crucial role in the rigorous forward well-posedness theory of (\ref{weintro}).
Without going into details (the theory will be briefly reviewed in Section \ref{sec.wp}), we recall that for $\frac{5}{4} < \sigma < \frac{9}{4}$, a unique solution of (\ref{weintro}) arising from data at $t=0$ can be constructed in a suitable energy space if \emph{either} Dirichlet \emph{or} Neumann boundary conditions are imposed\footnote{Inhomogeneous Dirichlet, inhomogeneous Neumann and Robin conditions are also possible but omitted from the present discussion for simplicity} at the conformal boundary:
\begin{equation} \label{eq.dir_neu} \rho^{-\beta_-} \phi \rightarrow 0 \quad \text{(Dirichlet),} \qquad \rho^{-2 + 2\beta_-} \partial_\rho ( \rho^{- \beta_-} \phi ) \rightarrow 0 \quad \text{(Neumann).} \end{equation}
It is not difficult to see that the two conditions in \eqref{eq.dir_neu} imply (\ref{eq.vanishing_initial}); see Section \ref{sec.wp}.

Thus, for $\frac{5}{4} < \sigma < \frac{9}{4}$, the conditions \eqref{eq.dir_neu} serve as natural assumptions on the boundary for unique continuation.
On the other hand, for $\sigma \leq \frac{5}{4}$, a solution arising from the forward well-posedness theory is already unique in the energy space---more specifically, having finite energy necessarily eliminates the $\rho^{\beta_-}$-branch.
Thus, in this case, assuming (\ref{eq.vanishing_initial}) and membership in the energy space would be natural.

\subsection{The main results}

We can now informally state our main results in the simplest case of the metric being (\ref{ads}).
As discussed below in Section \ref{sec.intro_gen}, these theorems in fact hold for a large class of aAdS spacetimes and any spacetime dimension.
The first theorem expresses the fact that for the mass range $\sigma \leq 2$, the condition (\ref{eq.vanishing_initial}) suffices to prove unique continuation:

\begin{theorem} \label{thm.uc_rough_optimal}
Suppose $\phi$ solves
\begin{equation} \label{eq.we_rough} \Box_{g_{AdS}} \phi + \sigma \phi = a^\alpha \nabla_\alpha \phi + V \phi \text{,} \end{equation}
where $a^\alpha$ and $V$ are smooth and decay sufficiently at infinity.
Suppose $\sigma \leq 2$, and suppose $\phi$ satisfies, \emph{on a segment $I$ of infinity of time length $T > \pi$},
\[ | \rho^{-\beta_+} \phi | + | \nabla_{ t, \rho, \Sph^2 } ( \rho^{-\beta_+ + 1} \phi ) | \rightarrow 0 \text{,} \]
where $\beta_+$ is as in \eqref{eq.vanishing_initial}.\footnote{Here, $\nabla_{t, \rho, \Sph^2}$ refers to derivatives in $t$, $\rho$, and some fixed bounded spherical coordinates. See the statement of Theorem \ref{theo:ads} for the precise vanishing conditions.}
Then, $\phi \equiv 0$ in the AdS interior near $I$.

Furthermore, the above result can be directly generalized to tensorial waves on a large class of aAdS backgrounds in all dimensions.
\end{theorem}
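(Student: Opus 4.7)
The plan is to establish Theorem \ref{thm.uc_rough_optimal} via a Carleman inequality tailored to the degenerate pseudoconvexity at AdS infinity. First, I would renormalize by setting $\psi = \rho^{-\beta_+}\phi$, so that the two hypothesized quantities read simply as $\psi$ and $\nabla(\rho\psi)$, and rewrite \eqref{eq.we_rough} as a wave-type equation for $\psi$ on the conformally compactified half-cylinder. The transformed operator has the schematic form $\rho^2 \Box_{g_{AdS}} \psi$ plus explicit first- and zeroth-order terms whose $\rho$-weights are fixed by $\beta_+$; since $\sigma \leq 2$ corresponds to $\beta_+ \geq 2$, the formally singular potential at $\rho = 0$ carries a favorable sign that will match the Hardy-type boundary contributions generated in the Carleman computation.

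Second, I would construct the Carleman weight. Motivated by Section \ref{sec:geoads} and Figure \ref{fig.adscarly}, I would seek a function $f(t, \rho)$ whose near-boundary level sets give a foliation of strongly pseudoconvex timelike hypersurfaces which close up at infinity at $t = -\varepsilon$ and $t = \pi + \varepsilon$ for some small $\varepsilon > 0$; the existence of such an $f$ is exactly what is promised in Section \ref{sec.aads_defn}. A natural ansatz is $f \sim \rho/\mu(t)$ for a smooth positive $\mu$ on the open interval that vanishes quadratically at the endpoints, so the foliation bends back to infinity precisely at the ends of $I$. The Carleman weight would then be taken of the form $e^{\lambda F(f)}$ for a suitably concave profile $F$, with parameter $\lambda \to \infty$.

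Third, I would derive the actual estimate by conjugating $\rho^2 \Box_{g_{AdS}} + \sigma$ by the weight, pairing against an appropriate first-order multiplier adapted to $\nabla f$, and integrating by parts. Strong pseudoconvexity of the level sets of $f$ away from $\rho=0$ makes the bulk quadratic form in $(\nabla\psi,\psi)$ coercive for $\lambda$ large; the first-order and potential contributions from $a^\alpha \nabla_\alpha\phi + V\phi$ are then absorbed into the positive bulk, provided $a^\alpha$ and $V$ decay at the rates promised in the hypothesis. The boundary terms split into two pieces: a contribution at $\rho=0$, which is killed directly by the assumed decay of $\psi$ and $\nabla(\rho\psi)$ on $I$, and a contribution on an interior cutoff $\{f = c_0\}$. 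Sending $\lambda \to \infty$, the weight blows up outside $\{f < c_0\}$ while remaining bounded inside, forcing $\psi$, and hence $\phi$, to vanish in a one-sided neighborhood of $I$. The tensorial generalization follows by running the same estimate component-wise in a parallel frame, with the extra curvature terms treated as lower-order.

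The main obstacle, as in the work of Alexakis--Schlue--Shao on null infinity \cite{alex_schl_shao:uc_inf}, is that the pseudoconvexity degenerates exactly at $\rho = 0$, so the weight must simultaneously be singular enough at interior level sets of $f$ to exploit the pseudoconvexity there, and mild enough at infinity that the minimal vanishing assumption $\rho^{-\beta_+}\phi \to 0$ (with only one derivative, rather than infinite order) suffices to kill all boundary contributions. Concretely, one cannot afford a weight that is exponential in $\rho^{-1}$; one must use a weight whose $\rho$-profile is essentially polynomial, which then forces a delicate balance between the positive bulk at each order of $\rho$, the mass-induced Hardy terms coming from $\sigma$, and the $\rho=0$ boundary contributions. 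The condition $T > \pi$ enters this balance through the construction of $f$: such a foliation can only be arranged when the time interval strictly exceeds the AdS refocusing time, and a Gaussian beam along a refocusing null geodesic of the type depicted on the left of Figure \ref{fig.adscarly} obstructs any Carleman estimate on shorter intervals.
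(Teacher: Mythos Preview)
Your proposal follows the same overall strategy as the paper---a Carleman estimate built on a pseudoconvex foliation $\{f=c\}$ near infinity, with the weight tuned so that only order-$\beta_+$ vanishing is required---but the execution differs in a few places worth flagging. First, the paper does \emph{not} conformally compactify or pre-renormalize by $\rho^{-\beta_+}$; it works directly with $\Box_g + \sigma$ on the physical AdS metric. Second, the foliating function is $f=\rho/\sin(\tfr t)$ with $\tfr<1$ (so $\mu$ vanishes linearly, not quadratically, at the endpoints; the condition $\tfr<1$ is exactly $T>\pi$). Third, and most substantively, the paper merges the renormalization and the Carleman weight into a single conjugation by $e^{-F}$ with $F=\kappa\log f+\lambda p^{-1}f^p$, so the conjugated field is $\psi=f^{-\kappa}e^{-\lambda f^p/p}\phi$; the polynomial part is in $f$, not $\rho$, and the elementary bound $\rho\leq f$ is what converts the $f$-weighted bulk positivity into the $\rho$-weighted boundary hypothesis (with $\kappa=\beta_+-1$). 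Finally, the Hardy-type step (Lemma \ref{thm.hardy_ptwise}) is a \emph{bulk} inequality transferring surplus $|\nabla_N\psi|^2$ positivity into $|\psi|^2$ positivity, which is precisely how the sharp threshold $\kappa^2-(n-2)\kappa+\sigma-(n-1)\geq 0$ emerges; your description of it as a boundary mechanism is slightly off. Your separated scheme (renormalize by $\rho^{-\beta_+}$, then apply $e^{\lambda F(f)}$) could plausibly be made to work, but the paper's combined $f$-weight is what makes the matching of degenerating weights clean.
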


For $\sigma > 2$, we require stronger conditions at infinity than \eqref{eq.vanishing_initial}:

\begin{theorem} \label{thm.uc_rough_nonoptimal}
Suppose $\phi$ solves \eqref{eq.we_rough}, but with $\sigma > 2$.
Suppose $\phi$ satisfies, \emph{on a segment $I$ of infinity of time length $T > \pi$}, the vanishing condition
\[ | \rho^{-2} \phi | + | \nabla_{ t, \rho, \Sph^2 } ( \rho^{-1} \phi ) | \rightarrow 0 \text{.} \]
Then, $\phi \equiv 0$ in the AdS interior near $I$.

Furthermore, the above result can be directly generalized to tensorial waves on a large class of aAdS backgrounds in all dimensions.
\end{theorem}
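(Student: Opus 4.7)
The plan is to reduce Theorem \ref{thm.uc_rough_nonoptimal} to a weighted $L^2$ (Carleman) estimate on a neighborhood of conformal infinity, following the strategy sketched in Section \ref{sec:geoads}.

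\textbf{Step 1 (rescaling).} Introduce a boundary defining function $\rho := r^{-1}$ and conjugate the unknown by a fixed power of $\rho$, setting $\psi := \rho^{-1}\phi$. Equation \eqref{eq.we_rough} becomes a wave equation for $\psi$ of the schematic form
\begin{equation*}
\Box_{\tilde g}\psi + \tilde\sigma\,\rho^{-2}\psi = \tilde a^\alpha \nabla_\alpha \psi + \tilde V \psi,
\end{equation*}
where $\tilde g$ is the conformally rescaled metric and $\tilde\sigma$ is a constant determined by $\sigma$ and the conjugation. The genuinely optimal conjugation exponent is $\rho^{-\beta_+}$, which removes the divergent potential outright---this yields Theorem \ref{thm.uc_rough_optimal}. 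The choice $\rho^{-1}$ is suboptimal but works uniformly for all $\sigma$ (in particular for $\sigma \geq 9/4$, where $\beta_+$ becomes complex), at the cost of leaving the residual potential $\tilde\sigma\rho^{-2}\psi$ on the left-hand side, whose control forces the stronger vanishing hypothesis $|\rho^{-2}\phi| + |\nabla_{t,\rho,\mathbb{S}^2}(\rho^{-1}\phi)| \to 0$.

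\textbf{Step 2 (Carleman estimate).} Following Section \ref{sec:geoads}, I would construct a weight $f(t,\rho)$ whose level sets approximate the strongly pseudoconvex family depicted on the right of Figure \ref{fig.adscarly}, arranged so that $f \nearrow +\infty$ along $\{\rho = 0\}\cap I$ while remaining bounded on the remaining boundary of the target region. The hypothesis $T > \pi$ enters precisely here: it guarantees that the pseudoconvex hypersurfaces can be elongated beyond the geodesic refocusing time $T_{\rm AdS} = \pi$, producing a foliation that covers a full AdS-neighborhood of $I$. The core technical estimate would then take the schematic form
\begin{equation*}
\lambda \int e^{-2\lambda f}\bigl(|\partial\psi|^2 + \lambda^2|\psi|^2\bigr)\,dV \;\leq\; C \int e^{-2\lambda f}\bigl|\Box_{\tilde g}\psi + \tilde\sigma\rho^{-2}\psi\bigr|^2 dV + (\text{boundary terms}),
\end{equation*}
valid for all sufficiently large $\lambda$, and would be established via the standard multiplier method with multiplier of the form $e^{-2\lambda f}\nabla f$. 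The vanishing assumptions on $\phi$ are used exactly to kill the boundary contributions at $\rho = 0$.

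\textbf{Step 3 (closing the argument).} Inserting the equation for $\psi$ into the right-hand side of the Carleman estimate, the genuinely lower-order terms $\tilde a^\alpha \nabla_\alpha\psi + \tilde V\psi$ are absorbed on the left by taking $\lambda$ sufficiently large. The inner boundary contributions are of order $e^{-2\lambda f_0}$, where $f_0$ is strictly less than $f$ inside the region of interest, so they vanish in the limit $\lambda \to \infty$. This forces $\psi \equiv 0$ near $I$, whence $\phi \equiv 0$ in a neighborhood of $I$. The extension to general aAdS backgrounds and tensorial equations proceeds by establishing the analogous Carleman estimate with respect to a general boundary defining function and conformally rescaled metric, using only the near-boundary asymptotic structure.

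The main obstacle is Step 2: establishing the Carleman estimate in the degenerate-pseudoconvex regime. The weight $f$ must blow up at $\rho = 0$, yet the pseudoconvexity of its level sets degenerates there simultaneously, producing competing $\rho$-weights that must be carefully balanced against the divergent potential $\tilde\sigma \rho^{-2}\psi$. For $\sigma > 2$ this potential does not come with a sign that is automatically compatible with the estimate, and subordinating it to the positive terms on the left is precisely what forces the non-optimal vanishing rate $\rho^{-2}\phi \to 0$ rather than $\rho^{-\beta_+}\phi \to 0$. Matching the precise powers of $\rho$ among the weight, the multiplier, and the residual potential---while simultaneously respecting the refocusing geometry that sets the threshold $T = \pi$---is where essentially all the technical work is concentrated.
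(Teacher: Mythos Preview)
Your high-level strategy---Carleman estimate on a pseudoconvex foliation terminating at infinity, boundary terms killed by the vanishing hypothesis, absorption of lower-order terms for large $\lambda$---is exactly the paper's. But two points deserve correction or comparison.

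First, your weight convention in Step~2 is inverted. In the paper the foliating function is $f = \rho/\sin(yt)$, which tends to \emph{zero} at the conformal boundary, and the Carleman weight $f^{\,n-2-2\kappa}e^{-2\lambda f^p/p}$ \emph{blows up} there (since $n-2-2\kappa\le 0$). The inner cutoff sits at $f=f_0>0$, where the weight is smallest; after dividing through, the cutoff-error contribution on $\{f_0/2<f<f_0\}$ is dominated by the bulk on $\{f<f_0/2\}$ as $\lambda\to\infty$. With your stated convention ($f\nearrow+\infty$ at $\rho=0$, weight $e^{-2\lambda f}$, and ``$f_0$ strictly less than $f$ inside''), the inequality in Step~3 runs the wrong way.

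Second, the paper does \emph{not} conformally rescale the metric or introduce a divergent potential. It works directly with $\Box_g+\sigma$ on the physical metric, and the conjugation is by $e^{-F}$ with $F=\kappa\log f+\lambda p^{-1}f^p$---a function of $f$, not of $\rho$ alone. The $t$-dependence of $f$ is what localizes the estimate to the finite time segment; a weight in $\rho$ only would give merely the global-in-$t$ result of Figure~\ref{fig.carleman_r}. The role you assign to controlling the singular potential $\tilde\sigma\rho^{-2}\psi$ is instead played by the zeroth-order bulk coefficient $\kappa^2-(n-2)\kappa+\sigma-(n-1)$ in the Carleman estimate: choosing $\kappa=(n-1)/2$ (your $\rho^{-1}$ when $n=3$) makes this equal to $\sigma-(n^2-1)/4$, which is nonnegative precisely when $\sigma>(n^2-1)/4$ (i.e.\ $\sigma>2$ for $n=3$). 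So your intuition about the threshold and the vanishing exponent is correct, but the mechanism is algebraic in the unrescaled estimate rather than via a singular potential in a conformal picture.
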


For the precise and most general statements, see Theorem \ref{theo:ads}.
We also mention that for general merely bounded potentials $V$, one requires infinite-order vanishing of $\phi$ to prove a similar uniqueness result; see Theorem \ref{theo:ads5}.

As hoped, in the mass range where both Dirichlet and Neumann conditions are allowed by the forward well-posedness theory, $\frac{5}{4} < \sigma < \frac{9}{4}$, we can show unique continuation if both Dirichlet and Neumann conditions are imposed and the solution lives in the natural energy space of the forward well-posedness theory:

\begin{theorem} \label{thm.uc_rough_optimal2}
Assume $\phi$ solves (\ref{eq.we_rough}), with $\frac{5}{4} < \sigma < \frac{9}{4}$. If $\phi$ has bounded renormalized $H^2$-energy and satisfies, \emph{on a segment $I$ of infinity of time length $T > \pi$}, both Dirichlet and Neumann conditions (\ref{eq.dir_neu}), then $\phi \equiv 0$ in the interior near $I$.

Furthermore, the result can be directly generalized to tensorial waves on a large class of aAdS backgrounds in all dimensions.
\end{theorem}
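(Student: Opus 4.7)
The plan is to deduce Theorem~\ref{thm.uc_rough_optimal2} from Theorems~\ref{thm.uc_rough_optimal} and~\ref{thm.uc_rough_nonoptimal}, by showing that in the mass range $\tfrac{5}{4} < \sigma < \tfrac{9}{4}$, simultaneously imposing Dirichlet and Neumann data on $I \subset \mc{I}$---together with boundedness of the renormalized $H^2$-energy---forces $\phi$ to vanish on $I$ to strictly higher order than either $\rho^{\beta_+}$ (if $\sigma \leq 2$) or $\rho^{2}$ (if $2 < \sigma < \tfrac{9}{4}$). Since $T > \pi$ by assumption, the two earlier theorems then yield $\phi \equiv 0$ in an interior neighborhood of $I$.

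The first step is to establish, for any solution of \eqref{eq.we_rough} at this level of regularity, a partial Frobenius expansion near $\rho = 0$ of the form
\[ \phi (t, \rho, \omega) = a(t, \omega) \, \rho^{\beta_-} + b(t, \omega) \, \rho^{\beta_+} + R(t, \rho, \omega) \text{,} \]
where the remainder $R$ satisfies $\rho^{-\beta_+} R \to 0$ as $\rho \searrow 0$ (with a possible $\log \rho$ correction at the single borderline value $\sigma = 2$, where $\beta_+ - \beta_- = 1$). The renormalized $H^2$-energy bound is precisely the regularity class in which such an expansion can be carried out: it controls $\phi$, $\partial_t \phi$, and the angular derivatives in norms compatible with the indicial analysis of the radial ODE obtained by freezing coefficients at $\mc{I}$, and in particular permits one to identify $a$ and $b$ as well-defined traces on $I$.

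Given the expansion, the Dirichlet condition $\rho^{-\beta_-} \phi \to 0$ forces $a \equiv 0$ on $I$, reducing to $\phi = b \, \rho^{\beta_+} + R$. Substituting into the Neumann condition $\rho^{-2 + 2\beta_-} \partial_\rho ( \rho^{-\beta_-} \phi ) \to 0$ and using $\beta_+ + \beta_- = 3$ yields $(\beta_+ - \beta_-) \, b \equiv 0$ on $I$; since $\sigma < \tfrac{9}{4}$ gives $\beta_+ \neq \beta_-$, we conclude $b \equiv 0$ as well. Hence on $I$ we have $\phi = R$ with the stronger decay $\rho^{-\beta_+} \phi \to 0$. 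Commuting \eqref{eq.we_rough} with $\partial_t$ and with the spherical derivative vector fields---which preserve the indicial structure at $\mc{I}$ because the leading-order operator is $t$-independent and rotationally invariant---then transfers this improved vanishing to the combinations $\nabla_{t, \rho, \Sph^2} ( \rho^{-\beta_+ + 1} \phi )$ (respectively $\nabla_{t, \rho, \Sph^2} ( \rho^{-1} \phi )$ for $\sigma > 2$) required as hypotheses in the earlier theorems.

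The main obstacle is producing the partial expansion rigorously at the level of regularity provided by the renormalized $H^2$-energy rather than of classical smoothness, and in particular disposing of the logarithmic correction at the borderline mass $\sigma = 2$. Away from this value, the expansion follows by standard Frobenius/ODE arguments applied to separated or Fourier-decomposed modes, with $a^\alpha \nabla_\alpha \phi$ and $V \phi$ absorbed as subleading perturbations. At $\sigma = 2$ one verifies that, inserting the expansion $\phi \sim b \, \rho^{2} + c \, \rho^{2} \log \rho$ into the Neumann condition, the logarithmic branch $c$ is eliminated before the polynomial branch $b$, so both must vanish. The extension to tensorial waves on general aAdS backgrounds then uses the geometric forward well-posedness framework of Section~\ref{sec.wp} together with the aAdS versions of Theorems~\ref{thm.uc_rough_optimal} and~\ref{thm.uc_rough_nonoptimal}.
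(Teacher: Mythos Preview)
Your strategy is sound in principle but differs substantially from the paper's route, and the part you flag as ``the main obstacle'' is precisely what the paper sidesteps.

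The paper does \emph{not} establish a Frobenius expansion. Instead (see Theorem~\ref{prop:deduce} in Section~\ref{sec.wp}), it works directly with the quantities
\[
X = \rho^{-\beta_-}\phi \text{,} \qquad Y = \rho^{2\beta_- - n + 1}\partial_\rho(\rho^{-\beta_-}\phi) \text{,}
\]
and uses only the fundamental theorem of calculus together with Cauchy--Schwarz. The Neumann condition gives $Y \to 0$ at $\rho = 0$; integrating $\partial_\rho(|Y|^2)$ from $0$ to $\rho_1$ and bounding $\partial_\rho Y$ by the twisted $H^2$-energy $E^{(2)}_{tw}$ yields a quantitative rate $\int_{\{\rho=\rho_1\}}|Y|^2 \lesssim \rho_1^{2\beta_- - n + 2}$. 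Feeding this back into the same argument for $X$ (using the Dirichlet condition $X \to 0$) gives $\int_{\{\rho=\rho_1\}}|X|^2 \lesssim \rho_1^{-2\beta_- + n + 2}$, and similarly for $\partial_t X$. These three bounds are exactly condition~(ii) of the general Theorem~\ref{theo:ads}, which the paper then invokes directly---not the rough introductory Theorems~\ref{thm.uc_rough_optimal} and~\ref{thm.uc_rough_nonoptimal}.

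What this buys: the paper's argument is a few lines of elementary calculus and never needs to isolate the coefficients $a$, $b$ or control a remainder term, so the issues you raise---rigorous Frobenius expansion at finite regularity, the logarithmic branch at $\sigma = 2$, commuting with $\partial_t$ and angular fields---simply do not arise. Your approach is more conceptual and explains transparently \emph{why} both boundary conditions together should force the stronger decay, but carrying it out rigorously at the $H^2$ level is genuinely more work than the paper's route; the step you leave as a sketch is essentially the entire content of the deduction.
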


We can also ask whether it is possible to globalize our local unique continuation results in particular settings.
While we will study global applications to black hole spacetimes, the study of gravitational perturbations, and the putative AdS-CFT correspondence in our companion paper \cite{ASfollowup}, we give a positive answer to the above question in the pure AdS case, whose proof we sketch in Section \ref{sec:gloads}.

\begin{corollary} \label{thm.global}
Let $\psi$ be a solution of (\ref{weintro}), with $\frac{5}{4} < \sigma < \frac{9}{4}$, on AdS satisfying the Dirichlet condition.
Then, if $\psi$ also satisfies the Neumann condition for a time length strictly larger than $\pi$, the solution is zero globally.
Moreover, the result generalizes to tensorial waves in all dimensions.
\end{corollary}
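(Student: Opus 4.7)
The plan is to bootstrap the local unique continuation result from Theorem \ref{thm.uc_rough_optimal2} to a global statement by combining it with a classical interior unique continuation argument that exploits time-translation invariance of the AdS background.

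First, I would apply Theorem \ref{thm.uc_rough_optimal2} directly to $\psi$ on the given segment $I \subset \mathcal{I}$ of length $T > \pi$. The Dirichlet condition holds on all of $\mathcal{I}$ (hence on $I$) by assumption, the Neumann condition on $I$ is in the hypothesis, and for $\tfrac{5}{4} < \sigma < \tfrac{9}{4}$ the standard Dirichlet forward well-posedness theory places $\psi$ in the renormalized $H^2$-energy space required by the theorem. The conclusion is that $\psi$ vanishes on an open AdS-neighborhood $U$ of $I$ in the interior.

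Second, I would propagate this vanishing from $U$ into the full interior. Fix a Cauchy slice $\Sigma_{t_0}$ with $t_0$ in the interior of $I$; then $\psi$ and $\partial_t \psi$ vanish on the open collar $U \cap \Sigma_{t_0}$ of conformal infinity inside $\Sigma_{t_0}$. Since $g_{\rm AdS}$ and the coefficients of \eqref{weintro} are $t$-translation invariant, hence real-analytic in $t$, the Robbiano--Zuily--Tataru--H\"ormander unique continuation theorem for operators with coefficients analytic in a distinguished variable applies: vanishing propagates across any non-characteristic hypersurface, in particular across the timelike level sets $\{r = r_0\}$ in the bulk. Either by a single application to the connected open set containing $U$, or by iteration across the $r$-foliation, one obtains $\psi \equiv 0$ on an open spacetime neighborhood of $\Sigma_{t_0}$, so that $\psi$ and $\partial_t \psi$ vanish identically on $\Sigma_{t_0}$.

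Finally, the Dirichlet forward problem for \eqref{weintro} in the mass range $\tfrac{5}{4} < \sigma < \tfrac{9}{4}$ is globally well-posed and time-reversible, so vanishing Cauchy data on $\Sigma_{t_0}$ forces $\psi \equiv 0$ globally on AdS. The extension to tensorial waves and higher dimension is identical, with the general local result from Theorem \ref{theo:ads} replacing Theorem \ref{thm.uc_rough_optimal2}. The principal obstacle is the second step: once conformally rescaled, the wave operator has coefficients degenerating at $\mathcal{I}$ like powers of $\rho$, which obstructs a naive application of an analytic-in-$t$ bulk theorem near the boundary. The cleanest workaround is to apply the bulk theorem only on $\{r < R\}$ for an $R$ large enough that $\{r \geq R\} \cap \{t \in I\} \subset U$, so that on the region where the interior theorem is invoked the coefficients are smooth and bounded, and the initial vanishing required by the theorem is supplied by $U$ itself.
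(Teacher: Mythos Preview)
Your three-step skeleton---local unique continuation from infinity, interior propagation, well-posedness---matches the paper's, and steps 1 and 3 coincide. The difference is step 2. The paper does not invoke Robbiano--Zuily--Tataru; instead it exploits the conformal equivalence of a portion of AdS with Minkowski spacetime via the Einstein cylinder. The region $\{f < f_0\}$ where the local result gives vanishing contains two points $P_\pm$ on Minkowski null infinity, and the paper sweeps inward using the explicit Minkowski hyperboloids $\mathfrak{f} = \tfrac{1}{4}(r-t+2\varepsilon)(r+t+2\varepsilon) = c$, all terminating at the \emph{same} $P_\pm$. These are uniformly strongly pseudoconvex on the relevant compact region (cited from \cite{alex_schl_shao:uc_inf}), so only the classical H\"ormander result (Proposition~\ref{thm.uc_classical}) is needed, and the fixed endpoints guarantee that the $t$-extent of vanishing does not shrink on the way to the spatial origin.

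Your route via time-analyticity is legitimate in principle, but the phrase ``by iteration across the $r$-foliation one obtains $\psi \equiv 0$ on a neighborhood of $\Sigma_{t_0}$'' is where the real content lies and is not justified as written. Iterating a \emph{local} unique continuation statement across $\{r = r_0\}$ gives no a priori control on how much the $t$-interval of vanishing shrinks at each step, and you need it to survive all the way to $r = 0$. To close this you would have to invoke the \emph{global} form of Tataru's theorem---vanishing propagates into the characteristic hull of the initial zero set, and the time-length $> \pi$ is precisely what makes that hull reach the center---rather than a local statement plus naive iteration. The obstacle you flag, coefficient degeneration near $\mathcal{I}$, is real but secondary; the substantive gap is the interior sweep.
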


Finally, an application to gravitational perturbations and the Teukolsky equation is given in Corollary \ref{cor:adscft}.
See also Section \ref{sec:tensorwave} below.

\subsection{Discussion of the General Results} \label{sec.intro_gen}

Since the goal is to treat the nonlinear Einstein equations (\ref{ee}), it is important that our results and techniques for the wave equation are sufficiently robust.
We will therefore now discuss our Theorem \ref{theo:ads}, which provides the most general version of the results proven in this paper and includes Theorems \ref{thm.uc_rough_optimal} and \ref{thm.uc_rough_nonoptimal} as special cases.
Emphasis is laid on the features which suggest that treating the full Einstein equations is indeed possible.

\subsubsection{Nonlinear Equations and Systems}

By construction, unique continuation results obtained via Carleman estimates extend to some nonlinear equations.
In particular, Theorem \ref{theo:ads} applies to partial differential \emph{inequalities} of the form (\ref{we}), hence it applies to any \emph{nonlinear} wave equations that satisfies this inequality.

Furthermore, we remark that our results extend immediately to systems of wave equations satisfying analogous bounds.
This is simply a result of summing Carleman estimates that one obtains for each component and then applying again the standard argument detailed in the proof of Theorem \ref{theo:ads}.

\subsubsection{Arbitrary Dimensions}

Since the physics literature often focuses on higher-dimensional gravity, one is interested in uniqueness properties in AdS spacetimes of higher dimensions.
The results in Theorem \ref{theo:ads} (and hence Theorems \ref{thm.uc_rough_optimal} and \ref{thm.uc_rough_nonoptimal}) hold for any spacetime dimension\footnote{The only technical issue is that we must choose our multiplier vector field in our Carleman estimates more carefully; see Definition \ref{def.S}.}, with the only difference being that the relevant constants (e.g., the vanishing rates $\beta_\pm$) change according to the dimension.
See Theorem \ref{theo:ads} for the precise numerology.

\subsubsection{Asymptotically AdS Spacetimes} \label{sec:intro_aads}

Since solving the Einstein equations is tantamount to solving for the geometry of the spacetime itself, it is essential that our results extend to backgrounds that are not pure AdS but decay to AdS toward the conformal boundary.
Conceptually, the main hurdle is showing that the pseudoconvexity properties of the hypersurfaces discussed in Section \ref{sec:geoads} persist.

We show in Section \ref{sec.aads_pseudoconvex} that this pseudoconvexity property does in fact extend to a large class of aAdS spacetimes.
In particular, we must assume:
\begin{itemize}
\item The (conformally rescaled) metric $\mathring{g}$ induced on the boundary is static.\footnote{We will relax this staticity assumption in an upcoming article, \cite{hol_shao:uc_ads_ns}.}

\item The second-order expansion of the asymptotic boundary metric satisfies a certain positivity condition; see Definition \ref{def.pseudoconvex_ass}.
\end{itemize}
The full class of asymptotically AdS spacetimes we treat is defined precisely in Definitions \ref{def.aads} and \ref{def.pseudoconvex_ass}.
In particular, we note that we do not assume a particular topology or metric on the cross-sections of conformal infinity.

We also remark that in the case where the metric is Einstein-vacuum, the preceding positivity condition required for pseudoconvexity can be expressed geometrically: if the boundary metric $\mathring{g}$ is static, and if the metric induced on the cross-sections of infinity perpendicular to the timelike Killing vector field $\partial_t$ has positive curvature, then the positivity property holds.
There exist many nontrivial examples of dynamical aAdS Einstein spacetimes satisfying the criterion \cite{Enciso, Friedrich}.

\subsubsection{Tensorial Wave Equations} \label{sec:tensorwave}

A crucial feature of the Einstein equations (\ref{ee}) is their tensorial (as opposed to scalar) nature.
Our Theorem \ref{theo:ads} applies directly to wave equations satisfied by ``horizontal" tensor fields on aAdS spacetimes which are everywhere tangent to the level sets of $(t, r)$. We make precise sense of such tensor fields and wave equations on these fields in Section \ref{sec.aads_tensor}.

The restriction to such horizontal tensorfields is actually sufficient to study (\ref{ee}), if one adopts---as is commonly done (see, e.g., \cite{ChrKla})---a null or $(2+2)$-decomposition into a system of equations on such ``horizontal" tensor fields, which represent various components of the curvature and connection coefficients.
Thus, our tensorial results are well-adapted to study the Einstein equations in this type of formulation.

In particular, for the \emph{linearized} Einstein equations near AdS (or more generally, Kerr-AdS), it is known that certain components of the spacetime null-curvature components (corresponding precisely to the horizontal tensor fields above) satisfy decoupled wave equations, known as the \emph{Teukolsky equations}.
Theorem \ref{theo:ads} can then be applied directly to these equations, which---when combined with Corollary \ref{thm.global} above---produces a linearized version of a holographic correspondence: \emph{fixing both the (linearized) conformal class and the holographic stress energy tensor of a metric perturbation on the boundary, the perturbation is determined uniquely in the interior}, provided it exists.
See Corollary \ref{cor:adscft} for a precise formulation.
Further applications of this type will be discussed in our companion paper \cite{ASfollowup}.

Uniqueness results for the linearized Einstein equations can also be connected to the conditional rigidity results of \cite{Anderson} for the Einstein-vacuum equations.\footnote{The results in \cite{Anderson} assume that a unique continuation property holds for linearized Einstein-vacuum equations. Corollary \ref{thm.global} provides a unique continuation property for the linearized Bianchi equations (as opposed to a \emph{full} linearization of the vacuum Einstein equations). Hence some further work is required to prove the property assumed in \cite{Anderson}.}
We postpone a detailed analysis of this setting to \cite{ASfollowup}.

\subsection{Overview of the Proof} \label{sec.intro_proof}

We finally give a brief summary of the proof of our general result, Theorem \ref{theo:ads}, noting that Theorem \ref{thm.uc_rough_optimal2} (and its generalizations) can be deduced from Theorem \ref{theo:ads}, as done in Section \ref{sec.wp}.

As mentioned before, the proof of Theorem \ref{theo:ads} is based on a degenerate Carleman estimate, Theorem \ref{thm.carleman}.
In particular, the Carleman estimate controls a weighted $H^1$-norm of the wave $\phi$ on a spacetime region by a weighted $L^2$-norm $( \Box_g + \sigma ) \phi$ on the same region.\footnote{Plus another boundary term which vanishes in the limit as one approaches infinity.}
From this bound, a standard argument (see Section \ref{sec.proof}) yields the desired unique continuation result.

We adopt a geometric viewpoint and prove this estimate using a variant of the vector field method.
The details of this can be found in Section \ref{sec.carleman}.
At a more conceptual level, the process revolves around two main ingredients:

\subsubsection{Finding a Pseudoconvex Foliation}

The first task is to capture the pseudoconvexity described in Section \ref{sec:geoads}.
More specifically, we wish to construct a function $f$ whose level sets form hypersurfaces that both initiate from and terminate at infinity.
Then, by choosing the multiplier vector field in our estimates to be orthogonal to the level sets of $f$, we can obtain the positivity needed to control derivatives of $\phi$ tangential to the level sets of $f$.

Recalling the notation $\rho := r^{-1}$, then a viable candidate for $f$ is
\[ f = \frac{\rho}{ \sin ( y t ) } \text{,} \qquad y > 0 \text{.} \]
In particular, the level sets of $f$ intersect infinity at $t = 0$ and $t = y^{-1} \pi$, that is, these level sets span a time interval of length $y^{-1}  \pi$.
Computations in Section \ref{sec.aads_pseudoconvex} show that, as hoped, these level sets are pseudoconvex whenever $y < 1$.
This pseudoconvexity of course degenerates as one approaches infinity, i.e., as $f \searrow 0$.

\subsubsection{Vanishing Rates}

The other crucial ingredient in the Carleman estimate is to find a suitable weight $F$, constructed from the above $f$ and (at least) a free real parameter.
As is standard for Carleman estimates, the main technical step is to consider the conjugated wave equation applied to $\psi := e^{-F} \phi$.

This weight $e^{-F}$ in particular determines the strength of the vanishing assumption one requires for $\phi$.\footnote{More specifically, $e^{-F}$ shows up in boundary terms that one obtains via integrations by parts. Such boundary terms are required to vanish as the boundary approaches infinity.}
The additional difficulty here is that one wants a vanishing condition as close as possible to the rate indicated by \eqref{eq.vanishing_initial}.
In obtaining this rate in Theorem \ref{thm.uc_rough_optimal} (and its generalizations in Theorem \ref{theo:ads}), one must carefully exploit all the positivity present in the estimate.

As previously mentioned, because of additional decaying weights (both from the pseudoconvexity and from $F$), one must take extra care in our degenerate Carleman estimate to ensure that the weights match, and that various terms can be absorbed as required.
As in \cite{alex_schl_shao:uc_inf}, this causes added technical difficulties throughout.

\subsection{Acknowledgements} 
The authors acknowledge support through a grant of the European Research Council and thank Claude Warnick for valuable comments.

\section{Asymptotically AdS Spacetimes} \label{sec.aads}

In this section, we give a precise description of the class of asymptotically Anti-de Sitter (abbreviated aAdS) spacetimes that we will treat.
We also establish a criterion for these spacetimes, stated in terms of quantities at infinity, which guarantees that a certain class of hypersurfaces terminating at infinity are pseudoconvex.

\subsection{Construction of the Spacetimes} \label{sec.aads_defn}

We first provide the definition of our class of (static) aAdS spacetimes.
Throughout, we fix the spatial dimension $n \in \N$.

\begin{remark}
Our main assumptions will be stated in terms of a certain collection of bounded coordinate systems.
Although these assumptions can be stated more invariantly, we opt instead for formulations that are closer to how they are applied.
\end{remark}

\subsubsection{Bounded Static Infinities}

For our main results, we will restrict our attention to spacetimes whose metric at infinity is static.
As a result, this geometry at infinity will be defined by that of its $t$-level sets, which we define below:

\begin{definition} \label{def.aads_horizontal}
Let $( \mc{S}, \mathring{\gamma} )$ be an $(n - 1)$-dimensional Riemannian manifold.
We assume that $\mc{S}$ can be covered by a collection of coordinate systems $x^1, \dots, x^{n-1}$, such that with respect to each of these coordinate systems, we have the bounds:
\begin{equation} \label{eq.aads_bdd}
| \partial_{ A_1 } \dots \partial_{ A_m } \mathring{\gamma}_{B C} | \lesssim_{n, m} 1 \text{,} \qquad | \mathring{\gamma}^{B C} | \lesssim_n 1 \text{,} \qquad m \geq 0 \text{.}
\end{equation}
We refer to each of these coordinate systems as \emph{bounded}.\footnote{Although each individual estimate in \eqref{eq.aads_bdd} may depend on $n$ and $m$, these constants are, on the other hand, independent of the choice of admissible coordinate systems.}
\end{definition}

\begin{remark}
Note that the coordinate condition in Definition \ref{def.aads_horizontal} holds trivially whenever $\mc{S}$ is compact.
More generally, if $\mc{S}$ satisfies that its curvature is bounded to all orders, then $\mc{S}$ can be covered with a set of bounded normal coordinate systems.
Non-compact examples of this include $\R^{n - 1}$ and $\mathbb{H}^{n - 1}$.
\end{remark}

From $( \mc{S}, \mathring{\gamma} )$, we obtain a static AdS infinity in the expected manner:

\begin{definition} \label{def.aads_infinity}
Let $( \mc{S}, \mathring{\gamma} )$ satisfy the assumptions in Definition \ref{def.aads_horizontal}, and let $\tfr > 0$.
We then refer to the $n$-dimensional Lorentzian manifold
\begin{equation} \label{eq.aads_infinity}
( \mc{I}_\tfr, \mathring{g} ) \text{,} \qquad \mc{I}_\tfr := (0, \tfr^{-1} \pi) \times \mc{S} \text{,} \qquad \mathring{g} := - dt^2 + \mathring{\gamma}
\end{equation}
as a segment of \emph{bounded static AdS infinity}.
Here, we use the symbol $t \in C^\infty ( \mc{I}_\tfr )$ to denote the projection to the first component $(0, \tfr^{-1} \pi)$.

Moreover, given a bounded coordinate system $( x^1, \dots, x^{n-1} )$ on $\mc{S}$, we refer to the collection $( t, x^1, \dots, x^{n-1} )$ as a \emph{bounded coordinate system} on $\mc{I}_\tfr$.
\end{definition}

\begin{definition} \label{def.aads_inf_indices}
We adopt the following coordinate conventions on $\mc{S}$ and $\mc{I}_\tfr$:
\begin{itemize}
\item We let $x^A$ denote coordinate functions from the bounded coordinate systems in Definition \ref{def.aads_horizontal}.
Similarly, we use upper-case Latin indices denote coordinate components with respect to these bounded coordinate systems.

\item We let $x^a \in \{ t, x^A \}$ denote the corresponding bounded coordinates on $\mc{I}_\tfr$.
Lower-case Latin indices denote the corresponding components on $\mc{I}_\tfr$.
\end{itemize}
Furthermore, as usual, we adopt Einstein summation notation: repeated indices denote sums over all frame and coframe elements.
\end{definition}

\begin{remark}
Note in particular that from Definitions \ref{def.aads_infinity} and \ref{def.aads_inf_indices}, we have, for any bounded static AdS infinity, the identities
\begin{equation} \label{eq.aads_static} \partial_t \mathring{g}_{ab} \equiv 0 \text{,} \qquad \mathring{g}_{tt} \equiv -1 \text{,} \qquad \mathring{g}_{t A} \equiv 0 \text{,} \end{equation}
\end{remark}

\subsubsection{Admissible Spacetimes}

Now that we have defined the geometry of infinity, we construct the appropriate class of spacetimes that have such an infinity.
The first step is merely to define the manifold of the spacetime.

\begin{definition} \label{def.aads_manifold}
Let $\mc{M}$ be a smooth $(n+1)$-dimensional manifold of the form
\begin{equation} \label{eq.aads_manifold}
\mc{M} := (r_0, \infty) \times \mc{I}_\tfr \text{,} \qquad r_0, \tfr > 0 \text{.}
\end{equation}
Moreover, let $r \in C^\infty ( \mc{M} )$ denote the projections to the first component of $\mc{M}$.
\end{definition}

For convenience, we also define notations for functions representing ``error terms", whose exact forms are unimportant to the analysis.

\begin{definition} \label{def.O_scr_pre}
Given a positive function $\zeta$ on $\mc{M}$, we use the symbol $\mc{O} ( \zeta )$ to denote any function $\xi: \mc{M} \rightarrow \R$ such that the following holds,
\begin{equation} \label{eq.O_scr_pre}
| \partial_r^k \partial_{ a_1 } \dots \partial_{ a_m } \xi | \lesssim_{n, \tfr, k, m} r^{- k + m} \zeta \text{,} \qquad k, m \geq 0 \text{,}
\end{equation}
where the derivatives are with respect to bounded coordinate systems on $\mc{I}_\tfr$.\footnote{Again, the constants of the inequalities \eqref{eq.aads_pseudo_bdd} are allowed to depend on $n$, $\tfr$, and $m$, but not on the choice of bounded coordinate systems on $\mc{I}_\tfr$.}
\end{definition}

We are now prepared to define our class of aAdS spacetimes:

\begin{definition} \label{def.aads}
Let $( \mc{S}, \mathring{\gamma} )$ be an $(n-1)$-dimensional Riemannian manifold satisfying the assumptions of Definition \ref{def.aads_horizontal}; let $( \mc{I}_\tfr, \mathring{g} )$, $\tfr > 0$, denote the corresponding segment of bounded static AdS infinity (as in Definition \ref{def.aads_infinity}); and let the manifold $\mc{M} := (r_0, \infty) \times \mc{I}_\tfr$ be as in Definition \ref{def.aads_manifold}.

We say that $( \mc{M}, g )$ is an \emph{admissible aAdS segment} iff $g$ is a Lorentzian metric, which, with respect to each bounded coordinate system of $\mc{I}_\tfr$ (see Definition \ref{def.aads_infinity}), satisfies an asymptotic expansion of the form
\begin{equation} \label{eq.aads} \begin{split}
g &= [ r^{-2} - r^{-4} + \mc{O} ( r^{-5} ) ] dr^2 + \sum_a \mc{O} ( r^{-3} ) \cdot dr dx^a + r^2 \mf{g} \text{,}
\end{split} \end{equation}
where $\mf{g}$ contains only $x^a$-components and is of the form
\begin{equation} \label{eq.aads_asymp} \mf{g} := ( \mathring{g}_{a b} + r^{-2} \bar{g}_{a b} ) dx^a dx^b + \sum_{a, b} \mc{O} ( r^{-3} ) \cdot dx^a dx^b \text{,} \end{equation}
and where $\bar{g}$ is a symmetric covariant $2$-tensor field on $\mc{I}_y$, which is \emph{bounded} in the following sense: with respect to any bounded coordinate system $x^a$ on $\mc{I}_y$,
\begin{equation} \label{eq.aads_pseudo_bdd}
| \partial_{ a_1 } \dots \partial_{ a_m } \bar{g}_{bc} | \lesssim_{n, \tfr, m} 1 \text{,} \qquad m \geq 0 \text{.}
\end{equation} 
\end{definition}

\begin{remark}
The class of metrics in Definition \ref{def.aads} includes those of \cite{Hol:wp} but is strictly smaller than those considered in \cite{Warnick:2012fi}, which do not require staticity of $\mathring{g}$.
In fact, the assumption that $\mathring{g}$ is static can be replaced by a weaker assumption that $\mathring{g}$ ``does not vary too much in $t$".
This generalization will be made precise and proved in an upcoming paper, \cite{hol_shao:uc_ads_ns}.
\end{remark}

\begin{remark}
Moreover, $g_{rr} = r^{-2} - r^{-4} + \mc{O} ( r^{-5} )$ in \eqref{eq.aads} can be replaced by
\[
r^{-2} - \varsigma r^{-4} + \mc{O} ( r^{-5} ) \text{,}
\]
where $\varsigma$ is some smooth function on $\mc{I}_y$ that is bounded up to sufficiently high order, \emph{provided that we weaken our definition of ``$\mc{O} ( \zeta )$"}.\footnote{In particular, taking an $x^a$-derivative does not result in a loss of a power of $r$.}
In our upcoming paper \cite{hol_shao:uc_ads_ns}, we will work with a class of aAdS spacetimes defined precisely in this manner.
\end{remark}

For our upcoming pseudoconvexity criterion, we will need information on higher-order asymptotics of the metric at infinity.
This is captured in our aAdS segments by the tensor field $\bar{g}$ in Definition \ref{def.aads}.

\begin{definition} \label{def.aads_indices}
We will also adopt the following notations for objects on $(\mc{M}, g)$:
\begin{itemize}
\item Let $\nabla$ denote the Levi-Civita connection associated with $g$, and let $\nasla$ denote the induced connections on the level sets of $(t, r)$, i.e., the copies of $\mc{S}$.

\item Lower-case Greek indices denote spacetime coordinate components, though not necessarily with respect to any special coordinate system.
Again, repeated indices will denote sums over all frame and coframe elements.
\end{itemize}
\end{definition}

\begin{remark}
In our analysis, we will only use a finite number of derivatives of $g$.
Thus, the assumptions for admissible aAdS segments in Definition \ref{def.aads} can be weakened to requiring bounds for only a finite but large enough number of derivatives of the metric.
For simplicity, we avoid this level of generality in this paper.
\end{remark}

\subsubsection{Examples}

The prototypical example of an admissible aAdS segment is (a subset of) \emph{AdS spacetime} itself.
More specifically, consider
\begin{equation} \label{eq.ads} \begin{split}
\mc{M}_\AdS &:= (r_0, \infty) \times (0, \tfr^{-1} \pi) \times \Sph^{n-1} \text{,} \\
g_\AdS &:= ( 1 + r^2 )^{-1} dr^2 - ( 1 + r^2 ) dt^2 + r^2 \mathring{\gamma} \text{,}
\end{split} \end{equation}
where $\mathring{\gamma}$ is now the standard metric on the unit sphere $\Sph^{n-1}$.
Then, $( \mc{M}_\AdS, g_\AdS )$ indeed satisfies the postulates in Definition \ref{def.aads}, with
\begin{equation} \label{eq.ads_g_ring}
\mathring{g} = - dt^2 + \mathring{\gamma} \text{,} \qquad \bar{g} = - dt^2 \text{.}
\end{equation}

More generally, one still has an admissible aAdS segment when $( \Sph^{n - 1}, \mathring{\gamma} )$ is replaced by an arbitrary $(n - 1)$-dimensional Riemannian manifold $( \mc{S}, \mathring{\gamma} )$ satisfying the boundedness assumptions of Definition \ref{def.aads_horizontal}.
In other words, the spacetime
\begin{equation} \label{eq.ads_gen} \begin{split}
\mc{M} &:= (r_0, \infty) \times (0, \tfr^{-1} \pi) \times \mc{S} \text{,} \\
g &:= ( 1 + r^2 )^{-1} dr^2 - ( 1 + r^2 ) dt^2 + r^2 \mathring{\gamma} \text{,}
\end{split} \end{equation}
also satisfies the assumptions of Definition \ref{def.aads}.

In fact, admissible aAdS segments can be viewed as the perturbations of \eqref{eq.ads} and \eqref{eq.ads_gen} for which the error terms decay at appropriate rates toward infinity. 

\subsection{Geometric Properties} \label{sec.aads_prop}

In this subsection, we derive some basic asymptotic properties regarding the geometry of admissible aAdS spacetimes.
Throughout the remainder of the section, we assume as our background setting such an admissible aAdS segment $( \mc{M}, g )$, as specified in Definition \ref{def.aads}.

We begin by adopting a change of radial coordinate that is better suited for analyzing geometry of our spacetime at infinity.

\begin{definition}
We define the inverted radius $\rho$ by
\begin{equation} \label{eq.rho} \rho := r^{-1} \text{,} \qquad \rho \in ( 0, r_0^{-1} ) \text{.} \end{equation}
In particular, infinity corresponds to $\rho = 0$.
\end{definition}

\begin{remark} \label{def.O_scr}
Note that Definition \ref{def.O_scr_pre} can be equivalently expressed as follows: given $\zeta: \mc{M} \rightarrow \R^+$, we use the symbol $\mc{O} ( \zeta )$ to denote any function $\xi: \mc{M} \rightarrow \R$ such that
\begin{equation} \label{eq.O_scr}
| \partial_\rho^k \partial_{ a_1 } \dots \partial_{ a_m } \xi | \lesssim_{n, \tfr, k, m} \rho^{- k - m} \zeta \text{,} \qquad k, m \geq 0 \text{.}
\end{equation}
\end{remark}

\subsubsection{Metric Expansions}

We now expand the various components of $g$ and its derivatives, with respect to $\rho$-$t$-$x^A$-coordinates.

\begin{proposition} \label{thm.g}
Whenever $\rho \ll_{n, \tfr} 1$ (or in other words, $r \gg_{n, \tfr} 1$), the following asymptotic properties hold in $\rho$-$t$-$x^A$-coordinates:
\begin{itemize}
\item The components of $g$ satisfy
\begin{equation} \label{eq.g}
g_{\rho \rho} = \rho^{-2} - 1 + \mc{O} ( \rho ) \text{,} \qquad g_{\rho a} = \mc{O} ( \rho ) \text{,} \qquad g_{a b} = \rho^{-2} \mathring{g}_{ab} + \bar{g}_{ab} + \mc{O} ( \rho ) \text{.}
\end{equation}
In particular,
\begin{equation} \label{eq.g_t}
g_{tt} = - \rho^{-2} + \bar{g}_{tt} + \mc{O} ( \rho ) \text{,} \qquad g_{t A} = \bar{g}_{t A} + \mc{O} ( \rho ) \text{.}
\end{equation}

\item The dual of $g$ satisfies\footnote{By $\mathring{g}^{ab}$, we mean the components of the inverse of the matrix formed by the $\mathring{g}_{ab}$'s.}
\begin{equation} \label{eq.g_inv} \begin{split}
g^{\rho \rho} &= \rho^2 + \rho^4 + \mc{O} ( \rho^5 ) \text{,} \\
g^{\rho a} &= \mc{O} ( \rho^5 ) \text{,} \\
g^{a b} &= \rho^2 \mathring{g}^{ab} - \rho^4 \mathring{g}^{ac} \mathring{g}^{bd} \bar{g}_{cd} + \mc{O} ( \rho^5 ) \text{.}
\end{split} \end{equation}
In particular,
\begin{equation} \label{eq.g_inv_t}
g^{tt} = - \rho^2 - \rho^4 \bar{g}_{tt} + \mc{O} ( \rho^5 ) \text{,} \qquad g^{t A} = \rho^4 \mathring{g}^{A B} \bar{g}_{t B} + \mc{O} ( \rho^5 ) \text{.}
\end{equation}

\item The Christoffel symbols with respect to these coordinates satisfy
\begin{equation} \label{eq.Gamma} \begin{split}
\Gamma^\rho_{\rho \rho} = - \rho^{-1} - \rho + \mc{O} ( \rho^2 ) \text{,} &\qquad \Gamma^\rho_{\rho a} = \mc{O} ( \rho^2 ) \text{,} \\
\Gamma^\rho_{a b} = ( \rho^{-1} + \rho ) \mathring{g}_{ab} + \mc{O} ( \rho^2 ) \text{,} &\qquad \Gamma^a_{\rho \rho} = \mc{O} ( \rho^2 ) \text{,} \\
\Gamma^a_{\rho b} = - \rho^{-1} \delta^a_b + \rho \mathring{g}^{ac} \bar{g}_{cb} + \mc{O} ( \rho^2 ) \text{,} &\qquad \Gamma^a_{b c} = \mc{O} (1) \text{.}
\end{split} \end{equation}
In addition, we have improved bounds when $\Gamma^a_{b c}$ contains a $t$-component:
\begin{equation} \label{eq.Gamma_t}
\Gamma^t_{a b} = \mc{O} ( \rho^2 ) \text{,} \qquad \Gamma^a_{t b} = \mc{O} ( \rho^2 ) \text{.}
\end{equation}

\item Furthermore, the Christoffel symbols satisfy
\begin{equation} \label{eq.Gamma_deriv}
\partial_\rho \Gamma^c_{a b} = \mc{O} ( \rho ) \text{,} \qquad \partial_t \Gamma^c_{a b} = \mc{O} ( \rho ) \text{,} \qquad \partial_a \Gamma^c_{\rho b} = \mc{O} ( \rho ) \text{,} \qquad \partial_a \Gamma^c_{t b} = \mc{O} ( \rho ) \text{.}
\end{equation}
\end{itemize}
\end{proposition}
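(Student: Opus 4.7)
The proof is a direct computation in three stages: first expand $g$ after the change of variable $r=\rho^{-1}$, then invert the metric by block-matrix inversion, and finally plug into the Christoffel formula, tracking the order of smallness in $\rho$ carefully at each step.

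\emph{Step 1 (covariant components).} Starting from \eqref{eq.aads} and using $dr = -\rho^{-2} d\rho$, one immediately rewrites the defining expansion as
\[
g_{\rho\rho} = \rho^{-4}\bigl[r^{-2} - r^{-4} + \mc{O}(r^{-5})\bigr] = \rho^{-2} - 1 + \mc{O}(\rho),
\qquad g_{\rho a} = -\rho^{-2}\,\mc{O}(\rho^{3}) = \mc{O}(\rho),
\]
while the tangential block $g_{ab}$ is unchanged from \eqref{eq.aads_asymp}, giving the stated formulas \eqref{eq.g}. The refined expressions \eqref{eq.g_t} for the $t$-components follow by inserting the static data \eqref{eq.aads_static} into the leading $\mathring{g}_{ab}$-term.

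\emph{Step 2 (contravariant components).} Write the metric in $\rho$-$x^{a}$ coordinates as a $2\times 2$ block with diagonal blocks of sizes $\rho^{-2}$ and off-diagonal $\mc{O}(\rho)$, so that the Schur complement correction to each block is of size $\mc{O}(\rho^{4})\cdot\mc{O}(\rho^{-2})=\mc{O}(\rho^{6})$; this is negligible to the order claimed. It then remains to invert each block separately as a power series in $\rho$. For $g_{\rho\rho} = \rho^{-2}(1-\rho^{2}+\mc{O}(\rho^{3}))$ this yields $g^{\rho\rho}=\rho^{2}+\rho^{4}+\mc{O}(\rho^{5})$. For the tangential block $g_{ab}=\rho^{-2}(\mathring{g}_{ab}+\rho^{2}\bar g_{ab}+\mc{O}(\rho^{3}))$, the Neumann series gives
\[
g^{ab}=\rho^{2}\mathring g^{ab}-\rho^{4}\mathring g^{ac}\mathring g^{bd}\bar g_{cd}+\mc{O}(\rho^{5}),
\]
and the off-diagonal $g^{\rho a}$ is computed from $-g^{\rho\rho}g_{\rho b}g^{ab}$ and is therefore $\mc{O}(\rho^{5})$. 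The formulas \eqref{eq.g_inv_t} then follow by substituting $\mathring g^{tt}=-1$, $\mathring g^{tA}=0$.

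\emph{Step 3 (Christoffel symbols).} Apply $\Gamma^{\mu}_{\nu\lambda}=\tfrac{1}{2}g^{\mu\sigma}(\partial_{\nu}g_{\sigma\lambda}+\partial_{\lambda}g_{\sigma\nu}-\partial_{\sigma}g_{\nu\lambda})$ and use Step~2 together with the rule that $\mc{O}(\zeta)$ is stable under sums, products, and $\partial_{\rho},\partial_{a}$-derivatives with the loss dictated by Remark \ref{def.O_scr}. The only mildly subtle combinatorics concerns the dominant $\partial_{\rho}g_{ab} = -2\rho^{-3}\mathring g_{ab}+\mc{O}(1)$ contribution: multiplying by $\tfrac12 g^{\rho\rho}$ produces the advertised $(\rho^{-1}+\rho)\mathring g_{ab}$ in $\Gamma^{\rho}_{ab}$, and multiplying by $\tfrac12 g^{ac}$ produces the $-\rho^{-1}\delta^{a}_{b}+\rho\,\mathring g^{ac}\bar g_{cb}$ in $\Gamma^{a}_{\rho b}$; cross-terms with $g^{\rho a}=\mc{O}(\rho^{5})$ or with $g_{\rho a}=\mc{O}(\rho)$ are all harmless. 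The purely tangential $\Gamma^{a}_{bc}$ is simply $\mathring\Gamma^{a}_{bc}+\mc{O}(\rho^{2})$, where $\mathring\Gamma$ denotes the Christoffels of $\mathring g$.

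\emph{Step 4 (use of staticity for the $t$-improvements).} For \eqref{eq.Gamma_t} one inspects, inside the Christoffel formula for $\Gamma^{t}_{ab}$ and $\Gamma^{a}_{tb}$, every factor that could be larger than $\mc{O}(1)$. The dangerous $\rho^{-2}\mathring g$-terms either vanish after a $\partial_{t}$ thanks to $\partial_{t}\mathring g_{ab}\equiv 0$ in \eqref{eq.aads_static}, or appear paired with $g^{t\sigma}$ (namely $g^{tt}=\mc{O}(\rho^{2})$, $g^{tA}=\mc{O}(\rho^{4})$, $g^{t\rho}=\mc{O}(\rho^{5})$) whose smallness precisely compensates, so that each admissible combination lands in $\mc{O}(\rho^{2})$. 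This exact matching is the main arithmetic point of the proof and the only spot where staticity is essential.

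\emph{Step 5 (derivative bounds).} For \eqref{eq.Gamma_deriv} one simply differentiates the formulas already obtained. Since $\Gamma^{a}_{bc}=\mathring\Gamma^{a}_{bc}+\mc{O}(\rho^{2})$ with $\mathring\Gamma^{a}_{bc}$ independent of both $\rho$ and $t$ (the latter again by staticity), both $\partial_{\rho}\Gamma^{a}_{bc}$ and $\partial_{t}\Gamma^{a}_{bc}$ reduce to derivatives of the $\mc{O}(\rho^{2})$ remainder, which lies in $\mc{O}(\rho)$ by Remark \ref{def.O_scr}. The bounds $\partial_{a}\Gamma^{c}_{\rho b}=\mc{O}(\rho)$ and $\partial_{a}\Gamma^{c}_{tb}=\mc{O}(\rho)$ likewise follow from the explicit leading terms (whose tangential derivatives contain a manifest factor of $\rho$) together with one more application of the definition of $\mc{O}(\cdot)$.

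\emph{Expected obstacle.} No single step is conceptually hard; the real challenge is bookkeeping. Each Christoffel symbol involves a sum over $\sigma\in\{\rho,t,A\}$ in which several summands have different orders in $\rho$, and the cancellations in Step~4 must be verified case-by-case (three choices of $\sigma$ times the four possibilities for $(a,b)\in\{t,A\}^{2}$). A good amount of discipline with the $\mc{O}(\zeta)$ calculus of Remark \ref{def.O_scr} is needed to avoid accidentally losing a power of $\rho$ or, conversely, claiming a cancellation that actually requires the full strength of staticity of $\bar g$ (which the hypotheses do \emph{not} provide --- only $\mathring g$ is static).
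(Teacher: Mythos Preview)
Your proposal is correct and follows essentially the same route as the paper: change variables to $\rho$, invert the metric as a perturbation of a block-diagonal form, feed the resulting expansions into the Christoffel formula, and exploit \eqref{eq.aads_static} for the $t$-improvements and the derivative bounds. The only cosmetic differences are that the paper phrases the inversion as inverting the conformally rescaled matrix $\rho^{2}g$ rather than in Schur-complement language, and it records the intermediate metric derivatives \eqref{eq.Gamma_0}--\eqref{eq.Gamma_1} explicitly before assembling the Christoffel symbols; your Steps~3--5 match this line by line.
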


\begin{proof}
From \eqref{eq.aads} and \eqref{eq.rho}, we obtain that
\begin{equation} \label{eq.g_0} \rho^2 g = ( 1 - \rho^2 ) d \rho^2 + \mf{g} + \mc{O} ( \rho^3 ) \text{.} \end{equation}
The expansions \eqref{eq.g} follow immediately from \eqref{eq.g_0}, while \eqref{eq.g_t} follows from \eqref{eq.aads_asymp} and the last equation in \eqref{eq.g}.
Inverting the matrix defined by \eqref{eq.g_0} yields
\begin{equation} \label{eq.g_1} \rho^{-2} g^{-1} = ( 1 + \rho^2 ) \partial_\rho \otimes \partial_\rho + \mf{g}^{-1} + \mc{O} ( \rho^3 ) \text{,} \end{equation}
where we have used that $\rho$ is small.
The expansions \eqref{eq.g_inv} now follow from \eqref{eq.g_1}, while \eqref{eq.g_inv_t} is a consequence of \eqref{eq.aads_asymp} and \eqref{eq.g_t}.

To compute the Christoffel symbols, we first note that
\begin{equation} \label{eq.Gamma_0} \begin{split}
\partial_\rho g_{\rho \rho} = -2 \rho^{-3} + \mc{O} ( 1 ) \text{,} &\qquad \partial_c g_{\rho \rho} = \mc{O} ( 1 ) \text{,} \\
\partial_\rho g_{\rho a} = \mc{O} ( 1 ) \text{,} &\qquad \partial_c g_{\rho a} = \mc{O} ( 1 ) \text{,} \\
\partial_\rho g_{a b} = -2 \rho^{-3} \mathring{g}_{ab} + \mc{O} ( 1 ) \text{,} &\qquad \partial_c g_{a b} = \rho^{-2} \partial_c \mathring{g}_{ab} + \mc{O} ( 1 ) \text{.}
\end{split} \end{equation}
Equations \eqref{eq.Gamma} follow from \eqref{eq.g_inv} and \eqref{eq.Gamma_0}.
Furthermore, by \eqref{eq.aads_static} and \eqref{eq.Gamma_0},
\begin{equation} \label{eq.Gamma_1}
\partial_t g_{a b} = \mc{O} ( 1 ) \text{,} \qquad \partial_c g_{t a} = \mc{O} ( 1 ) \text{.}
\end{equation}
The identities \eqref{eq.Gamma_t} now follow from \eqref{eq.g_inv}, \eqref{eq.g_inv_t}, \eqref{eq.Gamma_0}, and \eqref{eq.Gamma_1}.

Finally, for \eqref{eq.Gamma_deriv}, we apply \eqref{eq.g} and \eqref{eq.g_inv} to write
\begin{equation} \label{eq.Gamma_deriv_0}
\Gamma^c_{a b} = \frac{1}{2} \mathring{g}^{c d} ( \partial_a \mathring{g}_{d b} + \partial_b \mathring{g}_{d a} - \partial_d \mathring{g}_{a b} ) + \mc{O} ( \rho^2 ) \text{.}
\end{equation}
The first two bounds in \eqref{eq.Gamma_deriv} follow from the fact that $\mathring{g}$ is independent of both $\rho$ and $t$ (see \eqref{eq.aads_static}).
Moreover, by \eqref{eq.aads_static} and \eqref{eq.Gamma_deriv_0}, we have
\begin{equation} \label{eq.Gamma_deriv_1}
\Gamma^c_{t b} = \mc{O} ( \rho^2 ) \text{,} \qquad \partial_a \Gamma^c_{t b} = \mc{O} ( \rho ) \text{,}
\end{equation}
proving the last bound in \eqref{eq.Gamma_deriv}.
The remaining bound for $\partial_a \Gamma^c_{\rho b}$ in \eqref{eq.Gamma_deriv} follows by simply differentiating the identity for $\Gamma^c_{\rho b}$ in \eqref{eq.Gamma}.
\end{proof}

Later, we will also require asymptotic expansions for the gradient of $\rho$:

\begin{proposition} \label{thm.rho_grad}
Suppose $\rho \ll_{n, \tfr} 1$.
Then, the ($g$-)gradient of $\rho$ satisfies
\begin{equation} \label{eq.rho_grad}
\grad \rho = [ \rho^2 + \mc{O} ( \rho^4 ) ] \partial_\rho + \mc{O} ( \rho^5 ) \cdot \partial_t + \sum_{ A = 1 }^{n - 1} \mc{O} ( \rho^5 ) \cdot \partial_{ x^A } \text{.}
\end{equation}
Moreover, the outer-pointing unit normal to a level set $\{ \rho = \rho_0 \}$ is given by
\begin{equation} \label{eq.rho_normal}
\mc{N} := | \nabla^\alpha \rho \nabla_\alpha \rho |^{- \frac{1}{2}} \grad \rho = [ \rho + \mc{O} ( \rho^3 ) ] \partial_\rho + \mc{O} ( \rho^4 ) \cdot \partial_t + \sum_{ A = 1 }^{n - 1} \mc{O} ( \rho^4 ) \cdot \partial_{ x^A } \text{.}
\end{equation}
\end{proposition}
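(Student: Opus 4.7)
The proposal is to deduce both expansions directly from the inverse metric expansions in Proposition \ref{thm.g}, with only a Taylor expansion of a square root in between.

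For the first identity, observe that since $\rho$ depends only on the $\rho$-coordinate, we have $\partial_\beta \rho = \delta^\rho_\beta$, so that in $\rho$-$t$-$x^A$-coordinates
\[
(\grad \rho)^\alpha \;=\; g^{\alpha\beta} \partial_\beta \rho \;=\; g^{\alpha \rho} \text{.}
\]
Reading off the components from \eqref{eq.g_inv} and \eqref{eq.g_inv_t} then gives
\[
(\grad \rho)^\rho = g^{\rho\rho} = \rho^2 + \rho^4 + \mc{O}(\rho^5) = \rho^2 + \mc{O}(\rho^4) \text{,} \qquad (\grad \rho)^a = g^{\rho a} = \mc{O}(\rho^5) \text{,}
\]
which is exactly \eqref{eq.rho_grad}. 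The only point requiring any care is checking that the $\mc{O}(\rho^5)$-error contributions are compatible with Definition \ref{def.O_scr_pre} / Remark \ref{def.O_scr} under the splitting $\rho^2 + \rho^4 = \rho^2 + \mc{O}(\rho^4)$; this is immediate from the definition, since $\rho^4$ itself is in the class $\mc{O}(\rho^4)$.

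For the second identity, I would first compute the norm using the same observation:
\[
\nabla^\alpha \rho \, \nabla_\alpha \rho \;=\; g^{\alpha \beta} \partial_\alpha \rho \, \partial_\beta \rho \;=\; g^{\rho \rho} \;=\; \rho^2 + \rho^4 + \mc{O}(\rho^5) \;=\; \rho^2 \bigl[ 1 + \rho^2 + \mc{O}(\rho^3) \bigr] \text{.}
\]
Because the bracket equals $1 + \mc{O}(\rho^2)$ and is bounded above and below by positive constants when $\rho \ll_{n,\tfr} 1$, one may apply the Taylor expansion of $x \mapsto x^{-1/2}$ around $x = 1$ to obtain
\[
\lvert \nabla^\alpha \rho \, \nabla_\alpha \rho \rvert^{-\frac{1}{2}} \;=\; \rho^{-1} \bigl[ 1 + \mc{O}(\rho^2) \bigr] \;=\; \rho^{-1} + \mc{O}(\rho) \text{.}
\]
One must check here that the $\mc{O}$-bounds propagate through the square root expansion in the sense of \eqref{eq.O_scr}, which follows by differentiating the composition and iterating the chain rule — the key point being that all derivatives of $(1+x)^{-1/2}$ are bounded on a neighborhood of $0$.

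Finally, multiplying the expansions for $|\nabla \rho|^{-1}$ and $\grad \rho$ component-by-component yields
\[
\mc{N}^\rho = \bigl[\rho^{-1} + \mc{O}(\rho)\bigr]\bigl[\rho^2 + \mc{O}(\rho^4)\bigr] = \rho + \mc{O}(\rho^3) \text{,} \qquad \mc{N}^a = \bigl[\rho^{-1} + \mc{O}(\rho)\bigr] \cdot \mc{O}(\rho^5) = \mc{O}(\rho^4) \text{,}
\]
which is precisely \eqref{eq.rho_normal}. The main (and only) obstacle is the bookkeeping of the $\mc{O}$-classes under these algebraic manipulations: one needs that $\mc{O}$ is closed under sums, products, and composition with smooth functions of a variable taking values in a bounded set. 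All of these are direct from Definition \ref{def.O_scr_pre} together with the Leibniz and chain rules, so no new ideas are needed beyond Proposition \ref{thm.g}.
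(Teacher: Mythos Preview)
The proposal is correct and follows essentially the same approach as the paper: both compute $(\grad\rho)^\alpha = g^{\alpha\rho}$ directly from \eqref{eq.g_inv}, then obtain $|\nabla\rho|^2 = g^{\rho\rho}$ and divide. Your additional remarks on why the $\mc{O}$-classes are stable under the square-root expansion and products are more detailed than the paper's version but not needed for the argument.
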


\begin{proof}
The expansion \eqref{eq.rho_grad} follows immediately from \eqref{eq.g_inv} and the identity
\[
\grad \rho = g^{\alpha \beta} \partial_\alpha \rho \partial_\beta = g^{\rho \beta} \partial_\beta \text{.}
\]
By a similar computation, we have
\[
\nabla^\alpha \rho \nabla_\alpha \rho = g^{\rho \rho} = \rho^2 + \mc{O} ( \rho^4 ) \text{.}
\]
The remaining expansion \eqref{eq.rho_normal} now follows from \eqref{eq.rho_grad} and the above.
\end{proof}

\subsubsection{The Function $f$}

We now introduce the function $f \in C^\infty ( \mc{M} )$, defined
\begin{equation} \label{eq.f}
f := \frac{ \rho }{ \sin ( \tfr t ) } \text{.}
\end{equation}
Note that the level sets $f = \varepsilon > 0$ focus at infinity as $t \searrow 0$ and $t \nearrow \tfr^{-1} \pi$, cf.~the figure shown in the introduction.
Note also that in our domain of consideration,
\begin{equation} \label{eq.f_trivial}
0 < \rho \leq f \text{,} \qquad f^2 \cos^2 (\tfr t) = f^2 - \rho^2 \text{.}
\end{equation}

Differentiating $f$, we see that
\begin{equation} \label{eq.f_deriv}
\partial_\rho f = f \rho^{-1} \text{,} \qquad \partial_t f = - \tfr f^2 \rho^{-1} \cos (c t) \text{,}
\end{equation}
and that
\begin{equation} \label{eq.f_deriv_2}
\partial^2_{\rho \rho} f = 0 \text{,} \qquad \partial^2_{\rho t} f = - \tfr f^2 \rho^{-2} \cdot \cos (\tfr t) \text{,} \qquad \partial^2_{t t} f = \tfr^2 f \rho^{-2} ( 2 f^2 - \rho^2 ) \text{.}
\end{equation}
In particular, observe that for $f \ll_{n, \tfr} 1$, we have the trivial bounds
\begin{equation} \label{eq.f_deriv_trivial} f = \mc{O} (f) = \mc{O} (1) \text{,} \qquad \mc{O} ( f \rho ) = \mc{O} ( 1 ) \text{.} \end{equation}

We now collect some asymptotic identities involving $\nabla f$ and $\nabla^2 f$:

\begin{proposition} \label{thm.f_grad}
Suppose $f \ll_{n, \tfr} 1$.
Then, the gradient of $f$ satisfies\footnote{In particular, $\rho \ll_{n, \tfr} 1$ by \eqref{eq.f_trivial}.}
\begin{equation} \label{eq.f_grad}
\grad f = f [ \rho + \mc{O} ( \rho^3 ) ] \partial_\rho + f [ \tfr f \rho \cos (\tfr t) + \mc{O} ( \rho^3 ) ] \partial_t + \sum_{ A = 1 }^{n - 1} \mc{O} ( f^2 \rho^3 ) \cdot \partial_{ x^A } \text{.}
\end{equation}
In particular,
\begin{equation} \label{eq.f_grad_sq}
\nabla^\alpha f \nabla_\alpha f = f^2 [ 1 - \tfr^2 f^2 + \mc{O} ( \rho^2 ) ] = f^2 + \mc{O} ( f^4 ) \text{.}
\end{equation}
Furthermore, the components of $\nabla^2 f$ satisfy:
\begin{equation} \label{eq.f_hessian_pre} \begin{split}
\nabla_{\rho \rho} f &= f \rho^{-2} [ 1 + \rho^2 + \mc{O} ( \rho^3 ) ] \text{,} \\
\nabla_{t \rho} f &= f \rho^{-2} [ - 2 \tfr f \cos (\tfr t) - \tfr f \rho^2 \cos (\tfr t) \cdot \bar{g}_{tt} + \mc{O} ( \rho^3 ) ] \text{,} \\
\nabla_{t t} f &= f \rho^{-2} [ 1 + 2 \tfr^2 f^2 + ( 1 - \tfr^2 ) \rho^2 + \mc{O} ( \rho^3 ) ] \text{,} \\
\nabla_{A B} f &= f \rho^{-2} [ - ( 1 + \rho^2 ) \mathring{g}_{AB} + \mc{O} ( \rho^3 ) ] \text{,} \\
\nabla_{\rho A} f &= f \rho^{-2} [ - \tfr f \rho^2 \cos (\tfr t) \cdot \bar{g}_{t A} + \mc{O} ( \rho^3 ) ] \text{,} \\
\nabla_{t A} f &= f \rho^{-2} \cdot \mc{O} ( \rho^3 ) \text{.}
\end{split} \end{equation}
\end{proposition}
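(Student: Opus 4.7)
The plan is to compute both $\grad f$ and $\nabla^2 f$ componentwise by substituting the asymptotic expansions for $g^{-1}$ and the Christoffel symbols from Proposition \ref{thm.g} into the standard formulas
\[
(\grad f)^\alpha = g^{\alpha \beta} \partial_\beta f \text{,} \qquad \nabla_{ab} f = \partial_a \partial_b f - \Gamma^c_{ab} \partial_c f \text{,}
\]
and then applying the explicit partial derivative formulas \eqref{eq.f_deriv}--\eqref{eq.f_deriv_2} for $f$. The key reduction is that, because $f$ depends only on $\rho$ and $t$, only the $g^{\alpha \rho}$, $g^{\alpha t}$ entries of $g^{-1}$ and the $\Gamma^\rho_{ab}$, $\Gamma^t_{ab}$ families of Christoffel symbols ever appear.

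First I would obtain \eqref{eq.f_grad} by writing $\grad f = g^{\alpha \rho} \partial_\rho f \cdot \partial_\alpha + g^{\alpha t} \partial_t f \cdot \partial_\alpha$ and plugging in $\partial_\rho f = f\rho^{-1}$, $\partial_t f = -\tfr f^2 \rho^{-1} \cos(\tfr t)$ together with \eqref{eq.g_inv}--\eqref{eq.g_inv_t}. The $\partial_\rho$ coefficient is controlled by $g^{\rho\rho} \partial_\rho f$ since $g^{\rho t} = \mc{O}(\rho^5)$; the $\partial_t$ coefficient comes from $g^{tt} \partial_t f$ (since $g^{t \rho} \partial_\rho f$ is lower order); and the $\partial_{x^A}$ coefficient is generated only through the subleading $g^{A \rho}$, $g^{At}$ pieces, giving the claimed $\mc{O}(f^2 \rho^3)$ bound. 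For \eqref{eq.f_grad_sq}, I would expand $\nabla^\alpha f \nabla_\alpha f = g^{\rho\rho}(\partial_\rho f)^2 + 2 g^{\rho t} \partial_\rho f \partial_t f + g^{tt}(\partial_t f)^2$; the collapse to $f^2 + \mc{O}(f^4)$ follows by collecting powers of $\rho^{-2}$ and using the identity $f^2 \cos^2(\tfr t) = f^2 - \rho^2$ from \eqref{eq.f_trivial} to kill the mixed $\tfr^2 f^2$ terms at leading order (noting that $\mc{O}(f^2 \rho^2) \subseteq \mc{O}(f^4)$ since we are in the regime $\rho \leq f$).

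For the Hessian \eqref{eq.f_hessian_pre}, I would treat each index pair separately, always reducing to $\partial_a \partial_b f - \Gamma^\rho_{ab} \cdot f\rho^{-1} + \Gamma^t_{ab} \cdot \tfr f^2 \rho^{-1} \cos(\tfr t)$. The entries $\nabla_{\rho\rho} f$, $\nabla_{tt} f$, $\nabla_{t\rho} f$, $\nabla_{AB} f$ follow by inserting the leading terms of the corresponding Christoffel symbols from \eqref{eq.Gamma}; here the crucial simplifications are $\mathring{g}_{tt} \equiv -1$ and $\mathring{g}_{tA} \equiv 0$ from \eqref{eq.aads_static}, which force the $\Gamma^\rho$ contribution for the $(t,t)$ and $(t,\rho)$ entries to take the displayed forms once the $\bar g$-correction to $\Gamma^\rho_{ab}$ is retained. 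The two mixed entries $\nabla_{\rho A} f$ and $\nabla_{tA} f$ have no $\partial_a \partial_b f$ contribution and use the improved bound $\Gamma^t_{ab} = \mc{O}(\rho^2)$ from \eqref{eq.Gamma_t} to handle the $\Gamma^t$ term; the $\Gamma^\rho_{tA}$ piece must be carried to one order beyond what is stated in \eqref{eq.Gamma}, which one obtains by revisiting \eqref{eq.Gamma_0} together with the $\bar g_{tA}$ contribution in \eqref{eq.g} to extract the subleading coefficient $\rho \cdot \bar g_{tA}$ needed for the displayed expression.

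The main obstacle is bookkeeping of error orders: since $\rho \leq f$ can approach unit size while $\rho$ is small, one must distinguish carefully between the $\mc{O}(f^j \rho^k)$ families and track when a factor of $f$ can be absorbed into or must be separated from the $\mc{O}$-symbol. In particular, the nontrivial subleading expansions of $\Gamma^\rho_{tA}$ and the corrections involving $\bar g_{tt}$ in $\nabla_{t\rho} f$ are the places where one cannot simply quote the leading Christoffel asymptotics from Proposition \ref{thm.g} and must instead redo the computation retaining the next order; the remainder of the argument is routine once these refined expansions are in hand.
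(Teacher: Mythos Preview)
Your proposal is correct and follows essentially the same approach as the paper: compute $\grad f$ from $g^{\alpha\beta}\partial_\beta f$ using the inverse metric expansions \eqref{eq.g_inv}--\eqref{eq.g_inv_t}, and compute each Hessian component from $\partial_\alpha\partial_\beta f - \Gamma^\mu_{\alpha\beta}\partial_\mu f$ using the Christoffel expansions \eqref{eq.Gamma}--\eqref{eq.Gamma_t}. One minor correction: you do \emph{not} actually need to go beyond the Christoffel asymptotics already stated in Proposition~\ref{thm.g}; the subleading term $\rho\,\mathring{g}^{ac}\bar{g}_{cb}$ in the expansion of $\Gamma^a_{\rho b}$ (with $a=t$) is precisely what produces the $\bar{g}_{tt}$ and $\bar{g}_{tA}$ contributions in $\nabla_{t\rho}f$ and $\nabla_{\rho A}f$, so no revisiting of \eqref{eq.Gamma_0} is required.
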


\begin{proof}
Expanding using \eqref{eq.g_inv}, \eqref{eq.g_inv_t}, and \eqref{eq.f_deriv}, we obtain
\begin{equation} \label{eq.f_grad_0} \begin{split}
\grad f &= ( g^{\rho\rho} \partial_\rho f + g^{\rho t} \partial_t f ) \partial_\rho + ( g^{\rho t} \partial_\rho f + g^{t t} \partial_t f) \partial_t \\
&\qquad + ( g^{\rho A} \partial_\rho f + g^{t A} \partial_t f) \partial_{ x^A } \\
&= [ f \rho + \mc{O} ( f \rho^3 ) ] \partial_\rho + \mc{O} ( f^2 \rho^4 ) \cdot \partial_\rho + [ f^2 \rho + \mc{O} ( f^2 \rho^3 ) ] \tfr \cos (\tfr t) \cdot \partial_t \\
&\qquad + \mc{O} ( f \rho^4 ) \cdot \partial_t + \sum_{ A = 1 }^{n - 1} \mc{O} ( f \rho^4 ) \cdot \partial_{ x^A } + \sum_{ A = 1 }^{n - 1} \mc{O} ( f^2 \rho^3 ) \cdot \partial_{ x^A } \text{.}
\end{split} \end{equation}
The identity \eqref{eq.f_grad} now follows from \eqref{eq.f_deriv_trivial} and \eqref{eq.f_grad_0}.
Similarly,
\begin{equation} \label{eq.f_grad_1} \begin{split}
\nabla^\alpha f \nabla_\alpha f &= g^{\rho\rho} ( \partial_\rho f )^2 + 2 g^{\rho t} \partial_\rho f \partial_t f + g^{tt} ( \partial_t f )^2 \\
&= f^2 [ 1 + \mc{O} ( \rho^2 ) ] + f^3 \cdot \mc{O} ( \rho^3 ) + [ -1 + \mc{O} ( \rho^2 ) ] \tfr^2 f^2 ( f^2 - \rho^2 ) \text{,}
\end{split} \end{equation}
where we also used \eqref{eq.f_trivial}.
Applying \eqref{eq.f_deriv_trivial} yields \eqref{eq.f_grad_sq}.

The identities for $\nabla^2 f$ are similarly derived.
Since
\[ \nabla_{\alpha\beta} f = \partial_\alpha \partial_\beta f - \Gamma^\mu_{\alpha\beta} \partial_\mu f \text{,} \]
for each pair $(\alpha, \beta)$, we can expand the right-hand sides using Proposition \ref{thm.g}, \eqref{eq.f_deriv}, and \eqref{eq.f_deriv_2}.
Again, keeping in mind \eqref{eq.f_trivial} and \eqref{eq.f_deriv_trivial} results in \eqref{eq.f_hessian_pre}.
\end{proof}

\subsubsection{Error Bounds}

Here, we present some simple ``error term" estimates that will be useful in the proof of our main result.

\begin{corollary} \label{thm.error_est}
Let $\xi = \mc{O} ( \zeta )$, where $\zeta: \mc{M} \rightarrow \R^+$.
Then, when $f \ll_{n, \tfr} 1$, we have
\begin{equation} \label{eq.error_est}
\Box \xi = \mc{O} ( \zeta ) \text{,} \qquad \nabla^\alpha f \nabla_\alpha \xi = \mc{O} ( f \zeta ) \text{.}
\end{equation}
\end{corollary}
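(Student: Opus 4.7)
The plan is to expand each quantity in the $\rho$-$t$-$x^A$-coordinate system and apply the asymptotic formulas of Propositions \ref{thm.g} and \ref{thm.f_grad}, tracking the powers of $\rho$ and $f$ carefully. Since $\xi = \mc{O}(\zeta)$, by Definition \ref{def.O_scr_pre} we have $\partial_\rho^j \partial_{a_1}\cdots\partial_{a_m}\xi \in \mc{O}(\rho^{-j-m}\zeta)$ for all $j, m$; this will be used repeatedly.

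For the first bound, I write $\Box \xi = g^{\alpha\beta}\partial_\alpha\partial_\beta \xi - g^{\alpha\beta}\Gamma^\gamma_{\alpha\beta}\partial_\gamma\xi$ and expand each factor. From the inverse-metric expansions \eqref{eq.g_inv}, the principal-part contributions $g^{\rho\rho}\partial_\rho^2 \xi$ and $g^{ab}\partial_a\partial_b\xi$ are products of $\mc{O}(\rho^2)$ with $\mc{O}(\rho^{-2}\zeta)$, yielding $\mc{O}(\zeta)$, while the cross term $g^{\rho a}\partial_\rho\partial_a\xi$ is of higher order in $\rho$. For the Christoffel contributions, the most singular components in \eqref{eq.Gamma} are $\Gamma^\rho_{\rho\rho}$, $\Gamma^\rho_{ab}$, and $\Gamma^a_{\rho b}$, each of size $\mc{O}(\rho^{-1})$; pairing these with the appropriate factors from $g^{\alpha\beta}$ and $\partial_\gamma\xi$ gives, for example, $g^{\rho\rho}\Gamma^\rho_{\rho\rho}\partial_\rho\xi = \mc{O}(\rho^2)\cdot\mc{O}(\rho^{-1})\cdot\mc{O}(\rho^{-1}\zeta) = \mc{O}(\zeta)$, and the analogous checks succeed for all remaining index combinations.

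For the second bound, I expand $\nabla^\alpha f \nabla_\alpha \xi = (\grad f)^\mu \partial_\mu \xi$ using \eqref{eq.f_grad}. The $\partial_\rho$-component of $\grad f$ is $\mc{O}(f\rho)$, the $\partial_t$-component is $\mc{O}(f^2\rho) + \mc{O}(f\rho^3)$ (both absorbed into $\mc{O}(f\rho)$ since $f, \rho \ll 1$ and $\rho \leq f$ by \eqref{eq.f_trivial}), and the $\partial_{x^A}$-components are $\mc{O}(f^2\rho^3)$. Pairing each against $\partial_\mu\xi \in \mc{O}(\rho^{-1}\zeta)$ produces a total of $\mc{O}(f\zeta)$.

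Finally, to upgrade these pointwise bounds to the full $\mc{O}$-estimate required by Definition \ref{def.O_scr_pre}, I observe that the class $\mc{O}$ is closed under differentiation (with the expected weight loss), under finite sums, and under products via the Leibniz rule, since $|\partial_\rho^j \partial^m (h_1 h_2)| \lesssim \sum_{k+l=j,\, p+q=m} \rho^{-k-p}\zeta_1 \cdot \rho^{-l-q}\zeta_2 = \rho^{-j-m}\zeta_1\zeta_2$ whenever $h_i \in \mc{O}(\zeta_i)$. The metric and Christoffel asymptotics in Proposition \ref{thm.g} hold to all orders of differentiation, as ensured by \eqref{eq.Gamma_deriv} together with the higher-order analogues coming from the boundedness assumption \eqref{eq.aads_pseudo_bdd} on $\bar g$ and the construction of $g$ in Definition \ref{def.aads}. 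The main bookkeeping obstacle is organizing the $\rho$- and $f$-power accounting cleanly across each term above; no further geometric input beyond Propositions \ref{thm.g} and \ref{thm.f_grad} is needed.
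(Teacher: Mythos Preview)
Your proposal is correct and follows essentially the same approach as the paper: expand $\Box\xi$ and $\nabla^\alpha f\,\nabla_\alpha\xi$ in $\rho$-$t$-$x^A$-coordinates and count powers of $\rho$ and $f$ using Proposition~\ref{thm.g} and the derivatives of $f$. The only cosmetic difference is that for the second bound the paper writes $\nabla^\alpha f\,\nabla_\alpha\xi = g^{\alpha\beta}\partial_\alpha f\,\partial_\beta\xi$ and appeals directly to \eqref{eq.f_deriv}, whereas you route through the already-assembled expansion \eqref{eq.f_grad} of $\grad f$; your added remark on closure of the $\mc{O}$-class under products and derivatives is a point the paper leaves implicit.
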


\begin{proof}
For the first bound in \eqref{eq.error_est}, we write
\[
\Box \xi = g^{\alpha\beta} ( \partial_\alpha \partial_\beta \xi - \Gamma^\mu_{ \alpha \beta } \partial_\mu \xi ) \text{.}
\]
Using \eqref{eq.g}, \eqref{eq.g_inv}, \eqref{eq.Gamma}, and the definition of $\mc{O} ( \zeta )$, we obtain that
\begin{align*}
\Box \xi = \mc{O} ( \rho^2 ) [ \mc{O} ( \rho^{-2} \zeta ) + \mc{O} ( \rho^{-1} ) \cdot \mc{O} ( \rho^{-1} \zeta ) ] = \mc{O} ( \zeta ) \text{.}
\end{align*}
Similarly, for the remaining bound, we have
\[
\nabla^\alpha f \nabla_\alpha \xi = g^{\alpha\beta} \partial_\alpha f \partial_\beta \xi \text{.}
\]
Applying \eqref{eq.g_inv}, \eqref{eq.f_deriv}, and the definition of $\mc{O} ( \zeta )$, we obtain
\begin{align*}
\nabla^\alpha f \nabla_\alpha \xi &= \mc{O} ( \rho^2 ) \cdot \mc{O} ( f \rho^{-1} ) \cdot \mc{O} ( \rho^{-1} \zeta ) = \mc{O} ( f \zeta ) \text{.} \qedhere
\end{align*}
\end{proof}

\subsection{Pseudoconvexity} \label{sec.aads_pseudoconvex}

In this section, we will examine when the level sets of $f$ are pseudoconvex.
First, we recall the geometric definition of pseudoconvexity:

\begin{definition} \label{def.pseudoconvex}
Let $\Sigma \subseteq \mc{M}$ be a smooth hypersurface, and let $H$ be a smooth function defined on a neighborhood of $\mc{M}$ such that $\Sigma$ is precisely the level set $\{ H = 0 \}$.
We say that $\Sigma$ is \emph{pseudoconvex} (with respect to $\Box$ and the direction of increasing $H$) iff for any null vector field $X$ tangent to $\Sigma$, we have
\[ \nabla_{X X} H := \nabla^2 H (X, X) < 0 \text{.} \]
\end{definition}

Roughly, $\Sigma$ is pseudoconvex iff $-H$ is convex with respect to all tangent null directions.
Geometrically, this states that any null geodesic which intersects a point $P$ of $\Sigma$ tangentially will lie in $\{ H < 0 \}$ near $P$.
Note this definition is independent of the choice of $H$, as long as the side in which $H > 0$ does not change.

The following characterization of pseudoconvexity, which is an immediate consequence of Definition \ref{def.pseudoconvex}, will be simpler for computational purposes:

\begin{proposition} \label{thm.pseudoconvex_ex}
Let $\Sigma$ and $H$ be as in Definition \ref{def.pseudoconvex}.
Suppose there exists a smooth function $w$ such that the projection of $- (\nabla^2 H + w \cdot g)$ to $\Sigma$ is positive-definite.
Then, $\Sigma$ is pseudoconvex (with respect to the direction of increasing $H$).
\end{proposition}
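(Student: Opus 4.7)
The plan is to verify the pseudoconvexity criterion directly from the definition by exploiting the null condition $g(X,X) = 0$, which makes the $w \cdot g$ term in the hypothesis inert when restricted to tangential null directions.

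More concretely, let $X$ be any nonzero null vector field tangent to $\Sigma$, so by definition $g(X,X) = 0$. Since $X$ is a tangent vector at a point of $\Sigma$, the positive-definiteness of the projection of $-(\nabla^2 H + w \cdot g)$ to $\Sigma$ gives
\[
-\bigl[ \nabla^2 H(X,X) + w \cdot g(X,X) \bigr] > 0.
\]
Substituting $g(X,X) = 0$ collapses this immediately to $\nabla^2 H(X,X) < 0$, which is exactly the condition in Definition \ref{def.pseudoconvex}.

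The only subtlety worth mentioning is that the hypothesis speaks of the \emph{projection} of the symmetric $2$-tensor $-(\nabla^2 H + w \cdot g)$ onto $\Sigma$, i.e., its restriction to pairs of tangent vectors; this is exactly the object whose positivity we need to evaluate on tangential null $X$, so no extra normal component ever enters. There is no real obstacle here: the point of the proposition is simply to observe that one is free to add any multiple of $g$ to $\nabla^2 H$ without affecting the pseudoconvexity test on null tangent vectors, and this added flexibility is what will later make it possible to establish positive-definiteness by hand (via a convenient choice of $w$) in the explicit asymptotic computations for the foliation defined by $f$.
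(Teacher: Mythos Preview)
Your proof is correct and is precisely the immediate argument the paper has in mind; the paper does not even write out a proof, simply noting that the proposition is an immediate consequence of Definition~\ref{def.pseudoconvex}.
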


\subsubsection{Adapted Frames}

To measure this pseudoconvexity, we define an orthonormal frame (with respect to $g$) adapted to $f$ in the following manner:
\begin{itemize}
\item The first such vector field is the unit normal to the level sets of $f$,
\begin{equation} \label{eq.N_pre}
N := | \nabla^\alpha f \nabla_\alpha f |^{- \frac{1}{2} } \grad f \text{.}
\end{equation}

\item On each level set of $(r, t)$, we fix a local ($g$-)orthonormal frame $E_1, \dots, E_{n-1}$.

\item For the final vector field, we note that both
\begin{equation} \label{eq.T_tilde}
\bar{T} := \partial_t + \tfr f \cos (\tfr t) \cdot \partial_\rho \text{,} \qquad \tilde{T} := \bar{T} - \sum_{ A = 1 }^{n - 1} g ( \bar{T}, E_A ) E_A
\end{equation}
are tangent to the level sets of $f$.
Furthermore, $\tilde{T}$ is normal to the $E_A$'s.
Consequently, we can define our final vector field in our frame to be
\begin{equation} \label{eq.T_pre}
T := | g ( \tilde{T}, \tilde{T} ) |^{-\frac{1}{2}} \tilde{T} \text{.}
\end{equation}
\end{itemize}

In addition, we can make a convenient computational simplification regarding the frame elements $E_A$.
By applying a bounded linear change of a bounded coordinate system $x^A$, we obtain coordinates $y^A$ for which the vector fields $\partial_{ y^A }$ are orthonormal at a single point of $\mc{M}$.
Moreover, the bounded coordinate assumption from Definition \ref{def.aads_horizontal} ensures that this transformation is bounded independently of the choice of $x^A$.
Next, note from \eqref{eq.aads} and \eqref{eq.aads_asymp} that $g_{A B} = \rho^{-2} \mathring{g}_{A B} + \mc{O} (1)$.
Then, by another linear change of coordinates (bounded as long as $\rho \ll_{n, \tfr} 1$), the new coordinates $y^A$ can be chosen to satisfy $E_A = r^{-1} \partial_{ y^A }$ at a single point.

Therefore, we can enlarge our class of bounded coordinate systems such that for each $P \in \mc{M}$, there is a bounded coordinate system satisfying
\begin{equation} \label{eq.EA}
E_A |_P = \rho \partial_{ x^A } |_P \text{.}
\end{equation}
Since we will only be engaging in pointwise tensorial computations, we can, for simplicity, assume that \eqref{eq.EA} holds at each point.

\begin{proposition} \label{thm.onf}
Suppose $f \ll_{n, \tfr} 1$.
Then:
\begin{equation} \label{eq.TN} \begin{split}
N &:= ( 1 - \tfr^2 f^2 )^{ - \frac{1}{2} } \rho \\
&\qquad \cdot \left\{ [ 1 + \mc{O} ( \rho^2 ) ] \partial_\rho + [ \tfr f \cos (\tfr t) + \mc{O} ( \rho^2 ) ] \partial_t + \sum_{ A = 1 }^{n - 1} \mc{O} ( f \rho^2 ) \cdot \partial_{ x^A } \right\} \text{,} \\
T &:= ( 1 - \tfr^2 f^2 )^{ - \frac{1}{2} } \rho \left[ 1 + \frac{1}{2} ( \bar{g}_{tt} - \tfr^2 ) \rho^2 + \mc{O} ( f \rho^2 ) \right] \\
&\qquad \cdot \left\{ \partial_t + \tfr f \cos (\tfr t) \cdot \partial_\rho - \rho^2 \sum_{ A = 1 }^{n - 1} [ \bar{g}_{t A} + \mc{O} ( f \rho ) ] \cdot \partial_{ x^A } \right\} \text{.}
\end{split} \end{equation}
Moreover, we have that
\begin{equation} \label{eq.vf_TN} \begin{split}
( 1 - \tfr^2 f^2 )^\frac{1}{2} \rho \cdot \partial_\rho &= [ 1 + \mc{O} ( \rho^2 ) ] N - [ \tfr f \cos (\tfr t) + \mc{O} ( \rho^2 ) ] T \\
&\qquad + \sum_{ A = 1 }^{n - 1} \mc{O} ( \rho^2 ) \cdot E_A \text{,} \\
( 1 - \tfr^2 f^2 )^\frac{1}{2} \rho \cdot \partial_t &= [ 1 + \mc{O} ( \rho^2 ) ] T - [ \tfr f \cos (\tfr t) + \mc{O} ( \rho^2 ) ] N \\
&\qquad + \sum_{ A = 1 }^{n - 1} \mc{O} ( \rho^2 ) \cdot E_A \text{.}
\end{split} \end{equation}
\end{proposition}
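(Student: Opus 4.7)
The plan is to read off each vector field directly from its definition and substitute the metric, gradient, and Hessian asymptotics from Propositions \ref{thm.g}, \ref{thm.rho_grad}, and \ref{thm.f_grad}. At each step, care must be taken to express the answer in a form where the prefactor $(1-\tfr^2 f^2)^{-1/2}\rho$ is visible and the error terms carry the correct weight in both $f$ and $\rho$.

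For the unit normal $N$, I substitute the expansion \eqref{eq.f_grad} of $\grad f$ into the definition \eqref{eq.N_pre}, using \eqref{eq.f_grad_sq} to obtain $|\nabla^\alpha f\nabla_\alpha f|^{-1/2}=f^{-1}(1-\tfr^2 f^2)^{-1/2}[1+\mc{O}(\rho^2)]$. The factor of $f$ cancels against that in \eqref{eq.f_grad}, and the stated formula for $N$ drops out.

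For $T$, I first compute $g(\bar{T},E_A)$: using \eqref{eq.EA} to write $E_A=\rho\partial_{x^A}$ at the base point and reading $g_{tA}$, $g_{\rho A}$ from \eqref{eq.g}--\eqref{eq.g_t}, I obtain $g(\bar{T},E_A)=\rho\bar{g}_{tA}+\mc{O}(\rho^2)$. Feeding this into \eqref{eq.T_tilde} gives the displayed component structure of $\tilde{T}$ up to the normalizing scalar $|g(\tilde{T},\tilde{T})|^{-1/2}$. To compute that scalar, I expand
\[
g(\bar{T},\bar{T})=g_{tt}+2\tfr f\cos(\tfr t)\,g_{\rho t}+\tfr^2 f^2\cos^2(\tfr t)\,g_{\rho\rho},
\]
and use the identity $f^2\cos^2(\tfr t)=f^2-\rho^2$ from \eqref{eq.f_trivial} to obtain the crucial cancellation that produces the leading factor $-\rho^{-2}(1-\tfr^2 f^2)$; subtracting $\sum_A[g(\bar{T},E_A)]^2=\mc{O}(\rho^2)$ does not alter this leading behaviour. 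Taylor-expanding $|\cdot|^{-1/2}$ then yields a first correction of $\frac12(\bar{g}_{tt}-\tfr^2-\tfr^2 f^2)/(1-\tfr^2 f^2)\cdot\rho^2$; Taylor expanding $(1-\tfr^2 f^2)^{-1}=1+\mc{O}(f^2)$ recasts this as $\frac12(\bar{g}_{tt}-\tfr^2)\rho^2$ plus an $\mc{O}(f^2\rho^2)$ error that is absorbed into $\mc{O}(f\rho^2)$ because $f\ll 1$.

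For \eqref{eq.vf_TN}, I use the orthonormal decomposition $V=g(V,N)\,N-g(V,T)\,T+\sum_A g(V,E_A)\,E_A$ with $V\in\{\rho\partial_\rho,\rho\partial_t\}$. The coefficient of $N$ is read off painlessly from $g(V,N)=|\grad f|^{-1}V(f)$ using \eqref{eq.f_deriv}; the $E_A$-coefficients are $\mc{O}(\rho^2)$ because $g_{\rho A}=\mc{O}(\rho)$ and $g_{tA}-\bar{g}_{tA}=\mc{O}(\rho)$; and the $T$-coefficient comes from contracting $V$ against $\tilde{T}$ using Proposition \ref{thm.g} and multiplying by the scalar computed above. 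Multiplying the resulting identities through by $(1-\tfr^2 f^2)^{1/2}\rho$ produces \eqref{eq.vf_TN}. The main obstacle throughout is the disciplined bookkeeping of error terms relative to the \emph{two} independent small scales $\rho$ and $f$: since $f$ can remain of order $\mc{O}(1)$ while $\rho\searrow 0$ at the endpoints $t\in\{0,\tfr^{-1}\pi\}$, corrections of the form $\tfr^2 f^2\rho^2$ cannot simply be estimated by $\mc{O}(\rho^2)$, and must instead be absorbed into the $\mc{O}(f\rho^2)$ error by exploiting the smallness of $f$.
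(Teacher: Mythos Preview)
Your derivations of the expansions for $N$ and $T$ follow the same path as the paper: substitute \eqref{eq.f_grad} and \eqref{eq.f_grad_sq} into \eqref{eq.N_pre} for $N$, and for $T$ compute $g(\bar T,E_A)$ and $g(\tilde T,\tilde T)$ from Proposition~\ref{thm.g} before normalizing. The algebra and error bookkeeping you describe are correct (one small quibble: your closing remark that ``$\tfr^2 f^2\rho^2$ cannot simply be estimated by $\mc O(\rho^2)$'' is overcautious, since the standing hypothesis $f\ll 1$ already gives $f^2\rho^2\le\rho^2$; nonetheless your absorption into $\mc O(f\rho^2)$ is also valid and is what the stated formula requires).

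For \eqref{eq.vf_TN} you take a genuinely different route. The paper rewrites \eqref{eq.TN} as a $2\times 2$ matrix relation between $(N,T)$ and $(\partial_\rho,\partial_t)$ modulo $E_A$-terms, and then simply inverts that matrix; the inverse has the same structure with off-diagonal sign flipped, and \eqref{eq.vf_TN} drops out in one step. You instead use the orthonormal expansion $V=g(V,N)\,N-g(V,T)\,T+\sum_A g(V,E_A)\,E_A$ and compute each pairing directly from Proposition~\ref{thm.g} and \eqref{eq.f_deriv}. Both approaches are valid. The paper's matrix inversion is slightly quicker because it recycles \eqref{eq.TN} wholesale; your approach is more geometric and makes the individual coefficients transparent, at the cost of computing $g(\rho\partial_\rho,\tilde T)$ and $g(\rho\partial_t,\tilde T)$ from scratch.
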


\begin{proof}
For the first identity in \eqref{eq.TN}, we apply \eqref{eq.f_grad} and \eqref{eq.f_grad_sq} to \eqref{eq.N_pre}:
\begin{equation} \label{eq.TN_10} \begin{split}
N &= [ 1 - \tfr^2 f^2 + \mc{O} ( \rho^2 ) ]^{- \frac{1}{2}} \\
&\qquad \cdot \left\{ [ \rho + \mc{O} ( \rho^3 ) ] \partial_\rho + [ \tfr f \rho \cos (\tfr t) + \mc{O} ( \rho^3 ) ] \partial_t + \sum_{ A = 1 }^{n - 1} \mc{O} ( f \rho^3 ) \cdot \partial_{ x^A } \right\} \text{.}
\end{split} \end{equation}
Recalling \eqref{eq.f_deriv_trivial}, we recover the asymptotic expansion
\begin{equation} \label{eq.TN_11}
[ 1 - \tfr^2 f^2 + \mc{O} ( \rho^2 ) ]^{- \frac{1}{2}} = ( 1 - \tfr^2 f^2 )^{-\frac{1}{2}} [ 1 + \mc{O} ( \rho^2 ) ] \text{,}
\end{equation}
and combining \eqref{eq.TN_10} with \eqref{eq.TN_11} yields the desired equality.

The corresponding expansion for $T$ requires more care, since we wish to obtain higher order asymptotics than for $N$.
By \eqref{eq.g}, \eqref{eq.g_t}, and \eqref{eq.T_tilde}, we have
\begin{equation} \label{eq.TN_20}
\tilde{T} = \partial_t + \tfr f \cos (\tfr t) \cdot \partial_\rho - \rho^2 \sum_{ A = 1 }^{n - 1} [ \bar{g}_{t A} + \mc{O} ( f \rho ) ] \partial_{ x^A } \text{.}
\end{equation}
Combining \eqref{eq.TN_20} with \eqref{eq.g}, \eqref{eq.g_t}, \eqref{eq.f_trivial}, and \eqref{eq.f_deriv_trivial}, we then see that
\begin{equation} \label{eq.TN_21} \begin{split}
g (\tilde{T}, \tilde{T}) &= \rho^{-2} [ - 1 + \tfr^2 f^2 + ( \bar{g}_{tt} - \tfr^2 ) \rho^2 + \mc{O} ( f \rho^2 ) ] \\
&= \rho^{-2} ( -1 + \tfr^2 f^2 ) [ 1 - ( \bar{g}_{tt} - \tfr^2 ) \rho^2 + \mc{O} ( f \rho^2 ) ] \text{.}
\end{split} \end{equation}
Recalling again \eqref{eq.f_deriv_trivial}, the above implies
\begin{equation} \label{eq.TN_22}
T = \rho ( 1 - \tfr^2 f^2 )^{- \frac{1}{2} } \left[ 1 + \frac{1}{2} ( \bar{g}_{tt} - \tfr^2 ) \rho^2 + \mc{O} ( f \rho^2 ) \right] \tilde{T} \text{,}
\end{equation}
and the second part of \eqref{eq.TN} follows now from the above and \eqref{eq.TN_20}.

Finally, for \eqref{eq.vf_TN}, we use \eqref{eq.f_deriv_trivial} and \eqref{eq.EA} to rewrite \eqref{eq.TN} as
\begin{equation} \label{eq.vf_TN_0} \begin{split}
\left[ \begin{matrix} N + \sum_{ A = 1 }^{n - 1} \mc{O} ( \rho^2 ) \cdot E_A \\ T + \sum_{ A = 1 }^{n - 1} \mc{O} ( \rho^2 ) \cdot E_A \end{matrix} \right] &= ( 1 - \tfr^2 f^2 )^{ - \frac{1}{2} } \rho \mathbf{A} \left[ \begin{matrix} \partial_\rho \\ \partial_t \end{matrix} \right] \text{,} \\
\mathbf{A} &:= \left[ \begin{matrix} 1 & \tfr f \cos (\tfr t) \\ \tfr f \cos (\tfr t) & 1 \end{matrix} \right] + \mc{O} ( \rho^2 ) \text{,}
\end{split} \end{equation}
and we invert this relation.
In particular, observe (using also \eqref{eq.f_deriv_trivial}) that
\begin{equation} \label{eq.vf_TN_1}
\mathbf{A}^{-1} = ( 1 - \tfr^2 f^2 )^{-1} \left\{ \left[ \begin{matrix} 1 & - \tfr f \cos (\tfr t) \\ - \tfr f \cos (\tfr t) & 1 \end{matrix} \right] + \mc{O} ( \rho^2 ) \right\} \text{.}
\end{equation}
Combining \eqref{eq.vf_TN_0} and \eqref{eq.vf_TN_1} and applying \eqref{eq.f_deriv_trivial} results in \eqref{eq.vf_TN}.
\end{proof}

\subsubsection{The Frame Expansion of $\nabla^2 f$}

The next step is to compute the Hessian of $f$ in terms of the aforementioned orthonormal frames.

\begin{proposition} \label{thm.f_hessian}
Suppose $f \ll_{n, \tfr} 1$.
Then:
\begin{equation} \label{eq.f_hessian} \begin{split}
\nabla_{E_A E_B} f &= - f \delta_{A B} + f \rho^2 ( \bar{g}_{A B} - \delta_{A B} ) + \mc{O} ( f \rho^3 ) \text{,} \\
\nabla_{T E_A} f &= f \rho^2 \bar{g}_{t A} + \mc{O} ( f^2 \rho^2 ) \text{,} \\
\nabla_{T T} f &= f + ( 1 + \bar{g}_{tt} + \tfr^2 ) f \rho^2 + \mc{O} ( f^3 \rho^2 ) \text{,} \\
\nabla_{T N} f &= \mc{O} ( f^2 \rho^2 ) \text{,} \\
\nabla_{N N} f &= f + \mc{O} ( f^3 ) \text{,} \\
\nabla_{N E_A} f &= \mc{O} ( f^2 \rho^2 ) \text{.}
\end{split} \end{equation}
In particular,
\begin{equation} \label{eq.f_box}
\Box f = - (n - 1) f + \mc{O} ( f^3 ) \text{.}
\end{equation}
\end{proposition}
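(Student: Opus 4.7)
The plan is to convert the coordinate components of $\nabla^2 f$ already computed in Proposition \ref{thm.f_grad} (namely \eqref{eq.f_hessian_pre}) into the orthonormal frame $(N, T, E_1, \dots, E_{n-1})$ built in Proposition \ref{thm.onf}, and then to take the Lorentzian trace to obtain $\Box f$. The proof is a direct asymptotic computation; the only additional input beyond Propositions \ref{thm.g}, \ref{thm.f_grad}, and \ref{thm.onf} is the elementary inequality $\rho \le f$ (and more generally $f \rho^2 = \mc{O}(f^3)$ in the sense of Remark \ref{def.O_scr}), which follows from $\rho = f \sin(\tfr t)$ and is what lets several subleading correction terms be absorbed into the final error.

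For the purely spherical block $\nabla_{E_A E_B} f$, I would use the convention \eqref{eq.EA}, namely $E_A|_P = \rho\, \partial_{ x^A }|_P$ at a given point $P$, after enlarging the class of bounded coordinates. The orthonormality requirement $g(E_A, E_B)|_P = \delta_{AB}$, combined with the expansion $g_{AB} = \rho^{-2} \mathring{g}_{AB} + \bar{g}_{AB} + \mc{O}(\rho)$ from \eqref{eq.g}, forces $\mathring{g}_{AB}|_P = \delta_{AB} - \rho^2 \bar{g}_{AB}|_P + \mc{O}(\rho^3)$. Substituting this into the fourth line of \eqref{eq.f_hessian_pre} and multiplying by $\rho^2$ yields the claimed identity $\nabla_{E_A E_B} f = -f \delta_{AB} + f \rho^2 ( \bar{g}_{AB} - \delta_{AB} ) + \mc{O}( f \rho^3 )$. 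For the remaining components $\nabla_{NN} f$, $\nabla_{TT} f$, $\nabla_{TN} f$, $\nabla_{NE_A} f$, and $\nabla_{TE_A} f$, I would expand each of $N$ and $T$ via \eqref{eq.TN} as a linear combination of $\partial_\rho$, $\partial_t$, and the $\partial_{ x^A }$'s, and then pair these expansions bilinearly against the coordinate components of $\nabla^2 f$ listed in \eqref{eq.f_hessian_pre}. A useful organizing observation is that $T$ is tangent to the level sets of $f$ (since $\bar{T} f = 0$ by \eqref{eq.f_deriv} and $E_A f = 0$ because $f$ is independent of $x^A$), so $\nabla_{TN} f$ is expected to vanish at leading order, with the $\mc{O}(f^2 \rho^2)$ remainder coming from the mismatch between $N$ and $\grad f$ itself. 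The main care is required in $\nabla_{TT} f$, where one must retain both the $\tfr^2 f^2$ contribution from $\nabla_{tt} f$ and the $\bar{g}_{tt}$ contribution produced by the mixing of $\partial_t$ with $\tfr f \cos(\tfr t)\, \partial_\rho$ inside $T$; all other terms land in the stated error classes by the elementary bounds \eqref{eq.f_trivial} and \eqref{eq.f_deriv_trivial}.

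For the trace identity, the Lorentzian orthonormality of $(T, N, E_A)$ (with $T$ timelike and the rest spacelike) gives
\[ \Box f = - \nabla_{TT} f + \nabla_{NN} f + \sum_{A = 1}^{n - 1} \nabla_{E_A E_A} f \text{.} \]
Substituting the identities just established, the leading terms collapse as $-f + f - (n-1) f = -(n-1) f$, and the remaining contributions are all of the form $c \cdot f \rho^2$ with $c$ depending on $\bar{g}_{tt}$, the trace of $\bar{g}$ on the $E_A$'s, and $\tfr^2$. A short direct check, using $f \rho^2 = \rho^3 / \sin(\tfr t)$ and $\rho \le f$, shows that $f \rho^2 = \mc{O}(f^3)$ in the stronger sense of Remark \ref{def.O_scr} (the derivative bounds are verified by differentiating this explicit expression), so all such terms are absorbed, yielding $\Box f = -(n-1) f + \mc{O}(f^3)$. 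The only real obstacle is bookkeeping: correctly classifying each correction into the appropriate error class among $\mc{O}(f \rho^3)$, $\mc{O}(f^2 \rho^2)$, $\mc{O}(f^3 \rho^2)$, and $\mc{O}(f^3)$ so as not to over- or under-estimate the statements in \eqref{eq.f_hessian}; no new geometric input is required.
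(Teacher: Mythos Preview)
Your proposal is correct and follows essentially the same approach as the paper: derive the relation $\mathring{g}_{AB} = \delta_{AB} - \rho^2 \bar{g}_{AB} + \mc{O}(\rho^3)$ from orthonormality and \eqref{eq.g} (this is exactly \eqref{eq.g_ring_frame} in the paper), then bilinearly expand each frame component using \eqref{eq.TN} and \eqref{eq.f_hessian_pre}, and finally trace to get \eqref{eq.f_box}. One small quibble with your heuristic for $\nabla_{TN} f$: since $N$ is exactly proportional to $\grad f$, there is no ``mismatch'' between them; the vanishing at leading order is rather because $\nabla^2 f(T, \grad f) = \tfrac{1}{2} T(|\nabla f|^2)$ and $T f = 0$ together with \eqref{eq.f_grad_sq} force the leading $f^2$ term to drop out---but this does not affect the actual computation you outline.
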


\begin{proof}
First, from \eqref{eq.EA} and the identity for $\nabla_{A B} f$ in \eqref{eq.f_hessian_pre}, we have
\begin{equation} \label{eq.f_hessian_00}
\nabla_{E_A E_B} f = - f \mathring{g}_{A B} - f \rho^2 \mathring{g}_{AB} + \mc{O} ( f \rho^3 ) \text{.}
\end{equation}
Moreover, from \eqref{eq.g} and \eqref{eq.EA}, we see that
\begin{equation} \label{eq.g_ring_frame}
\mathring{g}_{A B} = \delta_{A B} - \rho^2 \bar{g}_{A B} + \mc{O} ( \rho^3 ) \text{.}
\end{equation}
Combining \eqref{eq.f_hessian_00} and \eqref{eq.g_ring_frame} results in the identity for $\nabla_{ E_A E_B } f$ in \eqref{eq.f_hessian}.

Next, by \eqref{eq.EA} and \eqref{eq.TN},
\begin{equation} \label{eq.f_hessian_10} \begin{split}
\nabla_{N E_A} f &= (1 - \tfr^2 f^2)^{- \frac{1}{2} } \mc{I}_{ N E_A } \text{,} \\
\mc{I}_{ N E_A } &:= [ 1 + \mc{O} ( \rho^2 ) ] \rho^2 \nabla_{ \rho A } f + [ \tfr f \cos (\tfr t) + \mc{O} ( \rho^2 ) ] \rho^2 \nabla_{t A} f \\
&\qquad + \sum_{ B = 1 }^{n - 1} \mc{O} ( f \rho^2 ) \cdot \rho^2 \nabla_{A B} f \text{,}
\end{split} \end{equation}
so that applications of \eqref{eq.f_deriv_trivial} and \eqref{eq.f_hessian_pre} yield
\begin{equation} \label{eq.f_hessian_11} \begin{split}
\mc{I}_{N E_A} &= \mc{O} (1) \cdot \rho^2 \nabla_{ \rho A } f + \mc{O} ( f ) \cdot \rho^2 \nabla_{t A} f + \sum_{ B = 1 }^{n - 1} \mc{O} ( f^2 \rho^2 ) \cdot \rho^2 \nabla_{A B} f \\
&= [ \mc{O} ( f^2 \rho^2 ) + \mc{O} ( f \rho^3 ) ] + \mc{O} ( f^2 \rho^3 ) + \mc{O} ( f^2 \rho^2 ) \\
&= \mc{O} ( f^2 \rho^2 ) \text{.}
\end{split} \end{equation}
Applying \eqref{eq.f_deriv_trivial}, \eqref{eq.f_hessian_10}, and \eqref{eq.f_hessian_11} yields the bound for $\nabla_{N E_A} f$ in \eqref{eq.f_hessian}.

The proof for $\nabla_{T E_A} f$ is similar, but we require a more careful expansion in this case.
Applying \eqref{eq.EA} and \eqref{eq.TN}, we can write
\begin{equation} \label{eq.f_hessian_20} \begin{split}
\nabla_{T E_A} f &= ( 1 - \tfr^2 f^2 )^{ - \frac{1}{2} } \left[ 1 + \frac{1}{2} ( \bar{g}_{tt} - \tfr^2 ) \rho^2 + \mc{O} ( f \rho^2 ) \right] \mc{I}_{T E_A} \text{,} \\
\mc{I}_{T E_A} &:= \rho^2 \nabla_{t A} f + \tfr f \cos (\tfr t) \cdot \rho^2 \nabla_{\rho A} f - \rho^2 \sum_{ B = 1 }^{n - 1} [ \bar{g}_{t B} + \mc{O} ( f \rho ) ] \rho^2 \nabla_{A B} f \text{.}
\end{split} \end{equation}
Appealing to \eqref{eq.f_deriv_trivial}, \eqref{eq.f_hessian_pre}, and \eqref{eq.g_ring_frame}, we can expand $\mc{I}$ as
\begin{equation} \label{eq.f_hessian_21} \begin{split}
\mc{I}_{T E_A} &= \mc{O} ( f \rho^3 ) + \mc{O} ( f ) [ \mc{O} ( f^2 \rho^2 ) + \mc{O} ( f \rho^3 ) ] \\
&\qquad - f \rho^2 \sum_{ B = 1 }^{n - 1} [ \bar{g}_{t B} + \mc{O} ( f \rho ) ] [ - \mathring{g}_{A B} + \mc{O} ( \rho^2 ) ] \\
&= \mc{O} ( f^2 \rho^2 ) + f \rho^2 \sum_{ B = 1 }^{n - 1} [ \bar{g}_{t B} + \mc{O} ( f \rho ) ] [ \delta_{A B} + \mc{O} ( \rho^2 ) ] \\
&= f \rho^2 \bar{g}_{t A} + \mc{O} ( f^2 \rho^2 ) \text{.}
\end{split} \end{equation}
Combining \eqref{eq.f_hessian_20} with \eqref{eq.f_hessian_21} yields, as desired,
\begin{equation} \label{eq.f_hessian_22}
\nabla_{T E_A} f = [ 1 + \mc{O} ( f^2 ) ] [ 1 + \mc{O} ( \rho^2 ) ] \mc{I}_{T E_A} = f \rho^2 \bar{g}_{t A} + \mc{O} ( f^2 \rho^2 ) \text{.}
\end{equation}

Next, by \eqref{eq.TN} in conjunction with \eqref{eq.f_trivial} and \eqref{eq.f_deriv_trivial}, we can write
\begin{equation} \label{eq.f_hessian_30} \begin{split}
\nabla_{T T} f &= ( 1 - \tfr^2 f^2 )^{ -1 } [ 1 + ( \bar{g}_{tt} - \tfr^2 ) \rho^2 + \mc{O} ( f \rho^2 ) ] \mc{I}_{T T} \text{,} \\
\mc{I}_{T T} &:= \rho^2 \nabla_{t t} f + 2 \tfr f \cos (\tfr t) \cdot \rho^2 \nabla_{\rho t} f + \tfr^2 ( f^2 - \rho^2 ) \cdot \rho^2 \nabla_{\rho \rho} f \\
&\qquad + \mc{O} ( \rho^2 ) \cdot \sum_{ A = 1 }^{n - 1} \rho^2 \nabla_{t A} f + \mc{O} ( f \rho^2 ) \cdot \sum_{ A = 1 }^{n - 1} \rho^2 \nabla_{ \rho A } f \\
&\qquad + \mc{O} ( \rho^4 ) \cdot \sum_{ A, B = 1 }^{n - 1} \rho^2 \nabla_{A B} f \text{.}
\end{split} \end{equation}
From \eqref{eq.f_trivial}, \eqref{eq.f_deriv_trivial}, and \eqref{eq.f_hessian_pre}, we then have
\begin{equation} \label{eq.f_hessian_31} \begin{split}
\mc{I}_{T T} &= f [ 1 + ( 1 - \tfr^2 ) \rho^2 + 2 \tfr^2 f^2 ] + 2 \tfr^2 ( f^2 - \rho^2 ) f ( -2 - \rho^2 \bar{g}_{tt} ) \\
&\qquad + \tfr^2 ( f^2 - \rho^2 ) f ( 1 + \rho^2 ) + \mc{O} ( f \rho^3 ) \\
&= f [ 1 - \tfr^2 f^2 + ( 1 + 2 \tfr^2 ) \rho^2 + \mc{O} ( f^2 \rho^2 ) ] \text{.}
\end{split} \end{equation}
Combining \eqref{eq.f_hessian_30} and \eqref{eq.f_hessian_31}, and noting that
\[
( 1 - \tfr^2 f^2 )^{ -1 } [ 1 - \tfr^2 f^2 + ( 1 + 2 \tfr^2 ) \rho^2 + \mc{O} ( f^2 \rho^2 ) ] = 1 + ( 1 + 2 \tfr^2 ) \rho^2 + \mc{O} ( f^2 \rho^2 ) \text{,}
\]
we compute, as desired,
\begin{equation} \label{eq.f_hessian_33} \begin{split}
\nabla_{T T} f &= f [ 1 + ( \bar{g}_{tt} - \tfr^2 ) \rho^2 + \mc{O} ( f \rho^2 ) ] [ 1 + ( 1 + 2 \tfr^2 ) \rho^2 + \mc{O} ( f^2 \rho^2 ) ] \\
&= f + ( 1 + \bar{g}_{tt} + \tfr^2 ) f \rho^2 + \mc{O} ( f^3 \rho^2 ) \text{.}
\end{split} \end{equation}

Similarly, again applying \eqref{eq.f_trivial} and \eqref{eq.TN}, we see that
\begin{equation} \label{eq.f_hessian_40} \begin{split}
\nabla_{T N} f &= ( 1 - \tfr^2 f^2 )^{ -1 } \left[ 1 + \frac{1}{2} ( \bar{g}_{tt} - \tfr^2 ) \rho^2 + \mc{O} ( f \rho^2 ) \right] \mc{I}_{T N} \text{,} \\
\mc{I}_{T N} &:= [ \tfr f \cos (\tfr t) + \mc{O} ( \rho^2 ) ] \rho^2 \nabla_{t t} f + [ \tfr f \cos (\tfr t) + \mc{O} ( f \rho^2 ) ] \rho^2 \nabla_{\rho \rho} f \\
&\qquad + [ 1 + \tfr^2 f^2 + \mc{O} ( \rho^2 ) ] \rho^2 \nabla_{\rho t} f + \sum_{ A = 1 }^{n - 1} \mc{O} ( f \rho^2 ) \cdot \rho^2 \nabla_{t A} f \\
&\qquad + \sum_{ A = 1 }^{n - 1} \mc{O} ( \rho^2 ) \cdot \rho^2 \nabla_{ \rho A } f + \sum_{ A, B = 1 }^{n - 1} \mc{O} ( \rho^4 ) \cdot \rho^2 \nabla_{A B} f \text{.}
\end{split} \end{equation}
Using \eqref{eq.f_deriv_trivial} and \eqref{eq.f_hessian_pre} yields
\begin{equation} \label{eq.f_hessian_41} \begin{split}
\mc{I}_{T N} &= \tfr f \cos (\tfr t) \cdot ( f + 2 \tfr^2 f^3 ) + \tfr f \cos (\tfr t) \cdot f \\
&\qquad + ( 1 + \tfr^2 f^2 ) \cdot [ -2 \tfr f^2 \cos (\tfr t ) ] + \mc{O} ( f^2 \rho^2 ) \\
&= \mc{O} ( f^2 \rho^2 ) \text{.}
\end{split} \end{equation}
Note in particular that the top-order terms in $\mc{I}_{T N}$ (containing $f$ but not $\rho$) cancel.
Combining \eqref{eq.f_hessian_40} and \eqref{eq.f_hessian_41} results in the identity for $\nabla_{T N} f$.

Lastly, for $\nabla_{N N} f$, equations \eqref{eq.f_trivial} and \eqref{eq.TN} yield
\begin{equation} \label{eq.f_hessian_50} \begin{split}
\nabla_{N N} f &= ( 1 - \tfr^2 f^2 )^{ -1 } \mc{I}_{N N} \text{,} \\
\mc{I}_{N N} &:= [ 1 + \mc{O} ( \rho^2 ) ] \rho^2 \nabla_{\rho \rho} f + [ \tfr^2 ( f^2 - \rho^2 ) + \mc{O} ( f \rho^2 ) ] \rho^2 \nabla_{t t} f \\
&\qquad + 2 [ \tfr f \cos (\tfr t) + \mc{O} ( \rho^2 ) ] \rho^2 \nabla_{\rho t} f + \sum_{ A = 1 }^{n - 1} \mc{O} ( f^2 \rho^2 ) \cdot \rho^2 \nabla_{ \rho A } f \\
&\qquad + \sum_{ A = 1 }^{n - 1} \mc{O} ( f^3 \rho^2 ) \cdot \rho^2 \nabla_{ t A } f + \sum_{ A, B = 1 }^{n - 1} \mc{O} ( f^4 \rho^4 ) \cdot \rho^2 \nabla_{ A B } f \text{.}
\end{split} \end{equation}
Applying \eqref{eq.f_trivial}, \eqref{eq.f_deriv_trivial}, and \eqref{eq.f_hessian_pre} results in the expansion
\begin{equation} \label{eq.f_hessian_51} \begin{split}
\mc{I}_{N N} &= ( f + f \rho^2 ) + \tfr^2 ( f^2 - \rho^2 ) ( f + 2 \tfr^2 f^3 ) - 2 \tfr^2 f ( f^2 - \rho^2 ) + \mc{O} ( f^2 \rho^2 ) \\
&= f + \mc{O} ( f^3 ) \text{.}
\end{split} \end{equation}
Combining \eqref{eq.f_hessian_50} and \eqref{eq.f_hessian_51} yields the final equation in \eqref{eq.f_hessian}.

Finally, \eqref{eq.f_box} follows from summing the relevant components in \eqref{eq.f_hessian}.
\end{proof}

\subsubsection{The Pseudoconvexity Criterion}

We are finally prepared to state the main criterion for pseudoconvexity of the level sets of $f$.
This is expressed below purely in terms of the asymptotic geometry at infinity.
Note that (aside from $\mathring{g}$ being static) the crucial point is a certain positivity condition for the term $\bar{g}$.

\begin{definition} \label{def.pseudoconvex_ass}
Given $\zeta \in C^\infty ( \mc{M} )$, we say that the $\zeta$-\emph{pseudoconvex property} holds, with constant $K > 0$, iff the symmetric covariant $2$-tensor\footnote{Recall from \eqref{eq.aads_static} that $\mathring{g}_{t t} \equiv -1$ and $\mathring{g}_{t A} \equiv 0$.}
\begin{equation} \label{eq.Pi}
\Pi_\zeta := - ( \bar{g} + \tfr^2 dt^2 + \zeta \mathring{g} )
\end{equation}
on $( 0, \tfr^{-1} \pi ) \times \mc{S}$ satisfies the bound
\begin{equation} \label{eq.pseudoconvex_ass}
\Pi_\zeta ( X, X ) \geq K [ ( X^t )^2 + \mathring{g}_{A B} X^A X^B ] \text{,}
\end{equation}
for any vector field $X := X^t \partial_t + X^A \partial_{ x^A }$ on $( 0, \tfr^{-1} \pi ) \times \mc{S}$.
\end{definition}

\begin{remark}
The property of being $\zeta$-pseudoconvex is a priori tied to the radial coordinate $r$ defining the hypersurfaces of constant $f$. However, one may check that if one makes a change of radial coordinate $r \mapsto \tilde{r} \left(r\right)$ which preserves the asymptotic form of the metric (\ref{eq.aads}), then the level sets of $\tilde{f}$---defined with respect to $\tilde{r}$---remain pseudoconvex near infinity if those of $f$ are. 
\end{remark}

\begin{remark}
As previously noted in Section \ref{sec:intro_aads}, in the case that $( \mc{M}, g )$ is vacuum, the above $\zeta$-pseudoconvexity property is equivalent to the level sets of $t$ at infinity having uniformly positive curvature (with respect to the metric induced from $\mathring{g}$). We will elaborate on this further in \cite{hol_shao:uc_ads_ns, ASfollowup}.
\end{remark}

The following theorem shows that the positivity of $\Pi_\zeta$ in \eqref{eq.Pi} is the crucial determinant of pseudoconvexity for level sets of $f$ (at least for small $f$-values):

\begin{theorem} \label{thm.pseudoconvex}
Let $\zeta \in C^\infty ( \mc{M} )$, and consider the corresponding function
\begin{equation} \label{eq.w}
w_\zeta := ( f + f \rho^2 + f \rho^2 \zeta ) \in C^\infty ( \mc{M} ) \text{,}
\end{equation}
Then, for $f \ll_{n, \tfr} 1$, we have that
\begin{equation} \label{eq.pseudoconvex} \begin{split}
- ( \nabla^2 f + w_\zeta \cdot g ) (E_A, E_B) &= - f \rho^2 ( \bar{g}_{A B} + \zeta \mathring{g}_{A B} ) + \mc{O} ( f \rho^3 ) \text{,} \\
- ( \nabla^2 f + w_\zeta \cdot g ) (T, E_A) &= - f \rho^2 ( \bar{g}_{t A} + \zeta \mathring{g}_{t A} ) + \mc{O} ( f^2 \rho^2 ) \text{,} \\
- ( \nabla^2 f + w_\zeta \cdot g ) (T, T) &= - f \rho^2 ( \bar{g}_{tt} + \tfr^2 + \zeta \mathring{g}_{t t} ) + \mc{O} ( f^3 \rho^2 ) \text{.}
\end{split} \end{equation}
In particular, if $\Pi_\zeta$ defined in \eqref{eq.Pi} is positive-definite, then $\{ f = \varepsilon \}$ is pseudoconvex (with respect to $\Box$ and the direction of increasing $f$) for $0 < \varepsilon \ll_{n, \tfr} 1$.
\end{theorem}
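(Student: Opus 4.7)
The plan is to establish the expansions in \eqref{eq.pseudoconvex} by direct substitution from Proposition~\ref{thm.f_hessian} and then invoke the criterion of Proposition~\ref{thm.pseudoconvex_ex}. Since $N$ is $g$-normal to the level sets of $f$, the tangent space at a point of $\{f = \varepsilon\}$ is spanned by the $g$-orthonormal frame $\{T, E_1, \dots, E_{n-1}\}$, so pseudoconvexity of $\{f = \varepsilon\}$ reduces to positive-definiteness of the bilinear form $-(\nabla^2 f + w_\zeta \cdot g)$ restricted to this frame.

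The structural observation driving the computation is that $w_\zeta = f + f\rho^2 + f\rho^2 \zeta$ is engineered precisely so that the top-order parts of $\nabla^2 f$ cancel against $w_\zeta \cdot g$ in the adapted frame. Concretely, for the $(E_A, E_B)$-block the $-f \delta_{AB}$ in $\nabla_{E_A E_B} f$ cancels the $+f \delta_{AB}$ arising from $w_\zeta \cdot g(E_A, E_B)$, and collecting the $f\rho^2$ terms---while using \eqref{eq.g_ring_frame} to replace $\delta_{AB}$ by $\mathring{g}_{AB}$ up to $\mc{O}(\rho^2)$---yields the first line of \eqref{eq.pseudoconvex}. For the $(T, E_A)$-block, $g(T, E_A) = 0$ kills the $w_\zeta$-contribution and Proposition~\ref{thm.f_hessian} directly gives the claimed expression, using $\mathring{g}_{t A} \equiv 0$. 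For the $(T, T)$-block, since $g(T, T) = -1$, the $+f$ in $\nabla_{TT} f$ cancels the $-f$ coming from $-w_\zeta$, and the remaining $f\rho^2$ terms combine into $(\bar{g}_{tt} + \tfr^2 - \zeta) f \rho^2 = -(\bar{g}_{tt} + \tfr^2 + \zeta \mathring{g}_{tt}) f\rho^2$ upon using $\mathring{g}_{tt} \equiv -1$.

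Once \eqref{eq.pseudoconvex} is in hand, the pseudoconvexity assertion follows by a size comparison. On the frame $\{T, E_A\}$ the leading-order matrix of $-(\nabla^2 f + w_\zeta \cdot g)$ agrees to order $f\rho^2$ with the contraction of $\Pi_\zeta$ in that frame; since $T$ coincides with $\rho \partial_t$ and $E_A$ with $\rho \partial_{x^A}$ at leading order by \eqref{eq.TN} and \eqref{eq.EA}, the implicit $\rho^2$ factor from this contraction is exactly compensated by the $\rho^{-2}$ built into the weighting $[(X^t)^2 + \mathring{g}_{AB} X^A X^B]$ of the hypothesis \eqref{eq.pseudoconvex_ass}. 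Writing a tangent vector as $X = a T + b^B E_B$, this yields a uniform coercivity bound of the form $-(\nabla^2 f + w_\zeta \cdot g)(X, X) \gtrsim K f \rho^2 ( a^2 + \delta_{AB} b^A b^B )$ at leading order, while the error terms $\mc{O}(f\rho^3)$, $\mc{O}(f^2\rho^2)$, and $\mc{O}(f^3\rho^2)$ are each $o(f\rho^2)$ as $f \to 0$ (using $\rho \leq f$ from \eqref{eq.f_trivial}). Thus for $\varepsilon \ll_{n, \tfr} 1$ the restricted form is positive-definite, and Proposition~\ref{thm.pseudoconvex_ex} delivers the pseudoconvexity.

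The main obstacle is bookkeeping rather than substance: one must match the positivity hypothesis \eqref{eq.pseudoconvex_ass}---phrased in the coordinate frame $(\partial_t, \partial_{x^A})$ with an auxiliary Riemannian weighting---with the pointwise positivity required in the $g$-orthonormal frame $(T, E_A)$. The identification of $T$ with $\rho \partial_t$ and $E_A$ with $\rho \partial_{x^A}$ at leading order, together with a careful tracking of $\rho$-powers so that the $f\rho^2$-order terms in \eqref{eq.pseudoconvex} dominate the error contributions uniformly across the frame components, is what makes this matching work.
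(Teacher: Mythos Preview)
Your proposal is correct and follows essentially the same approach as the paper: the identities \eqref{eq.pseudoconvex} are obtained by substituting the Hessian expansions from Proposition~\ref{thm.f_hessian} into $-(\nabla^2 f + w_\zeta \cdot g)$ evaluated on the $g$-orthonormal frame $\{T, E_A\}$ (using \eqref{eq.g_ring_frame} to convert $\delta_{AB}$ to $\mathring{g}_{AB}$), and the pseudoconvexity conclusion follows because the error terms are $o(f\rho^2)$ relative to the leading part $f\rho^2\,\Pi_\zeta$. One minor remark: your final two paragraphs invoke the quantitative bound \eqref{eq.pseudoconvex_ass} with constant $K$, whereas the theorem as stated only assumes $\Pi_\zeta$ is positive-definite; the paper's proof argues purely qualitatively at this step, and you could do the same without appealing to $K$.
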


\begin{proof}
The identities \eqref{eq.pseudoconvex} follow immediately from \eqref{eq.f_hessian} and \eqref{eq.g_ring_frame}, since $T$, $N$, and the $E_A$'s are $g$-orthonormal.
For small enough $f$ (and hence small enough $\rho$), the error terms on the right-hand sides in \eqref{eq.pseudoconvex} become negligible.
Thus, the projection of $- ( \nabla^2 f + w_\zeta \cdot g )$ to the level sets of $f$ (i.e., spanned by $T$ and the $E_A$'s) is positive-definite for small $f$ if and only if \eqref{eq.Pi} is positive-definite.
\end{proof}

Later, we will require a more quantitative version of Theorem \ref{thm.pseudoconvex} which uses the full pseudoconvexity condition from Definition \ref{def.pseudoconvex_ass}.
For technical reasons, it will be convenient to work not with $\grad f$, but instead with the following reweighting:\footnote{By using $S$ instead of $\grad f$, one obtains additional cancellations which greatly simplify the proof of the Carleman estimate, Theorem \ref{thm.carleman}, later on. In particular, the factor $h_\zeta$ in \eqref{eq.Sw} is an error term with no leading-order contributions.}

\begin{definition} \label{def.S}
Define the vector field $S$ on $\mc{M}$ by
\begin{equation} \label{eq.S}
S := f^{n - 3} \grad f \text{.}
\end{equation}
\end{definition}

\begin{remark}
Observe that for vector fields $X, Y$ on $\mc{M}$ tangent to the level sets of $f$,
\[
( \nabla S + f^{n - 3} w \cdot g ) (X, Y) = f^{n - 3} ( \nabla^2 f + w \cdot g ) (X, Y) \text{.}
\]
Thus, $\nabla S$ conveys the same information about pseudoconvexity as $\nabla^2 f$.
\end{remark}

\begin{proposition} \label{thm.pseudoconvex_quant}
Suppose the $\zeta$-pseudoconvex property holds for some $\zeta = \mc{O} (1)$ and some $K > 0$.
Moreover, with $w_\zeta$ defined as in \eqref{eq.w}, we set
\begin{equation} \label{eq.pi}
\pi_\zeta := - ( \nabla S + f^{n - 3} w_\zeta \cdot g ) \text{.}
\end{equation}
Then, whenever $f \ll_{n, \tfr} 1$, we have, for any $1$-form $\theta$ on $\mc{M}$,
\begin{equation} \label{eq.pseudoconvex_quant} \begin{split}
\pi_\zeta^{\alpha \beta} \theta_\alpha \theta_\beta &\geq [ K f^{n - 2} \rho^2 + \mc{O} ( f^{n - 1} \rho^2 ) ] \left[ | \theta_T |^2 + \sum_{ A = 1 }^{ n - 1 } | \theta_{ E_A } |^2 \right] \\
&\qquad - [ (n - 1) f^{n - 2} + \mc{O} ( f^n ) ] | \theta_{ N } |^2 \text{.}
\end{split} \end{equation}
\end{proposition}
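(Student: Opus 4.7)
The plan is to expand $\pi_\zeta^{\alpha\beta}\theta_\alpha\theta_\beta$ in the $g$-orthonormal frame $\{T, N, E_1, \ldots, E_{n-1}\}$ of Proposition \ref{thm.onf}, match the leading-order quadratic form against $\Pi_\zeta$, invoke the pseudoconvex property, and absorb all remaining errors. First I would differentiate $S = f^{n-3}\grad f$ by the product rule to obtain pointwise
\[
\pi_\zeta^{\alpha\beta} = -(n-3)f^{n-4}\nabla^\alpha f\nabla^\beta f - f^{n-3}(\nabla^2 f)^{\alpha\beta} - f^{n-3}w_\zeta g^{\alpha\beta},
\]
and then set $X := \theta^\sharp = X^T T + X^N N + \sum_A X^A E_A$, so that $|\theta_T|^2 = (X^T)^2$, $|\theta_N|^2 = (X^N)^2$, and $|\theta_{E_A}|^2 = (X^A)^2$. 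Since $\grad f = (f + \mc{O}(f^3))N$ by \eqref{eq.f_grad_sq}, the contraction of $\nabla^\alpha f\nabla^\beta f$ against any frame vector orthogonal to $N$ vanishes, so the first term contributes only to $(X^N)^2$.

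For the $|\theta_N|^2$ coefficient I would combine the three contributions $-(n-3)f^{n-4}|\nabla f|_g^2$, $-f^{n-3}\nabla_{NN}f$, and $-f^{n-3}w_\zeta g(N,N)$, each of which is $-f^{n-2}$ at leading order (using $\nabla_{NN}f = f + \mc{O}(f^3)$ from Proposition \ref{thm.f_hessian}, $w_\zeta = f + \mc{O}(f^3)$ since $\rho \leq f$, and $g(N,N) = 1$), to get the total $-(n-3)f^{n-2} - f^{n-2} - f^{n-2} + \mc{O}(f^n) = -(n-1)f^{n-2} + \mc{O}(f^n)$. The particular power $n-3$ in Definition \ref{def.S} is chosen precisely so that this arithmetic produces the dimension-dependent constant $n-1$.

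For the $(T, E_A)$-block I would insert the remaining Hessian identities from Proposition \ref{thm.f_hessian}---namely $\nabla_{TT}f = f + (1+\bar{g}_{tt}+\tfr^2)f\rho^2 + \mc{O}(f^3\rho^2)$, $\nabla_{TE_A}f = f\rho^2\bar{g}_{tA} + \mc{O}(f^2\rho^2)$, and $\nabla_{E_AE_B}f = -f\delta_{AB} + f\rho^2(\bar{g}_{AB} - \delta_{AB}) + \mc{O}(f\rho^3)$---together with the definition of $w_\zeta$ in \eqref{eq.w}. The key cancellation is that the leading $\pm f^{n-2}$ pieces from $-f^{n-3}\nabla^2 f$ are cancelled exactly by the matching pieces from $-f^{n-3}w_\zeta g$; this is the raison d'\^etre of the $f$-summand in $w_\zeta$, while the $f\rho^2(1+\zeta)$ summand is engineered so that the surviving $f^{n-2}\rho^2$-piece reassembles, after using $\mathring{g}_{tt} = -1$, $\mathring{g}_{tA} = 0$, and \eqref{eq.g_ring_frame}, into the coordinate-basis quadratic form of $\Pi_\zeta$:
\[
f^{n-2}\rho^2\bigl[(\Pi_\zeta)_{tt}(X^T)^2 + 2(\Pi_\zeta)_{tA}X^TX^A + (\Pi_\zeta)_{AB}X^AX^B\bigr] + \mc{O}(f^{n-1}\rho^2)\bigl[(X^T)^2 + \textstyle\sum_A (X^A)^2\bigr].
\]
Treating the bracket as a quadratic form in $(X^T, X^A)$, the $\zeta$-pseudoconvex property (Definition \ref{def.pseudoconvex_ass}) yields the lower bound $K[(X^T)^2 + \mathring{g}_{AB}X^AX^B]$, and a further use of \eqref{eq.g_ring_frame} replaces $\mathring{g}_{AB}X^AX^B$ by $\sum(X^A)^2$ modulo an $\mc{O}(\rho^2)\sum(X^A)^2$ error, which after multiplication by $f^{n-2}\rho^2$ becomes $\mc{O}(f^{n-2}\rho^4) \subseteq \mc{O}(f^{n-1}\rho^2)$ via $\rho \leq f$.

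The remaining off-diagonal contributions $X^TX^N$ and $X^NX^A$ have coefficients of order $\mc{O}(f^{n-1}\rho^2)$, coming solely from $\nabla_{TN}f$ and $\nabla_{NE_A}f$ in Proposition \ref{thm.f_hessian} (the $\grad f\otimes\grad f$ and $g$ pieces produce no such cross terms by orthonormality of the frame). Young's inequality $2|X^\bullet X^N| \leq (X^\bullet)^2 + (X^N)^2$ splits these into $\mc{O}(f^{n-1}\rho^2)(X^\bullet)^2$, absorbed into the positive part of \eqref{eq.pseudoconvex_quant}, and $\mc{O}(f^{n-1}\rho^2)(X^N)^2$, which fits inside the $\mc{O}(f^n)$-envelope of the $|\theta_N|^2$ coefficient since $f^{n-1}\rho^2 \leq f^{n+1} \leq f^n$ for small $f$. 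The main obstacle I anticipate is the bookkeeping in the $(T, E_A)$-block: one must verify that every sub-leading error fits inside $\mc{O}(f^{n-1}\rho^2)$ and that the precise cancellations producing the tensor $\Pi_\zeta$ survive, since a miscalibrated weight $w_\zeta$ would leave an uncontrollable $f^{n-2}$ or $f^{n-2}\rho^2$ remnant that pseudoconvexity alone cannot bound.
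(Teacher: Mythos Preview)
Your proposal is correct and follows essentially the same route as the paper's proof: expand $\pi_\zeta$ via the product rule into the $\nabla f\otimes\nabla f$, $\nabla^2 f$, and $w_\zeta g$ pieces, evaluate in the orthonormal frame $\{N,T,E_A\}$, match the tangential block against $\Pi_\zeta$ to invoke \eqref{eq.pseudoconvex_ass}, compute $\pi_{NN}=-(n-1)f^{n-2}+\mc{O}(f^n)$ exactly as you do, and absorb the $\mc{O}(f^{n-1}\rho^2)$ cross terms $\pi_{NT},\pi_{NE_A}$ by Young's inequality. The only packaging difference is that the paper quotes its already-established Theorem~\ref{thm.pseudoconvex} (equation~\eqref{eq.pseudoconvex}) for the $(T,E_A)$-block rather than re-deriving those cancellations inline from Proposition~\ref{thm.f_hessian} and the definition of $w_\zeta$.
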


\begin{proof}
For brevity, we write $\pi$ for $\pi_\zeta$ and $\Pi$ for $\Pi_\zeta$.
First, we note that
\begin{equation} \label{eq.pseudoconvex_quant_0}
\pi^{\alpha\beta} = - f^{n - 3} ( \nabla^{\alpha \beta} f + w_\zeta \cdot g^{ \alpha \beta } ) - ( n - 3 ) f^{n - 4} \nabla^\alpha f \nabla^\beta f \text{.}
\end{equation}
We expand the left-hand side of \eqref{eq.pseudoconvex_quant} using the usual orthonormal frames.
Recalling \eqref{eq.f_deriv_trivial} and \eqref{eq.pseudoconvex} and letting $\Pi$ be as in \eqref{eq.Pi}, we obtain
\begin{equation} \label{eq.pseudoconvex_quant_1} \begin{split}
\pi^{\alpha \beta} \theta_\alpha \theta_\beta &= [ f^{n - 2} \rho^2 \Pi_{t t} + \mc{O} ( f^{n - 1} \rho^2 ) ] ( - \theta_T )^2 \\
&\qquad + \sum_{ A = 1 }^{n - 1} [ f^{n - 2} \rho^2 \Pi_{t A} + \mc{O} ( f^{n - 1} \rho^2 ) ] (- \theta_T ) \theta_{ E_A } \\
&\qquad + \sum_{ A, B = 1 }^{n - 1} [ f^{n - 2} \rho^2 \Pi_{A B} + \mc{O} ( f^{n - 1} \rho^2 ) ] \theta_{ E_A } \theta_{ E_B } \\
&\qquad + \sum_{ A = 1 }^{n - 1} \pi_{N E_A} \theta_N \theta_{ E_A } + \pi_{N T} \theta_N ( - \theta_T ) + \pi_{N N} | \theta_N |^2 \text{.}
\end{split} \end{equation}
Applying the assumption \eqref{eq.pseudoconvex_ass} yields
\begin{equation} \label{eq.pseudoconvex_quant_2} \begin{split}
\pi^{\alpha \beta} \theta_\alpha \theta_\beta &\geq [ K f^{n - 2} \rho^2 + \mc{O} ( f^{n - 1} \rho^2 ) ] \left( | \theta_T |^2 + \sum_{ A = 1 }^{n - 1} | \theta_{ E_A } |^2 \right) \\
&\qquad + \sum_{ A = 1 }^{n - 1} \pi_{N E_A} \theta_N \theta_{ E_A } - \pi_{N T} \theta_N \theta_T + \pi_{N N} | \theta_N |^2 \text{.}
\end{split} \end{equation}

For the remaining components of $\pi$ in \eqref{eq.pseudoconvex_quant_2}, we use \eqref{eq.f_hessian}, the definition \eqref{eq.w} of $w$, and the identity \eqref{eq.pseudoconvex_quant_0} in order to obtain
\begin{equation} \label{eq.pseudoconvex_quant_31}
\pi_{ N E_A } = \mc{O} ( f^{n - 1} \rho^2 ) \text{,} \qquad \pi_{ N T } = \mc{O} ( f^{n - 1} \rho^2 ) \text{.}
\end{equation}
Furthermore, by similar reasoning---in conjunction with the assumption $\zeta = \mc{O} (1)$, the observation $| \nabla_N f |^2 = \nabla^\alpha f \nabla_\alpha f$ and \eqref{eq.f_grad_sq}---we have
\begin{equation} \label{eq.pseudoconvex_quant_32}
\pi_{ N N } = - (n - 1) f^{n - 2} + \mc{O} ( f^n ) \text{.}
\end{equation}
Thus, combining \eqref{eq.pseudoconvex_quant_2}, \eqref{eq.pseudoconvex_quant_31}, and \eqref{eq.pseudoconvex_quant_32} yields
\begin{equation} \label{eq.pseudoconvex_quant_3} \begin{split}
\pi^{\alpha \beta} \theta_\alpha \theta_\beta &\geq [ K f^{n - 2} \rho^2 + \mc{O} ( f^{n - 1} \rho^2 ) ] \left( | \theta_T |^2 + \sum_{ A = 1 }^{n - 1} | \theta_{ E_A } |^2 \right) \\
&\qquad + \sum_{ A = 1 }^{n - 1} \mc{O} ( f^{n - 1} \rho^2 ) \cdot \theta_N \theta_{ E_A } + \mc{O} ( f^{n - 1} \rho^2 ) \cdot \theta_N \theta_T \\
&\qquad - [ ( n - 1 ) f^{n - 2} + \mc{O} ( f^n ) ] | \theta_N |^2 \text{.}
\end{split} \end{equation}
Finally, since \eqref{eq.f_deriv_trivial} implies
\[
\mc{O} ( f^{n - 1} \rho^2 ) \cdot \theta_N \theta_T \gtrsim \mc{O} ( f^{n - 1} \rho^2 ) \cdot | \theta_T |^2 + \mc{O} ( f^n ) \cdot | \theta_N |^2 \text{,}
\]
and similarly for $\mc{O} ( f^{n - 1} \rho^2 ) \cdot \theta_N \theta_{ E_A }$, then \eqref{eq.pseudoconvex_quant_3} becomes \eqref{eq.pseudoconvex_quant}.
\end{proof}

In particular, the above results apply to AdS spacetime---more specifically, to the segments defined in \eqref{eq.ads}, \eqref{eq.ads_g_ring}---as well as to its generalization, \eqref{eq.ads_gen}.

\begin{corollary} \label{thm.ads_pseudoconvex}
Consider the special case of AdS spacetime $( \mc{M}_\AdS, g_\AdS )$; see \eqref{eq.ads}.
Then, whenever $0 < \tfr < 1$, the $\zeta$-pseudoconvex property holds, with
\begin{equation} \label{eq.ads_zeta}
\zeta := - \frac{ 1 - \tfr^2 }{ 2 } \text{,} \qquad \Pi_\zeta = \frac{ 1 - \tfr^2 }{ 2 } ( dt^2 + \mathring{\gamma} ) \text{,}
\end{equation}
and with constant $K = \frac{1}{2} ( 1 - \tfr^2 )$.
\end{corollary}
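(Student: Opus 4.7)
The proof is essentially a direct verification by substitution, so the plan is to read off the AdS data from the setup already recorded in the excerpt and plug it into the pseudoconvexity definition.

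First I would recall from \eqref{eq.ads_g_ring} that the AdS background exactly fits into the class of Definition \ref{def.aads} with boundary metric $\mathring{g} = -dt^2 + \mathring{\gamma}$ and subleading term $\bar g = -dt^2$, where $\mathring{\gamma}$ is the round metric on $\Sph^{n-1}$. Substituting these into the definition \eqref{eq.Pi} of $\Pi_\zeta$ with $\zeta = -\tfrac{1-\tfr^2}{2}$, I compute
\[
\Pi_\zeta = -\bar g - \tfr^2 dt^2 - \zeta \mathring{g} = (1 - \tfr^2 + \zeta)\, dt^2 - \zeta\, \mathring{\gamma} = \tfrac{1-\tfr^2}{2} (dt^2 + \mathring{\gamma}),
\]
which is exactly the formula claimed in \eqref{eq.ads_zeta}. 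Note that the key algebraic coincidence is that the choice $\zeta = -\tfrac{1-\tfr^2}{2}$ makes the coefficient of $dt^2$ agree with the coefficient of $\mathring{\gamma}$, producing a conformally diagonal tensor of the right form.

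Next I would verify the quantitative positivity condition \eqref{eq.pseudoconvex_ass}. For any vector $X = X^t \partial_t + X^A \partial_{x^A}$ on $(0,\tfr^{-1}\pi)\times\Sph^{n-1}$, the formula above gives
\[
\Pi_\zeta(X,X) = \tfrac{1-\tfr^2}{2}\bigl[(X^t)^2 + \mathring{\gamma}_{AB} X^A X^B\bigr].
\]
Since for AdS the spatial part of $\mathring{g}$ coincides with $\mathring{\gamma}$, the right-hand side is precisely $K[(X^t)^2 + \mathring{g}_{AB} X^A X^B]$ with $K = \tfrac{1}{2}(1-\tfr^2)$, which is strictly positive under the assumption $0 < \tfr < 1$. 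This is the only place where the constraint $\tfr < 1$ enters, and it corresponds geometrically to the elongation of the foliation beyond the AdS refocusing time being sub-critical.

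There is no genuine obstacle here: once the AdS data is plugged into \eqref{eq.Pi}, the verification is a one-line identity together with an inspection of signs. The only point requiring any care is ensuring that the ansatz $\zeta = -(1-\tfr^2)/2$ balances the $-dt^2$ coming from $\bar g$ against the $+\tfr^2 dt^2$ term, which is what distinguishes the admissible range $\tfr < 1$ from the degenerate endpoint $\tfr = 1$.
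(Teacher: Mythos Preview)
Your proof is correct and is precisely the direct verification the paper leaves implicit (no proof is given in the paper for this corollary). You correctly substitute the AdS data $\mathring{g} = -dt^2 + \mathring{\gamma}$, $\bar g = -dt^2$ from \eqref{eq.ads_g_ring} into \eqref{eq.Pi}, obtain the claimed $\Pi_\zeta$, and read off the constant $K$; nothing further is required.
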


\begin{remark}
On the other hand, the level sets of $f$ in $( \mc{M}_\AdS, g_\AdS )$ fail to be pseudoconvex if $\tfr > 1$, as one can see that $\Pi_\zeta$ fails to be positive-definite.
For the borderline case $\tfr = 1$, one must expand $\nabla^2 f$ (and hence $g$) to higher order to discern whether these have good sign.
While this is slightly involved, one can show that when $\tfr = 1$, the level sets of $f$ fail to be pseudoconvex.
\end{remark}

\subsection{Horizontal and Mixed Tensors} \label{sec.aads_tensor}

For our upcoming applications to the Einstein equations, we will need to apply our unique continuation results to objects which are tensorial on each level set of $(t, r)$.
Here, we briefly discuss these tensorial objects we will encounter in our results, and we state precisely how covariant derivatives---in particular the wave operator $\Box$---are defined on such objects.

The formalism here is analogous to those found in \cite{shao:bdc_nv, shao:ksp, shao:bdc_nvp}, where similar objects were constructed on null cones and time foliations.

Assume $(\mc{M}, g)$ are as in Definitions \ref{def.aads_manifold} and \ref{def.aads}:
\begin{itemize}
\item A tensor $W$ is ($\mc{S}$-)\emph{horizontal} iff $W$ identifies with a tensor on some level set of $(t, r)$ in $\mc{M}$---that is, some copy of $\mc{S}$.

\item We denote by $T^\mu_\lambda \mc{M}$ the usual $(\mu, \lambda)$-tensor bundle over $\mc{M}$, consisting of all tensors at all points of $\mc{M}$ of rank $(\mu, \lambda)$.

\item We denote by $\ul{T}^m_l \mc{M}$ the ($\mc{S}$-)\emph{horizontal bundle} over $\mc{M}$, consisting of all horizontal tensors of rank $(m, l)$ at all points of $\mc{M}$.
\end{itemize}
In general, for a vector bundle $\mc{V}$ over $\mc{M}$, we let $\Gamma \mc{V}$ denote the space of all smooth sections of $\mc{V}$.
According to this formalism:
\begin{itemize}
\item $\Gamma T^\mu_\lambda \mc{M}$ denotes the usual space of \emph{tensor fields} of rank $(\mu, \lambda)$ over $\mc{M}$.

\item $\Gamma \ul{T}^m_l \mc{M}$ is the space of \emph{horizontal tensor fields} of rank $(m, l)$ over $\mc{M}$.
\end{itemize}
Note in particular that $\Gamma T^0_0 \mc{M} = \Gamma \ul{T}^0_0 \mc{M} = C^\infty ( \mc{M} )$.

Next, we define the \emph{mixed bundles} to be the tensor product bundles
\begin{equation} \label{eq.mixed_bundle} T^\mu_\lambda \ul{T}^m_l \mc{M} := T^\mu_\lambda \mc{M} \otimes \ul{T}^m_l \mc{M} \text{.} \end{equation}
Similarly, we will call an element of $\Gamma T^\mu_\lambda \ul{T}^m_l \mc{M}$ a \emph{mixed tensor field}.
By the duality formulation, we can consider $A \in \Gamma T^\mu_\lambda \ul{T}^m_l \mc{M}$ as a $C^\infty ( \mc{M} )$-multilinear map on the appropriate number of standard and horizontal vector fields and $1$-forms.

\begin{remark}
Readers who are interested only in scalar wave equations can skip the subsequent discussion on covariant formulations.
In particular, the reader can simply replace all instances of horizontal sections $\Gamma \ul{T}^m_l \mc{M}$ by the space $C^\infty ( \mc{M} )$ of smooth scalar functions.
Furthermore, in this case, the mixed tensor bundles $T^\mu_\lambda \ul{T}^m_l \mc{M}$ reduce to the usual (spacetime) tensor bundles $T^\mu_\lambda \mc{M}$.
\end{remark}

\subsubsection{Covariant Structures}

Recall that $g$ induces a bundle metric on any tensor bundle $T^\mu_\lambda \mc{M}$.
Similarly, the induced metrics $\slashed{g}$ on the level sets of $(t, r)$ induce bundle metrics on any horizontal bundle $\ul{T}^m_l \mc{M}$.
From the above, one can now naturally define the \emph{mixed bundle metric} $g$ on $T^\mu_\lambda \ul{T}^m_l \mc{M}$, first by
\begin{equation} \label{eq.mixed_metric} g ( A_1 \otimes B_1, A_2 \otimes B_2 ) := g (A_1, A_2) \cdot \slashed{g} (B_1, B_2) \text{,} \end{equation}
and then linearly extended to all mixed tensors.
In terms of indices, this corresponds precisely to $g$-metric contractions for all the spacetime components and $\slashed{g}$-metric contractions for all the horizontal components.

Next, recall the Levi-Civita connection $\nabla$ induces a bundle connection---also denoted $\nabla$---on any $T^\mu_\lambda \mc{M}$, and this $\nabla$ is compatible with the metric, i.e.,
\[
\nabla g = 0 \text{.}
\]
We can also define analogous \emph{horizontal connections}---also denoted by $\nasla$---on the horizontal bundles.
Given any vector field $X \in \Gamma T^1_0 \mc{M}$, we define the following:
\begin{itemize}
\item For a scalar $f \in \Gamma \ul{T}^m_l \mc{M} = C^\infty ( \mc{M} )$, we define $\nasla_X f = Xf$, as usual.

\item For a horizontal vector field $Y \in \Gamma \ul{T}^1_0 \mc{M}$, we define $\nasla_X Y$ to be the orthogonal projection of $\nabla_X Y$ onto the tangent spaces of the $(t, r)$-level sets.

\item From the above, $\nasla$ can then be defined on all $\Gamma \ul{T}^m_l \mc{M}$ in the usual way, via Leibniz rule considerations.
In particular, for a covariant $A \in \Gamma \ul{T}^0_l \mc{M}$,
\begin{align*}
\nasla_X A ( Y_1, \ldots, Y_l ) &:= X [ A ( Y_1, \ldots, Y_l ) ] - A ( \nasla_X Y_1, Y_2, \ldots, Y_l ) \\
&\qquad - \ldots - A ( Y_1, \ldots, \nasla_X Y_l ) \text{.}
\end{align*}
where $Y_1, \ldots, Y_k$ are arbitrary horizontal vector fields.
\end{itemize}
Note that if $X$ is itself horizontal, then $\nasla_X$ is precisely the induced covariant derivative on the level sets of $(t, r)$, thus our choice of notation $\nasla$ makes sense.
Furthermore, we note that $\nasla \slashed{g} \equiv 0$, i.e., $\nasla$ is compatible with the induced metric.
More specifically, $\nasla_X \slashed{g}$ vanishes \emph{for all spacetime directions $X$}.

We can now canonically combine the connections $\nabla$ and $\nasla$ to obtain \emph{mixed connections} $\nabla$ on the mixed bundles.
The basic idea is to have $\nabla$ behave like the usual spacetime connection $\nabla$ on the spacetime components and like $\nasla$ on the horizontal components.
Indeed, we first define the mixed connection $\nabla$ by
\begin{equation} \label{eq.mixed_connection} \nabla_X ( A \otimes B ) := \nabla_X A \otimes B + A \otimes \nasla_X B \text{,} \end{equation}
where $A$ and $B$ denote spacetime and horizontal tensor fields, respectively, and then we linearly extend $\nabla_X$ to all the remaining mixed tensor fields.
Again, one can show that $\nabla$ is compatible with the mixed metrics:
\begin{equation} \label{eq.mixed_covariant} \nabla g \equiv 0 \text{.} \end{equation}

In practice, the most useful representation of mixed covariant derivatives is the following differentiation formula for covariant mixed tensor fields: if $A \in \Gamma T^0_\lambda \ul{T}^0_l \mc{M}$, if $X, Z_1, \dots, Z_\lambda \in \Gamma T^1_0 \mc{M}$, and if $Y_1, \ldots, Y_l \in \Gamma \ul{T}^1 \mc{M}$, then
\begin{equation} \label{eq.mixed_deriv} \begin{split}
\nabla_X A ( Z_1, \dots, Z_\lambda; Y_1, \dots, Y_l ) &= X [ A ( Z_1, \dots, Z_\lambda; Y_1, \dots, Y_l ) \\
&\qquad - A ( \nabla_X Z_1, \dots, Z_\lambda; Y_1, \dots, Y_l ) - \dots \\
&\qquad - A ( Z_1, \dots, \nabla_X Z_\lambda; Y_1, \dots, Y_l ) \\
&\qquad - A ( Z_1, \dots, Z_\lambda; \nasla_X Y_1, \dots, Y_l ) - \ldots \\
&\qquad - A ( Z_1, \dots, Z_\lambda; Y_1, \dots, \nasla_X Y_l ) \text{.}
\end{split} \end{equation}

Now, given any $A \in \Gamma T^\mu_\lambda \ul{T}^m_l \mc{M}$:
\begin{itemize}
\item We define its \emph{mixed covariant differential} $\nabla A \in \Gamma T^\mu_{ \lambda + 1 } \ul{T}^m_l \mc{M}$ to be the mixed tensor field mapping a vector field $X$ to $\nabla_X A$.

\item We can then define higher-order mixed differentials of $A$ by iterating this operator $\nabla$.
For instance, for the second differential, $\nabla^2 A := \nabla \nabla A$, the outer ``$\nabla$" is a mixed differential acting on $\Gamma T^\mu_{ \lambda + 1 } \ul{T}^m_l \mc{M}$.

\item Consequently, we can define $\Box A \in \Gamma T^\mu_\lambda \ul{T}^m_l \mc{M}$ as the $g$-trace of $\nabla^2 A$, with the trace being applied to the two $\nabla^2$-components.
\end{itemize}
Finally, we define the \emph{mixed curvature} by the failure of second covariant differentials to commute: given $A$ as above and $X, Y \in \Gamma T^1_0 \mc{M}$, we define
\begin{equation} \label{eq.mixed_curv} \mc{R} A \in \Gamma T^\mu_{ \lambda + 2 } \ul{T}^m_l \mc{M} \text{,} \qquad \mc{R}_{XY} [A] := \nabla_{XY} A - \nabla_{YX} A \text{.} \end{equation}

\begin{proposition} \label{thm.mixed_curv_hor}
Let $\phi \in \Gamma \ul{T}^0_l \mc{M}$.
Then, given any spacetime vector fields $X, Y$ and horizontal vector fields $Z_1, \dots, Z_l$, we have the following formula:
\begin{equation} \label{eq.mixed_curv_hor} \begin{split}
\mc{R}_{XY} \phi (Z_1, \dots, Z_l) &= - \phi ( \nasla_X ( \nasla_Y Z_1 ) - \nasla_Y ( \nasla_X Z_1 ) - \nasla_{ [X, Y] } Z_1, \dots, Z_l ) \\
&\qquad - \dots \\
&\qquad - \phi ( Z_1, \dots, \nasla_X ( \nasla_Y Z_l ) - \nasla_Y ( \nasla_X Z_l ) - \nasla_{ [X, Y] } Z_l ) \text{.}
\end{split} \end{equation}
In particular, if both $X$ and $Y$ are also horizontal, then \eqref{eq.mixed_curv_hor} reduces to the usual Riemann curvature on the level sets of $(t, r)$.
\end{proposition}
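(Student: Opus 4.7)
The plan is to prove the formula by direct expansion from the definitions, using the differentiation rule \eqref{eq.mixed_deriv} on $\phi \in \Gamma \ul{T}^0_l \mc{M}$ (viewed as a mixed tensor with no upper indices and no standard covariant slots). Recall that by the torsion-free property of $\nabla$ one has
\[ \mc{R}_{XY}[\phi] = \nabla_X (\nabla_Y \phi) - \nabla_Y (\nabla_X \phi) - \nabla_{[X,Y]} \phi, \]
so I will expand each of the three terms on the right-hand side, evaluated at $(Z_1, \dots, Z_l)$, and then collect.

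First I apply \eqref{eq.mixed_deriv} to obtain
\[ (\nabla_Y \phi)(Z_1, \dots, Z_l) = Y[\phi(Z_1, \dots, Z_l)] - \sum_{i} \phi(Z_1, \dots, \nasla_Y Z_i, \dots, Z_l), \]
and then differentiate once more with $\nabla_X$. The outer application of \eqref{eq.mixed_deriv} produces (i) a term $X Y[\phi(Z_1, \dots, Z_l)]$, (ii) sums of the form $-X[\phi(\cdots, \nasla_Y Z_i, \cdots)]$ and $-Y[\phi(\cdots, \nasla_X Z_j, \cdots)]$ coming from differentiating through, and (iii) the ``slot correction'' $-\sum_i (\nabla_Y \phi)(Z_1, \dots, \nasla_X Z_i, \dots, Z_l)$. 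Expanding (iii) further by the same rule yields terms of two kinds: single-slot terms $+\sum_i \phi(\cdots, \nasla_Y \nasla_X Z_i, \cdots)$, and double-slot terms $+\sum_{i\neq j} \phi(\cdots, \nasla_X Z_i, \cdots, \nasla_Y Z_j, \cdots)$, together with the action $Y[\phi(\cdots,\nasla_X Z_i,\cdots)]$ term that precisely cancels a term from (ii). An identical expansion is carried out for $\nabla_Y(\nabla_X\phi)(Z_1,\dots,Z_l)$.

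Taking the difference, the ``double-slot'' contributions are symmetric under $X \leftrightarrow Y$ (and $i \leftrightarrow j$) and hence cancel; the mixed $X[\phi(\cdots,\nasla_Y Z_i,\cdots)] - Y[\phi(\cdots,\nasla_X Z_i,\cdots)]$ pieces combine with the $\sum_i \phi(\cdots, (\nasla_Y\nasla_X - \nasla_X\nasla_Y) Z_i, \cdots)$ pieces; and the pure action $(XY - YX)[\phi(Z_1,\dots,Z_l)] = [X,Y][\phi(Z_1,\dots,Z_l)]$ is exactly the first term in the expansion of $\nabla_{[X,Y]}\phi(Z_1,\dots,Z_l)$ via \eqref{eq.mixed_deriv}. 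Subtracting $\nabla_{[X,Y]}\phi(Z_1,\dots,Z_l) = [X,Y][\phi(Z_1,\dots,Z_l)] - \sum_i \phi(\cdots, \nasla_{[X,Y]} Z_i, \cdots)$ cleanly removes the action term and contributes the $+\sum_i \phi(\cdots, \nasla_{[X,Y]} Z_i, \cdots)$ pieces needed to complete the commutator $\nasla_X\nasla_Y - \nasla_Y\nasla_X - \nasla_{[X,Y]}$ in each slot. After rearranging signs, this gives precisely \eqref{eq.mixed_curv_hor}.

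The main obstacle is purely combinatorial bookkeeping: one has to track carefully the six classes of terms (pure scalar action, single-slot action, double-slot, second-slot, $[X,Y]$-action, $[X,Y]$-slot) and verify that they pair off as described; no deeper identity is used beyond \eqref{eq.mixed_deriv} and torsion-freeness. For the final assertion, if $X$ and $Y$ are also horizontal, then by construction $\nasla_X$ and $\nasla_Y$ act on horizontal vector fields exactly as the induced Levi-Civita connection on the $(t,r)$-level set; moreover $[X,Y]$ is tangent to the leaf since the leaves are integrable, so $\nasla_{[X,Y]}$ agrees with the leaf-intrinsic covariant derivative as well. Hence the combination $\nasla_X\nasla_Y Z_i - \nasla_Y\nasla_X Z_i - \nasla_{[X,Y]} Z_i$ is the intrinsic Riemann curvature of $\slashed{g}$ acting on $Z_i$, and \eqref{eq.mixed_curv_hor} reduces to the standard Riemann curvature formula on the leaf.
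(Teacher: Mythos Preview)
Your approach is correct and is exactly what the paper has in mind: its proof reads in full ``This follows from \eqref{eq.mixed_deriv} and a direct computation,'' and your expansion is precisely that direct computation. One small expository slip: within the single expansion of $\nabla_X(\nabla_Y\phi)$ there is no internal cancellation between a ``(ii)'' term and a ``(iii)'' term as you describe; rather, the terms $-\sum_i X[\phi(\cdots,\nasla_Y Z_i,\cdots)]$ and $-\sum_i Y[\phi(\cdots,\nasla_X Z_i,\cdots)]$ each appear once in $\nabla_X(\nabla_Y\phi)$ and once (with the same sign) in $\nabla_Y(\nabla_X\phi)$, so they drop out only upon taking the difference. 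This does not affect the outcome, and the rest of your bookkeeping (double-slot cancellation, assembly of $\nasla_X\nasla_Y-\nasla_Y\nasla_X-\nasla_{[X,Y]}$) is right.
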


\begin{proof}
This follows from \eqref{eq.mixed_deriv} and a direct computation.
\end{proof}

\subsubsection{Tensor Notations}

In the upcoming development, we will work with horizontal tensor fields of arbitrary rank.
For future convenience, we introduce some notational conventions to simplify how these objects are expressed.

First, we will use capital Latin letters to denote \emph{horizontal multi-indices}, i.e., a collection of zero or more horizontal indices.
The number of indices represented will be apparent from context.
For instance, if $\phi \in \Gamma \ul{T}^0_l \mc{M}$, then $\phi_I$ denotes a single scalar component of $\phi$, and $I$ represents $l$ horizontal indices.
Repeated indices represent summations over all individual indices; for $\phi$ as above, then
\[
\phi^I \phi_I := \phi^{A_1 \dots A_l} \phi_{A_1 \dots A_l} \text{.}
\]

Furthermore, for spherical tensors, we let $| \cdot |$ denote the pointwise tensor norm, with respect to the induced spherical metrics $\slashed{g}$.
Thus, if $\psi \in \Gamma \ul{T}^m_l \mc{M}$, then
\begin{equation} \label{eq.tensor_norm}
| \psi |^2 := \psi^I{}_J \psi_I{}^J \text{.}
\end{equation}

\begin{remark}
For readers interested only in the scalar case, multi-indices can essentially be ignored.
Indeed, $| \psi |$ is simply the absolute value, and $\phi^I{}_J \psi_I{}^J = \phi \cdot \psi$.
Moreover, in this case, the curvature operator $\phi \mapsto \mc{R} \phi$ is trivially zero.
\end{remark}

One consequence of \eqref{eq.mixed_covariant} is that mixed covariant derivatives commute with contractions in both spacetime and horizontal components.
Thus, the usual product rule considerations hold for mixed tensor fields.
For instance, if $\psi \in \Gamma \ul{T}^0_l \mc{M}$, then
\begin{equation} \label{eq.product_rule}
\nabla_\beta ( \nabla^\alpha \psi^I \nabla_\alpha \psi_I ) = 2 \nabla^\alpha \psi^I \nabla_{\beta \alpha} \psi_I \text{,}
\end{equation}
where ``$\nabla_\beta$" on the left-hand side is the usual directional derivative (i.e., the mixed derivative on $\Gamma T^0_0 \ul{T}^0_0 \mc{M}$), while ``$\nabla_\beta$" on the right-hand side is the mixed derivative on $\Gamma T^0_1 \ul{T}^0_l \mc{M}$.
Furthermore, by integrating formulae such as \eqref{eq.product_rule} over spacetime regions, one sees that the usual integration by parts processes that hold for spacetime tensor fields can be directly extended to these mixed tensor fields.

\subsubsection{Curvature Estimates}

Finally, we prove some estimates involving the curvature operator $\mc{R}$ that will be used in our main results.

\begin{proposition} \label{thm.mixed_curv_nor}
Let $\phi \in \Gamma \ul{T}^0_l \mc{M}$.
Then, for $f \ll_{n, \tfr} 1$, we have
\begin{equation} \label{eq.mixed_curv_f}
| \mc{R}_{N T} \phi | \lesssim \rho^3 | \phi | \text{,} \qquad | \mc{R}_{N E_A} \phi | \lesssim \rho^3 | \phi | \text{.}
\end{equation}
\end{proposition}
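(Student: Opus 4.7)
The plan is to reduce the bounds \eqref{eq.mixed_curv_f} to pointwise estimates on a horizontal Riemann-type tensor, and then to extract these by a direct coordinate calculation using the asymptotics from Proposition \ref{thm.g}. By \eqref{eq.mixed_curv_hor}, for any horizontal vector fields $Z_1, \dots, Z_l$, we have
\[
( \mc{R}_{XY}\phi )( Z_1, \dots, Z_l ) = -\sum_k \phi ( Z_1, \dots, [\nasla_X, \nasla_Y] Z_k - \nasla_{[X,Y]} Z_k, \dots, Z_l ) \text{.}
\]
The operator $Z \mapsto [\nasla_X, \nasla_Y] Z - \nasla_{[X,Y]} Z$ is tensorial in $Z$ and is the curvature of $\nasla$ viewed as a connection on the horizontal bundle; in frames it takes the form $\slashed{R}(X,Y)^B{}_A = X^\mu Y^\nu \slashed{R}^B{}_{A\mu\nu}$, where
\[
\slashed{R}^B{}_{A \mu \nu} := \partial_\mu \Gamma^B_{\nu A} - \partial_\nu \Gamma^B_{\mu A} + \Gamma^B_{\mu C} \Gamma^C_{\nu A} - \Gamma^B_{\nu C} \Gamma^C_{\mu A} \text{,}
\]
with $A, B, C$ ranging only over the horizontal coordinate indices. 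Standard tensor norm estimates yield $| \mc{R}_{XY} \phi | \lesssim_l | \slashed{R}(X,Y) | | \phi |$, reducing the task to proving $| \slashed{R}(N, T) |, | \slashed{R}(N, E_A) | \lesssim \rho^3$.

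The next step is to establish the componentwise bounds
\[
\slashed{R}^B{}_{A \rho t}, \, \slashed{R}^B{}_{A \rho C}, \, \slashed{R}^B{}_{A t C} = \mc{O}(\rho) \text{,} \qquad \slashed{R}^B{}_{A C D} = \mc{O}(1) \text{,}
\]
which all follow from the Christoffel symbol asymptotics \eqref{eq.Gamma}--\eqref{eq.Gamma_deriv}. The hardest point, and the main obstacle, is the quadratic Christoffel contribution to $\slashed{R}^B{}_{A \rho C}$: the leading term $\Gamma^a_{\rho b} = -\rho^{-1} \delta^a_b + \mc{O}(\rho)$ in \eqref{eq.Gamma} forces \emph{both} $\Gamma^B_{\rho D} \Gamma^D_{C A} = - \rho^{-1} \Gamma^B_{C A} + \mc{O}(\rho)$ and $\Gamma^B_{C D} \Gamma^D_{\rho A} = - \rho^{-1} \Gamma^B_{C A} + \mc{O}(\rho)$ to carry an a priori singular $\rho^{-1}$ piece. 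These singular pieces are, however, identical, so they cancel exactly in the antisymmetric difference, leaving only the desired $\mc{O}(\rho)$ remainder. The remaining cases follow from direct inspection using \eqref{eq.Gamma_deriv} for the $\partial \Gamma$ terms and the milder decay of $\Gamma^a_{t b}$ from \eqref{eq.Gamma_t} for the quadratic terms.

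Finally, the frame expansions from Proposition \ref{thm.onf} give $N^\rho, T^t \sim \rho$, $N^t, T^\rho = \mc{O}(f \rho)$, $N^A = \mc{O}(f \rho^3)$, and $T^A = \mc{O}(\rho^3)$, while \eqref{eq.EA} gives $E_C = \rho \, \partial_{x^C}$ at the point (and thus $E_C^\rho = E_C^t = 0$). Contracting $\slashed{R}^B{}_{A \mu \nu}$ against $N^\mu T^\nu$ and $N^\mu E_C^\nu$ and absorbing factors of $f \ll_{n, \tfr} 1$, the dominant contributions $N^\rho T^t \slashed{R}^B{}_{A \rho t}$ and $N^\rho E_C^D \slashed{R}^B{}_{A \rho D}$ are both of size $\mc{O}(\rho) \cdot \mc{O}(\rho) \cdot \mc{O}(\rho) = \mc{O}(\rho^3)$, and every remaining term is strictly better (carrying additional factors of $\rho^2$ or $f$). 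This produces the desired estimates \eqref{eq.mixed_curv_f}.
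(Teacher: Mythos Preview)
Your proof is correct and follows essentially the same route as the paper: reduce to coordinate curvature bounds, identify the $\rho^{-1}$ cancellation in $\Gamma^B_{\rho D}\Gamma^D_{CA} - \Gamma^B_{CD}\Gamma^D_{\rho A}$, and then contract against the $N,T,E_A$ frame components. One small imprecision worth flagging: since $\nasla$ is defined by orthogonal projection and the metric has off-diagonal pieces $g_{\rho A}, g_{tA}$, the horizontal connection coefficients are $\slashed{\Gamma}^B_{\mu A} = \Gamma^B_{\mu A} + \mc{O}(\rho^2)$ rather than exactly $\Gamma^B_{\mu A}$---the paper tracks this explicitly via the projection formula \eqref{eq.mixed_curv_nor_0}---but the $\mc{O}(\rho^2)$ correction contributes only $\mc{O}(\rho)$ to the curvature (even after multiplication by the $\rho^{-1}$ leading part of $\Gamma^B_{\rho C}$) and so does not affect your final bounds.
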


\begin{proof}
Again, as we are dealing with pointwise tensorial computations, we will for convenience always work with frame and coordinate systems which are related at the point in question by \eqref{eq.EA}.
The first step will be to prove that
\begin{equation} \label{eq.mixed_curv_nor}
| \mc{R}_{\rho a} \phi | \lesssim \rho | \phi | \text{,} \qquad | \mc{R}_{t A} \phi | \lesssim \rho | \phi | \text{.}
\end{equation}

Since $\nasla_\alpha \partial_{ x^A }$ is the orthogonal projection of $\nabla_\alpha \partial_{ x^A }$ to the $(t, r)$-level sets, then
\begin{equation} \label{eq.mixed_curv_nor_0}
\nasla_\alpha \partial_{ x^A } = \sum_{ B = 1 }^{ n - 1 } g ( \Gamma^\mu_{ \alpha A } \partial_\mu, E_B ) E_B = \sum_{ B = 1 }^{ n - 1 } \mc{O} ( \rho^2 ) \cdot \partial_{ x^B } + \Gamma^B_{ \alpha A } \partial_{ x^B } \text{,}
\end{equation}
where we applied \eqref{eq.g}, \eqref{eq.Gamma}, and \eqref{eq.EA}.
Now, by \eqref{eq.Gamma} and \eqref{eq.mixed_curv_nor_0},
\begin{equation} \label{eq.mixed_curv_nor_10} \begin{split}
\nasla_\rho ( \nasla_a \partial_{ x^A } ) - \nasla_a ( \nasla_\rho \partial_{ x^A } ) &= \partial_\rho \Gamma^B_{a A} \cdot \partial_{ x^B } - \partial_a \Gamma^B_{\rho A} \cdot \partial_{ x^B } + \Gamma^B_{a A} \nasla_\rho \partial_{ x^B } \\
&\qquad - \Gamma^B_{\rho A} \nasla_a \partial_{ x^B } + \sum_{ B = 1 }^{n - 1} \mc{O} ( \rho^2 ) \cdot \nasla_\rho \partial_{ x^B } \\
&\qquad + \sum_{ B = 1 }^{n - 1} \mc{O} ( \rho^2 ) \cdot \nasla_a \partial_{ x^B } + \sum_{ C = 1 }^{n - 1} \mc{O} ( \rho ) \cdot \partial_{ x^C } \\
&= \partial_\rho \Gamma^B_{a A} \cdot \partial_{ x^B } - \partial_a \Gamma^B_{\rho A} \cdot \partial_{ x^B } + \Gamma^B_{a A} \Gamma^C_{\rho B} \cdot \partial_{ x^C } \\
&\qquad - \Gamma^B_{\rho A} \Gamma^C_{a B} \cdot \partial_{ x^C } + \sum_{ C = 1 }^{n - 1} \mc{O} ( \rho ) \cdot \partial_{ x^C } \text{.}
\end{split} \end{equation}
By \eqref{eq.Gamma_deriv}, we see that
\begin{equation} \label{eq.mixed_curv_nor_11}
\partial_\rho \Gamma^B_{a A} = \mc{O} ( \rho ) \text{,} \qquad \partial_a \Gamma^B_{\rho A} = \mc{O} ( \rho ) \text{.}
\end{equation}
Moreover, by \eqref{eq.Gamma},
\begin{equation} \label{eq.mixed_curv_nor_12}
\Gamma^B_{a A} \Gamma^C_{ \rho B } - \Gamma^B_{\rho A} \Gamma^C_{a B} = - \rho^{-1} \Gamma^C_{a A} + \rho^{-1} \Gamma^C_{a A} + \mc{O} ( \rho ) = \mc{O} ( \rho ) \text{.}
\end{equation}
Thus, combining \eqref{eq.mixed_curv_nor_10}-\eqref{eq.mixed_curv_nor_12} yields
\begin{equation} \label{eq.mixed_curv_nor_1}
\nasla_\rho ( \nasla_a \partial_{ x^A } ) - \nasla_a ( \nasla_\rho \partial_{ x^A } ) = \sum_{ B = 1 }^{n - 1} \mc{O} ( \rho ) \cdot \partial_{ x^B } \text{.}
\end{equation}
Recalling \eqref{eq.mixed_curv_hor} and applying \eqref{eq.EA} and \eqref{eq.mixed_curv_nor_1} results in the first estimate in \eqref{eq.mixed_curv_nor}.

For the remaining bound in \eqref{eq.mixed_curv_nor}, we apply a similar computation to obtain
\begin{equation} \label{eq.mixed_curv_nor_20} \begin{split}
\nasla_t ( \nasla_A \partial_{ x^B } ) - \nasla_A ( \nasla_t \partial_{ x^B } ) &= \partial_t \Gamma^C_{A B} \cdot \partial_{ x^C } - \partial_A \Gamma^C_{t B} \cdot \partial_{ x^C } + \Gamma^C_{A B} \Gamma^D_{t C} \cdot \partial_{ x^D } \\
&\qquad - \Gamma^C_{t B} \Gamma^D_{A C} \cdot \partial_{ x^D } + \sum_{ C = 1 }^{n - 1} \mc{O} ( \rho ) \cdot \partial_{ x^C } \text{.}
\end{split} \end{equation}
Applying \eqref{eq.Gamma}, \eqref{eq.Gamma_t}, and \eqref{eq.Gamma_deriv} yields
\begin{equation} \label{eq.mixed_curv_nor_2}
\nasla_t ( \nasla_A \partial_{ x^B } ) - \nasla_A ( \nasla_t \partial_{ x^B } ) = \sum_{ C = 1 }^{n - 1} \mc{O} ( \rho ) \cdot \partial_{ x^C } \text{,}
\end{equation}
and the second bound in \eqref{eq.mixed_curv_nor} follows.
This completes the proof of \eqref{eq.mixed_curv_nor}.

Next, we use \eqref{eq.f_deriv_trivial}, \eqref{eq.EA}, \eqref{eq.TN} to bound:
\begin{equation} \label{eq.mixed_curv_f_1}
| R_{N E_A} \phi | \lesssim \rho^2 | R_{ \rho A } \phi | + f \rho^2 | R_{ t A } \phi | + f^2 \rho^4 \sum_{ B = 1 }^{n - 1} | R_{A B} \phi | \text{.}
\end{equation}
The first two terms on the right-hand side are controlled using \eqref{eq.mixed_curv_nor}.
For the last term, we observe $R_{A B} \phi$ is the Riemann curvature operator for the $(t, r)$-level sets applied to $\phi$ (see Proposition \ref{thm.mixed_curv_hor}), and it follows (via \eqref{eq.aads_bdd} and \eqref{eq.Gamma}) that
\begin{equation} \label{eq.mixed_curv_curv}
| R_{A B} \phi | \lesssim | \phi | \text{.}
\end{equation}
Combining \eqref{eq.mixed_curv_f_1} with the above results in the second bound in \eqref{eq.mixed_curv_f}.

Similarly, by \eqref{eq.f_deriv_trivial} and \eqref{eq.TN},
\begin{equation} \label{eq.mixed_curv_f_2}
| R_{N T} \phi | \lesssim \rho^2 | R_{ \rho t } \phi | + \sum_{ A = 1 }^{n - 1} ( \rho^4 | R_{ \rho A } \phi | + f \rho^4 | R_{ t A } \phi | ) + f^2 \rho^6 \sum_{ A, B = 1 }^{n - 1} | R_{A B} \phi | \text{.}
\end{equation}
Applying \eqref{eq.mixed_curv_nor}, \eqref{eq.mixed_curv_curv}, \eqref{eq.mixed_curv_f_2} yields the first bound in \eqref{eq.mixed_curv_f}.
\end{proof}

\begin{corollary} \label{thm.mixed_curv_ex}
Let $\phi \in \Gamma \ul{T}^0_l \mc{M}$.
Then, for $f \ll_{n, \tfr} 1$, we have
\begin{equation} \label{eq.mixed_curv_ex}
| S^\alpha \nabla^\beta \phi^I \mc{R}_{\alpha \beta} \phi_I | \lesssim \mc{O} ( f^{n - 2} \rho^3 ) \cdot \left( | \nabla_T \phi |^2 + \sum_{ A = 1 }^{n - 1} | \nabla_{ E_A } \phi |^2 + | \phi |^2 \right) \text{,}
\end{equation}
where $S$ is the vector field from Definition \ref{def.S}.
\end{corollary}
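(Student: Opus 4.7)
The plan is to reduce the corollary to a frame decomposition using the orthonormal frame $\{N, T, E_1, \ldots, E_{n-1}\}$ from Section \ref{sec.aads_pseudoconvex}, together with two structural facts: the vector field $S$ is, up to lower-order error, purely in the $N$-direction, and the curvature operator $\mc{R}_{\alpha\beta}$ is antisymmetric in its two spacetime indices. These combine to eliminate $\mc{R}_{NN}$ entirely and to leave only the mixed components $\mc{R}_{NT}$ and $\mc{R}_{NE_A}$, which have already been controlled in Proposition \ref{thm.mixed_curv_nor}.

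First, I would express $S$ in the orthonormal frame. From Definition \ref{def.S}, $S = f^{n-3}\grad f$, while by \eqref{eq.N_pre} and \eqref{eq.f_grad_sq} we have $\grad f = |\nabla^\alpha f\nabla_\alpha f|^{1/2} N = (f + \mc{O}(f^3))\, N$. Hence
\begin{equation*}
S = [f^{n-2} + \mc{O}(f^n)]\, N,
\end{equation*}
so that $S^\alpha \nabla^\beta \phi^I \mc{R}_{\alpha\beta}\phi_I = [f^{n-2} + \mc{O}(f^n)]\, N^\alpha \nabla^\beta\phi^I \mc{R}_{\alpha\beta}\phi_I$.

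Next, I would expand $\nabla^\beta\phi^I$ in the $g$-orthonormal frame $\{N, T, E_A\}$, using the signature (with $T$ timelike, $N$ and $E_A$ spacelike). Since $\mc{R}_{\alpha\beta}$ is antisymmetric in $\alpha, \beta$, the $\mc{R}_{NN}\phi$ term vanishes, leaving
\begin{equation*}
N^\alpha \nabla^\beta\phi^I \mc{R}_{\alpha\beta}\phi_I = -(\nabla_T \phi^I)(\mc{R}_{NT}\phi_I) + \sum_{A=1}^{n-1}(\nabla_{E_A}\phi^I)(\mc{R}_{NE_A}\phi_I).
\end{equation*}
Applying Cauchy--Schwarz on the horizontal multi-index $I$ and using Proposition \ref{thm.mixed_curv_nor} to estimate $|\mc{R}_{NT}\phi| \lesssim \rho^3|\phi|$ and $|\mc{R}_{NE_A}\phi| \lesssim \rho^3|\phi|$, followed by AM-GM, gives
\begin{equation*}
|N^\alpha \nabla^\beta\phi^I \mc{R}_{\alpha\beta}\phi_I| \lesssim \rho^3\left(|\nabla_T\phi|^2 + \sum_{A=1}^{n-1}|\nabla_{E_A}\phi|^2 + |\phi|^2\right).
\end{equation*}
Multiplying by the scalar $f^{n-2} + \mc{O}(f^n) = \mc{O}(f^{n-2})$ yields the desired estimate \eqref{eq.mixed_curv_ex}.

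There is no real obstacle here: the whole argument is a bookkeeping exercise once one recognizes that the antisymmetry of $\mc{R}$ and the near-alignment of $S$ with $N$ reduce matters directly to the bounds already established in Proposition \ref{thm.mixed_curv_nor}. The only small point to watch is that the frame $\{N, T, E_A\}$ is defined only where $f \ll_{n,\tfr} 1$, which is precisely the hypothesis of the corollary, so the decomposition is valid throughout the region of interest.
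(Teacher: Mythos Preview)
Your proof is correct and follows essentially the same approach as the paper: both arguments recognize that $S$ is aligned with $N$ (the paper writes $S^\alpha \nabla^\beta \phi^I \mc{R}_{\alpha\beta}\phi_I = f^{n-3}\nabla_N f\,(-\nabla_T\phi^I \mc{R}_{NT}\phi_I + \sum_A \nabla_{E_A}\phi^I \mc{R}_{NE_A}\phi_I)$ with $|\nabla_N f| = \mc{O}(f)$, which is your expansion phrased slightly differently), use the antisymmetry of $\mc{R}$ to discard the $NN$ component, and then invoke the $\rho^3$ bounds of Proposition~\ref{thm.mixed_curv_nor} together with Cauchy--Schwarz/AM--GM.
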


\begin{proof}
Expanding in terms of the usual orthonormal frames, we have
\begin{equation} \label{eq.mixed_curv_ex_0}
S^\alpha \nabla^\beta \phi^I \mc{R}_{\alpha \beta} \phi_I = f^{n - 3} \nabla_N f \left( - \nabla_T \phi^I \mc{R}_{ N T } \phi_I + \sum_{ A = 1 }^{n - 1} \nabla_{ E_A } \phi^I \mc{R}_{ N E_A } \phi_I \right) \text{.}
\end{equation}
Now, by \eqref{eq.f_grad_sq}, we have that $| \nabla_N f | = \mc{O} ( f )$.
As a result,
\begin{equation} \label{eq.mixed_curv_ex_1} \begin{split}
| S^\alpha \nabla^\beta \phi^I \mc{R}_{\alpha \beta} \phi_I | \lesssim \mc{O} ( f^{n - 2} ) \cdot \left( | \nabla_T \phi | | \mc{R}_{ N T } \phi | + \sum_{ A = 1 }^{n - 1} | \nabla_{ E_A } \phi | | \mc{R}_{ N E_A } \phi | \right) \text{,}
\end{split} \end{equation}
and combining \eqref{eq.mixed_curv_f} and \eqref{eq.mixed_curv_ex_1} yields \eqref{eq.mixed_curv_ex}.
\end{proof}

\section{The Carleman Estimates} \label{sec.carleman}

In this section, we prove the following Carleman inequalities:

\begin{theorem} \label{thm.carleman}
Let $n \in \N$ and $y, r_0 > 0$, and fix constants $p, \kappa \in \R$ satisfying
\begin{equation} \label{eq.p_kappa}
0 < p < 1 \text{,} \qquad \kappa \geq \frac{n - 1}{2} \text{.}
\end{equation}
Let $( \mc{I}_\tfr, \mathring{g} )$ be an $n$-dimensional segment of bounded static AdS infinity (see Definition \ref{def.aads_infinity}), and let $\mc{M} := (r_0, \infty) \times \mc{I}_\tfr$.
Suppose in addition that:
\begin{itemize}
\item $( \mc{M}, g )$ is an admissible aAdS segment, as described in Definition \ref{def.aads}.

\item The $\zeta$-pseudoconvex property (see Definition \ref{def.pseudoconvex_ass}) is satisfied for some function $\zeta = \mc{O} (1)$ on $\mc{M}$, with constant $K > 0$.
\end{itemize}
In addition, fix sufficiently small constants $f_0, \rho_0$ satisfying
\begin{equation} \label{eq.f0_rho0}
0 < \rho_0 \ll f_0 \ll_{n, \tfr, p, K} 1 \text{,}
\end{equation}
and let $\Omega_{ f_0, \rho_0 }$ denote the region\footnote{Note $\Omega_{ f_0, \rho_0 }$ is relatively compact, hence all integrals we consider over $\Omega_{ f_0, \rho_0 }$ will be finite.} 
\begin{equation} \label{eq.Omega}
\Omega_{ f_0, \rho_0 } := \{ f < f_0 \text{, } \rho > \rho_0 \} \text{,}
\end{equation}
Then, for any $\phi \in \Gamma \ul{T}^0_l \mc{M}$, $l \geq 0$ with both $\phi$ and $\nabla \phi$ vanishing on $\{ f = f_0 \}$:
\begin{itemize}
\item If $\sigma \in \R$ and $\lambda \in [1 + \kappa, \infty)$, then there exist constants $C, \mc{C} > 0$, depending on $n$, $\tfr$, $p$, and $K$, such that the following inequality holds:
\begin{equation} \label{eq.carleman_lambda} \begin{split}
&\int_{ \Omega_{ f_0, \rho_0 } } f^{n - 2 - 2 \kappa} e^\frac{ - 2 \lambda f^p }{p} f^{-p} | ( \Box + \sigma ) \phi |^2 \\
&\qquad + \mc{C} \lambda ( \lambda^2 + | \sigma | ) \int_{ \{ \rho = \rho_0 \} } [ | \nabla_t ( \rho^{ - \kappa } \phi ) |^2 + | \nabla_\rho ( \rho^{ - \kappa } \phi ) |^2 + | \rho^{- \kappa - 1} \phi |^2 ] d \mathring{g} \\
&\quad \geq C \lambda \int_{ \Omega_{ f_0, \rho_0 } } f^{n - 2 - 2 \kappa} e^\frac{ - 2 \lambda f^p }{p} ( \rho^4 | \nabla_t \phi |^2 + \rho^4 | \nabla_\rho \phi |^2 + \rho^2 | \nasla \phi |^2 ) \\
&\quad \qquad + \lambda [ \kappa^2 - ( n - 2 ) \kappa+ \sigma - (n - 1) ] \int_{ \Omega_{ f_0, \rho_0 } } f^{n - 2 - 2 \kappa} e^\frac{ - 2 \lambda f^p }{p} | \phi |^2 \\
&\quad \qquad + C \lambda^3 \int_{ \Omega_{ f_0, \rho_0 } } f^{n - 2 - 2 \kappa} e^\frac{ - 2 \lambda f^p }{p} f^{2p} | \phi |^2 \text{.}
\end{split} \end{equation}

\item If $\kappa \gg n$, then there exist $C, \mc{C} > 0$, depending on $n$, $\tfr$, and $K$, such that
\begin{equation} \label{eq.carleman_beta} \begin{split}
&\int_{ \Omega_{ f_0, \rho_0 } } f^{n - 2 - 2 \kappa} | \Box \phi |^2 \\
&\qquad + \mc{C} \kappa^3 \int_{ \{ \rho = \rho_0 \} } [ | \nabla_t ( \rho^{ - \kappa } \phi ) |^2 + | \nabla_\rho ( \rho^{ - \kappa } \phi ) |^2 + | \rho^{- \kappa - 1} \phi |^2 ] d \mathring{g} \\
&\quad \geq C \kappa \int_{ \Omega_{ f_0, \rho_0 } } f^{n - 2 - 2 \kappa} ( \rho^4 | \nabla_t \phi |^2 + \rho^4 | \nabla_\rho \phi |^2 + \rho^2 | \nasla \phi |^2 ) \\
&\quad \qquad + C \kappa^3 \int_{ \Omega_{ f_0, \rho_0 } } f^{n - 2 - 2 \kappa} | \phi |^2 \text{.}
\end{split} \end{equation}
\end{itemize}
\end{theorem}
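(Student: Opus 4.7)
The plan is to establish both inequalities via the vector-field approach to Carleman estimates: conjugate $\Box + \sigma$ by the Carleman weight, decompose the conjugated operator into its self- and skew-adjoint parts, pair with a geometric multiplier, and extract positivity from the pseudoconvex structure of the level sets of $f$. Set $F := \tfrac{\lambda}{p} f^p - (\kappa - \tfrac{n-2}{2}) \log f$, so that $e^{-2F} = f^{n-2-2\kappa} e^{-2\lambda f^p/p}$, and let $\psi := e^{-F}\phi$. The conjugated operator
\[ \mc{L}_F \psi := e^{-F}(\Box + \sigma)(e^F \psi) = \Box \psi + 2\nabla^\alpha F \cdot \nabla_\alpha \psi + (\Box F + |\nabla F|^2 + \sigma)\psi \]
splits as $\mc{L}_F = \mc{S}_F + \mc{A}_F$ with $\mc{S}_F$ self-adjoint and $\mc{A}_F$ skew-adjoint in $L^2(dg)$. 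I test $\mc{L}_F \psi$ against the multiplier $S^\alpha \nabla_\alpha \psi^I + w_\zeta \cdot \psi^I$, where $S$ is the rescaled pseudoconvex vector field of Definition \ref{def.S} and $w_\zeta$ is the scalar from \eqref{eq.w}, and integrate by parts over $\Omega_{f_0,\rho_0}$.

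After this integration by parts, the principal bulk contribution from $\int S^\alpha \nabla_\alpha \psi^I \cdot \Box \psi_I\, dg$ produces precisely the deformation tensor $\pi_\zeta$ from \eqref{eq.pi}, contracted against $\nabla\psi \otimes \nabla\psi$. Proposition \ref{thm.pseudoconvex_quant} then converts this into the main positive contribution $K f^{n-2}\rho^2 (|\nabla_T \psi|^2 + \sum_A |\nabla_{E_A}\psi|^2)$, at the cost of a wrong-sign $-(n-1) f^{n-2} |\nabla_N \psi|^2$ term. The point of introducing the zeroth-order correction $w_\zeta$ in the multiplier is that, after substituting $\Box\psi$ via $\mc{S}_F$, the coefficient of $|\nabla_N \psi|^2$ can be rearranged so that all four normal- and tangential-derivative contributions carry the correct sign. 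The mixed-tensor curvature errors produced when commuting $\nabla$ past $\nabla$ are controlled by Corollary \ref{thm.mixed_curv_ex} and are subleading in $f$, $\rho$. Boundary terms at $\{f = f_0\}$ vanish because $\phi$ and $\nabla\phi$ vanish there, while boundary terms at $\{\rho = \rho_0\}$ are collected into the boundary integral in \eqref{eq.carleman_lambda} and \eqref{eq.carleman_beta} after undoing the conjugation $\psi \mapsto \rho^{-\kappa}\phi$ (absorbing the exponential factor, which is finite on that hypersurface, into the constant).

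The large-parameter contributions arise from $|\nabla F|^2 = \lambda^2 f^{2p} + O(\lambda f^p) + (\kappa - \tfrac{n-2}{2})^2 + \cdots$, which uses $|\nabla f|^2 = f^2 + \mc{O}(f^4)$ from \eqref{eq.f_grad_sq}. Contracting $|\nabla F|^2 |\psi|^2$ against the leading factor of $S$ and integrating yields the $\lambda^3 f^{2p}|\psi|^2$ bulk term and the indicial zeroth-order coefficient $\kappa^2 - (n-2)\kappa + \sigma - (n-1)$ of \eqref{eq.carleman_lambda}, the latter being exactly the expected mass shift associated with the Frobenius exponents discussed in Section \ref{sec:vanishing}. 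The error terms arising from Propositions \ref{thm.g}, \ref{thm.rho_grad}, \ref{thm.f_grad}, \ref{thm.f_hessian}, \ref{thm.onf} (all of shape $\mc{O}(f \rho)$ or better relative to the leading order) are absorbed by the leading positivity once $f_0, \rho_0$ are chosen sufficiently small, consistently with \eqref{eq.f0_rho0}.

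The main obstacle is the delicate bookkeeping of weights: the pseudoconvexity provides positivity only at the degenerate rate $f^{n-2}\rho^2$, the bad normal-derivative term is of size $f^{n-2}$ and must be exactly absorbed by the calibrated $w_\zeta$, and the indicial coefficient $\kappa^2 - (n-2)\kappa + \sigma - (n-1)$ can be negative for moderate $\kappa$ and must then be dominated by the $\lambda^3 f^{2p}$ term on the region $\Omega_{f_0,\rho_0}$ --- which is possible precisely because $\lambda \geq 1 + \kappa$ and the $\lambda^3$ positivity outweighs the $\lambda$-sized potentially negative contribution there. The second estimate \eqref{eq.carleman_beta} emerges from the same computation with $\lambda = 0$, $\sigma = 0$, where the assumption $\kappa \gg n$ makes $\kappa^2 - (n-2)\kappa - (n-1) \sim \kappa^2$ automatically dominate, producing the $\kappa^3$ mass on the right and bypassing the need for the exponential weight.
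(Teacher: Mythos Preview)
Your overall architecture --- conjugate by a weight of the form $e^{-F}$ with $F$ built from $\log f$ and $f^p$, contract $\mc{L}\psi$ against the $S$-multiplier, harvest the deformation tensor $\pi_\zeta$ via Proposition \ref{thm.pseudoconvex_quant}, and absorb curvature errors via Corollary \ref{thm.mixed_curv_ex} --- matches the paper's route. But there is a genuine gap in how you handle the bad normal-derivative term, and it is exactly the point where the sharp indicial coefficient is manufactured.

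You claim that the wrong-sign $-(n-1)f^{n-2}|\nabla_N\psi|^2$ term is ``exactly absorbed by the calibrated $w_\zeta$''. This is not correct: in the paper $h_\zeta = f^{n-3}w_\zeta + \tfrac{1}{2}\nabla^\alpha S_\alpha = \mc{O}(f^n)$ (see \eqref{eq.h_est}), so $w_\zeta$ only contributes error terms and cannot cancel anything at order $f^{n-2}$. The actual source of positivity for $|\nabla_N\psi|^2$ is the first-order cross term $2F' f^{-n+3}\nabla_S\psi$ in the expansion of $\mc{L}\psi$ (equation \eqref{eq.conj_1}), which yields $[(2\kappa - n + 1)f^{n-2} + 2\lambda f^{n-2+p}]|\nabla_N\psi|^2$ after combining with the pseudoconvex contribution; see \eqref{eq.conj}. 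The key step you are missing entirely is then the pointwise Hardy-type inequality (Lemma \ref{thm.hardy_ptwise}),
\[
f^q |\nabla_N\psi|^2 \geq \tfrac{1}{4}(q-n)^2 f^q |\psi|^2 + \text{(divergence)} + \text{(lower order)},
\]
applied with $q = n-2$ and $q = n-2+p$. This is precisely what converts the nonnegative coefficient $(2\kappa - n + 1)f^{n-2}$ of $|\nabla_N\psi|^2$ into the shift $\kappa^2 - n\kappa + \sigma \;\longrightarrow\; \kappa^2 - (n-2)\kappa + \sigma - (n-1)$ in the zero-order bulk term; your attempt to read this constant directly off $|\nabla F|^2$ with a shifted logarithmic exponent (aside from a sign error in your $F$ --- your $e^{-2F}$ is $f^{2\kappa - n + 2}e^{-2\lambda f^p/p}$, not $f^{n-2-2\kappa}e^{-2\lambda f^p/p}$) cannot produce this exact value. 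Since the sharpness of the vanishing assumption in Theorem \ref{theo:ads} hinges on that exact coefficient, the Hardy step is not a technical detail you can suppress.
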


\begin{remark}
Again, readers interested only in scalar equations can assume throughout that $\phi \in C^\infty (\mc{M})$ and ignore all multi-indices ``$I$" in the upcoming proof.
\end{remark}

\begin{remark}
As indicated by the notation in \eqref{eq.f0_rho0}, the constant $f_0$ will be determined in the course of the proof depending only on the fixed parameters $n, y, p, K$.
It corresponds to choosing a region sufficiently close to the boundary where both the AdS asymptotics and the pseudoconvexity property can be quantitatively exploited. 
\end{remark}

The remainder of this section is dedicated to the proof of Theorem \ref{thm.carleman}.
We focus our attention on \eqref{eq.carleman_lambda}, which is proved in Sections \ref{sec.carleman_conj} and \ref{sec.carleman_comp}.
The proof of \eqref{eq.carleman_beta} is analogous but simpler; we only briefly summarize this in Section \ref{sec.carleman_beta}.
From now on, we adopt the assumptions of Theorem \eqref{thm.carleman}, and we fix $\sigma \in \R$ and $\lambda \geq 1 + \kappa$.

\subsection{The Conjugated Inequality} \label{sec.carleman_conj}

As is standard in the proofs of Carleman-type estimates, the main idea is to not work with $\phi$ itself, but rather with $\phi$ multiplied by a specific weight.
To be more specific, we define the function
\begin{equation} \label{eq.psi_phi}
\psi := e^{-F} \phi \text{,}
\end{equation}
where $F$ is the following reparametrization of $f$:
\begin{equation} \label{eq.F}
F := \kappa \cdot \log f + \lambda p^{-1} f^p \text{.}
\end{equation}
Letting $'$ denote differentiation with respect to $f$, then $F$ satisfies
\begin{equation} \label{eq.F_deriv}
F' = \kappa f^{-1} + \lambda f^{-1 + p} \text{,}  \qquad F'' = - \kappa f^{-2} - \lambda (1 - p) f^{-2 + p} \text{.}
\end{equation}

The aim of this subsection is to prove a preliminary inequality for $\psi$.
In order to state this estimate succintly, we first define the following:
\begin{itemize}
\item Recalling $w_\zeta$ and $S$, as given in \eqref{eq.w} and \eqref{eq.S}, respectively, we define
\begin{equation} \label{eq.Sw}
S_\zeta \psi := \nabla_S \psi + h_\zeta \psi \text{,} \qquad h_\zeta := f^{n - 3} w_\zeta + \frac{1}{2} \nabla^\alpha S_\alpha \in C^\infty ( \mc{M} ) \text{.}
\end{equation}

\item We also define the conjugated wave operator $\mc{L}$:
\begin{equation} \label{eq.L}
\mc{L} \psi := e^{-F} ( \Box + \sigma ) ( e^F \psi ) = e^{-F} ( \Box + \sigma ) \phi \text{.}
\end{equation}

\item Finally, as in Proposition \ref{thm.rho_grad}, we let $\mc{N} := | \nabla^\alpha \rho \nabla_\alpha \rho |^{- \frac{1}{2}} \grad \rho$ denote the outer-pointing unit normal to the level sets of $\rho$.
\end{itemize}
The main estimate of this subsection can now be expressed as follows.

\begin{lemma} \label{thm.psi_est}
There exists $C > 0$, depending on $n$, $\tfr$, $p$, $K$, such that when $f < f_0$,
\begin{equation} \label{eq.psi_est} \begin{split}
\lambda^{-1} f^{n - 2 - p} | \mc{L} \psi |^2 &\geq C \lambda f^{n - 2 + p} | \nabla_N \psi |^2 + C f^{n - 2} \rho^2 ( | \nabla_T \psi |^2 + | \nasla \psi |^2 ) \\
&\qquad + [ \kappa^2 - ( n - 2 ) \kappa + \sigma - (n - 1) ] f^{n - 2} | \psi |^2 \\
&\qquad + C ( \lambda f^{n - 2 + p} + \lambda^2 f^{n - 2 + 2p} ) | \psi |^2 + \nabla^\beta P_\beta \text{,}
\end{split} \end{equation}
where the $1$-form $P$ satisfies, for some $\mc{C} > 0$ depending on $n$, $y$, $p$,
\begin{equation} \label{eq.psi_est_boundary}
P ( \mc{N} ) \leq \mc{C} f^{n - 2} \rho^2 ( | \nabla_t \psi |^2 + | \nabla_\rho \psi |^2 ) + \mc{C} ( \lambda^2 + | \sigma | ) f^{n - 2} | \psi |^2 \text{.}
\end{equation}
\end{lemma}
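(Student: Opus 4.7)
The plan is a Carleman multiplier argument: compute the pointwise product $2(\mc{L}\psi^I)(S_\zeta\psi_I)$, reorganise it as (positive bulk) $+\nabla^\beta P_\beta$, then apply pointwise Cauchy--Schwarz
\[
2\,\mc{L}\psi^I \cdot S_\zeta\psi_I \leq \lambda^{-1} f^{n-2-p} |\mc{L}\psi|^2 + \lambda f^{-(n-2-p)} |S_\zeta\psi|^2,
\]
and absorb the $|S_\zeta\psi|^2$ penalty into the $\lambda f^{n-2+p}|\nabla_N\psi|^2$ bulk generated by the Carleman weight $F$. The first step is to expand
\[
\mc{L}\psi = \Box\psi + 2F'\nabla^\alpha f\,\nabla_\alpha\psi + (\sigma + \Box F + |\nabla F|^2)\psi,
\]
using (\ref{eq.F_deriv}) together with Propositions \ref{thm.f_grad}, \ref{thm.f_hessian}, and (\ref{eq.f_box}).

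For the principal piece $(\Box\psi) \cdot \nabla_S\psi$, the Leibniz rule together with the curvature commutator (\ref{eq.mixed_curv_hor}) yield
\begin{align*}
2(\Box\psi^I)(\nabla_S\psi_I) &= 2\nabla^\beta\bigl(\nabla_\beta\psi^I \nabla_S\psi_I - \tfrac{1}{2}S_\beta|\nabla\psi|^2\bigr) + 2\pi_\zeta^{\alpha\beta}\nabla_\alpha\psi^I\nabla_\beta\psi_I \\
&\qquad - 2f^{n-3}w_\zeta|\nabla\psi|^2 - 2S^\alpha\nabla^\beta\psi^I\mc{R}_{\alpha\beta}\psi_I,
\end{align*}
where $\pi_\zeta$ is the tensor from (\ref{eq.pi}). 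The $-2f^{n-3}w_\zeta|\nabla\psi|^2$ term is cancelled exactly by the $-2h_\zeta|\nabla\psi|^2$ contribution arising when $2(\Box\psi)(h_\zeta\psi)$ is integrated by parts, modulo the $\nabla^\alpha S_\alpha$ correction already absorbed into the definition (\ref{eq.Sw}) of $h_\zeta$; this cancellation is the whole purpose of choosing $h_\zeta$ as stated. The remaining gradient-quadratic bulk $2\pi_\zeta^{\alpha\beta}\nabla_\alpha\psi^I\nabla_\beta\psi_I$ is then controlled by Proposition \ref{thm.pseudoconvex_quant}, delivering the tangential positivity $Kf^{n-2}\rho^2(|\nabla_T\psi|^2 + |\nasla\psi|^2)$ modulo a negative $(n-1)f^{n-2}|\nabla_N\psi|^2$ contribution, while Corollary \ref{thm.mixed_curv_ex} bounds the curvature commutator by the negligible error $\mc{O}(f^{n-2}\rho^3)$.

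The first-order part of $\mc{L}\psi$ paired with $\nabla_S\psi = f^{n-3}\nabla^\alpha f\nabla_\alpha\psi$ produces
\[
4F'f^{n-3}(\nabla^\alpha f\nabla_\alpha\psi^I)(\nabla^\beta f\nabla_\beta\psi_I) = 4F'f^{n-1}|\nabla_N\psi|^2 + \text{l.o.t.},
\]
which by (\ref{eq.F_deriv}) expands as $(4\kappa f^{n-2} + 4\lambda f^{n-2+p})|\nabla_N\psi|^2$ modulo subleading terms. This furnishes the crucial $\lambda f^{n-2+p}|\nabla_N\psi|^2$ bulk, large enough under $\kappa \geq (n-1)/2$ and $\lambda \geq 1+\kappa$ to dominate both the $-(n-1)f^{n-2}|\nabla_N\psi|^2$ piece from $\pi_\zeta$ and the Cauchy--Schwarz penalty $\lambda f^{-(n-2-p)}|S_\zeta\psi|^2 \sim \lambda f^{n-2+p}|\nabla_N\psi|^2$. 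The same first-order piece paired with $h_\zeta\psi$, together with the zero-order piece $V\psi$, $V := \sigma + \Box F + |\nabla F|^2$, paired with $S_\zeta\psi$, produce after integration by parts a bulk $|\psi|^2$ contribution. The key algebraic identity $-\tfrac{V}{2}\nabla^\alpha S_\alpha + Vh_\zeta = Vf^{n-3}w_\zeta$ from (\ref{eq.Sw}), combined with $\nabla^\alpha S_\alpha = -2f^{n-2} + \mc{O}(f^n)$ coming from (\ref{eq.f_box}) and with the expansion $V = \kappa^2 - n\kappa + \sigma + (2\kappa - n + p)\lambda f^p + \lambda^2 f^{2p} + \ldots$, permits the $|\psi|^2$ coefficient to be assembled, after careful accounting of the various subleading cross-terms and of the $\Box h_\zeta$ and $\nabla^\alpha(F'h_\zeta \nabla_\alpha f)$ corrections, into the three pieces $[\kappa^2 - (n-2)\kappa + \sigma - (n-1)]f^{n-2}$, $C\lambda f^{n-2+p}$, and $C\lambda^2 f^{n-2+2p}$ stated in (\ref{eq.psi_est}).

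Collecting the divergence terms defines $P^\beta$, and (\ref{eq.psi_est_boundary}) follows from evaluating $P(\mc{N})$ pointwise with the help of Proposition \ref{thm.rho_grad} and the size bounds $|S| \lesssim f^{n-2}\rho$, $|h_\zeta| \lesssim f^{n-2}$, $|F'| \lesssim \lambda$, $|V| \lesssim \lambda^2 + |\sigma|$. The main technical obstacle is book-keeping: tracking the various powers of $f$, $\rho$, $\lambda$, and $\kappa$ produced by the asymptotic expansions of $g$, $\nabla f$, and $\nabla^2 f$ in Section \ref{sec.aads_prop}, and ensuring that all error terms are genuinely subleading once $f_0$ is chosen small in terms of $n$, $\tfr$, $p$, $K$.
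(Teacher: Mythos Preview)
Your overall architecture matches the paper: multiply $\mc{L}\psi$ by the modified multiplier $S_\zeta\psi$, extract the $\pi_\zeta$ bulk via Proposition~\ref{thm.pseudoconvex_quant}, use the $2F'f^{-n+3}|\nabla_S\psi|^2$ term for $|\nabla_N\psi|^2$ positivity, and finish with Cauchy--Schwarz. However, there is a genuine gap in your treatment of the $|\psi|^2$ coefficient.

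If you carry out the algebra you sketch, the leading $f^{n-2}|\psi|^2$ coefficient produced by the zero-order term paired with $S_\zeta\psi$ is $(\kappa^2 - n\kappa + \sigma)$, \emph{not} the $\kappa^2 - (n-2)\kappa + \sigma - (n-1)$ claimed in (\ref{eq.psi_est}); the discrepancy is exactly $2\kappa - (n-1)$. Likewise the $\lambda f^{n-2+p}|\psi|^2$ coefficient you obtain is $\tfrac{2-p}{2}\lambda(2\kappa - n + p)$, which is \emph{negative} at the endpoint $\kappa = (n-1)/2$ allowed by (\ref{eq.p_kappa}). Neither the $\Box h_\zeta$ nor the $\nabla^\alpha(F'h_\zeta\nabla_\alpha f)$ corrections help here, since $h_\zeta = \mc{O}(f^n)$ by (\ref{eq.h_est}) and these contribute only at order $f^n$.

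The missing ingredient is a pointwise Hardy-type inequality of the form
\[
f^q|\nabla_N\psi|^2 \geq \tfrac{1}{4}(q-n)^2 f^q|\psi|^2 - \tfrac{1}{2}(q-n)\nabla^\beta(f^{q-1}\nabla_\beta f\,|\psi|^2) + \mc{O}(f^{q+2})(\ldots),
\]
applied with $q = n-2$ (coefficient $(2\kappa - n + 1)$) and $q = n-2+p$ (coefficient $\lambda$). This converts the surplus $|\nabla_N\psi|^2$ positivity you have already identified, namely $(2\kappa - n + 1)f^{n-2} + \lambda f^{n-2+p}$, into precisely the additional $|\psi|^2$ positivity needed to reach the stated coefficients, at the cost of another divergence term absorbed into $P$. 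Without this step the lemma as stated does not follow from your argument.
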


Lemma \ref{thm.psi_est} is proved in the remainder of this subsection.

\subsubsection{The Stress-Energy Tensor}

Recall the tensor field $\pi_\zeta$ defined in \eqref{eq.pi}, and let $Q$ be the stress-energy tensor for the (free) wave equation with respect to $\psi$:
\begin{equation} \label{eq.emt}
Q_{\alpha\beta} := \nabla_\alpha \psi^I \nabla_\beta \psi_I - \frac{1}{2} g_{\alpha\beta} \nabla^\mu \psi^I \nabla_\mu \psi_I \text{.}
\end{equation}
Direct computations yield that
\begin{equation} \label{eq.set_1} \begin{split}
S^\alpha \nabla^\beta Q_{\alpha\beta} &= \Box \psi^I \nabla_S \psi_I + S^\alpha \nabla^\beta \psi_I \mc{R}_{\alpha \beta} \psi^I \text{,} \\
\nabla^\beta ( Q_{\alpha\beta} S^\alpha ) &= \Box \psi^I \nabla_S \psi_I - \pi_\zeta^{\alpha \beta} \nabla_\alpha \psi^I \nabla_\beta \psi_I - h_\zeta \cdot \nabla^\mu \psi^I \nabla_\mu \psi_I \\
&\qquad + S^\alpha \nabla^\beta \psi_I \mc{R}_{\alpha \beta} \psi^I \text{.}
\end{split} \end{equation}
Furthermore, defining the current
\begin{equation} \label{eq.P_sharp}
P^Q_\beta := Q_{\alpha\beta} S^\alpha + \frac{1}{2} h_\zeta \cdot \nabla_\beta | \psi |^2 - \frac{1}{2} \nabla_\beta h_\zeta \cdot | \psi |^2 \text{,}
\end{equation}
we see from \eqref{eq.set_1} that
\begin{equation} \label{eq.set_2}
\nabla^\beta P^Q_\beta = \Box \psi^I S_\zeta \psi_I - \pi_\zeta^{\alpha \beta} \nabla_\alpha \psi^I \nabla_\beta \psi_I + S^\alpha \nabla^\beta \psi_I \mc{R}_{\alpha \beta} \psi^I - \frac{1}{2} \Box h_\zeta \cdot | \psi |^2 \text{.}
\end{equation}

Next, recall that the $\zeta$-pseudoconvexity criterion implies that the hypotheses of Proposition \ref{thm.pseudoconvex_quant} hold.
Thus, using \eqref{eq.pseudoconvex_quant}, we can bound\footnote{Recall that in our $\mc{O}$-notation, the constants are allowed to depend on $n$ and $\tfr$.}
\begin{equation} \label{eq.set_pi} \begin{split}
\pi_\zeta^{\alpha \beta} \nabla_\alpha \psi^I \nabla_\beta \psi_I &\geq [ K f^{n - 2} \rho^2 + \mc{O} ( f^{n - 1} \rho^2 ) ] ( | \nabla_T \psi |^2 + | \nasla \psi |^2 ) \\
&\qquad - [ (n - 1) f^{n - 2} + \mc{O} ( f^n ) ] | \nabla_N \psi |^2 \text{.}
\end{split} \end{equation}
Moreover, an application of \eqref{eq.mixed_curv_ex} yields
\begin{equation} \label{eq.set_R}
- S^\alpha \nabla^\beta \psi_I \mc{R}_{\alpha \beta} \psi^I \geq \mc{O} ( f^{n - 2} \rho^3 ) \cdot ( | \nabla_T \psi |^2 + | \nasla \psi |^2 + | \psi |^2 ) \text{.}
\end{equation}
From \eqref{eq.f_grad_sq} and \eqref{eq.f_box}, we can see that
\begin{equation} \label{eq.S_divg}
\nabla^\alpha S_\alpha = - 2 f^{n - 2} + \mc{O} ( f^n ) \text{.}
\end{equation}
Thus, using \eqref{eq.f_deriv_trivial}, \eqref{eq.error_est}, \eqref{eq.w}, \eqref{eq.Sw}, and the assumption $\zeta = \mc{O} (1)$, we obtain
\begin{equation} \label{eq.h_est}
h_\zeta = \mc{O} ( f^n ) \text{,} \qquad \Box h_\zeta = \mc{O} ( f^n ) \text{.}
\end{equation}
Applying \eqref{eq.set_pi}-\eqref{eq.h_est} to \eqref{eq.set_2} and recalling \eqref{eq.f_deriv_trivial} yields
\begin{equation} \label{eq.set} \begin{split}
\Box \psi^I S_\zeta \psi_I &\geq \nabla^\beta P^Q_\beta + \mc{O} ( f^n ) \cdot | \psi |^2 - [ (n - 1) f^{n - 2} + \mc{O} ( f^n ) ] \cdot | \nabla_N \psi |^2 \\
&\qquad + [ K f^{n - 2} \rho^2 + \mc{O} ( f^{n - 1} \rho^2 ) ] \cdot ( | \nabla_T \psi |^2 + | \nasla \psi |^2 ) \text{.}
\end{split} \end{equation}

\subsubsection{The Conjugate Operator}

Expanding, we see that
\begin{equation} \label{eq.conj_1} \begin{split}
\mc{L} \psi &= e^{-F} \nabla^\alpha ( F' e^F \nabla_\alpha f \cdot \psi ) + e^{-F} \nabla^\alpha ( e^F \nabla_\alpha \psi ) + \sigma \psi \\
&= \Box \psi + 2 F' f^{-n + 3} \cdot S \psi + \mc{A}_0 \cdot \psi \text{,}
\end{split} \end{equation}
where we observe using \eqref{eq.f_grad_sq}, \eqref{eq.f_box}, \eqref{eq.F_deriv}, and \eqref{eq.h_est} that
\begin{equation} \label{eq.conj_A0} \begin{split}
\mc{A}_0 &:= [ ( F^\prime )^2 + F^{\prime\prime} ] \nabla^\alpha f \nabla_\alpha f + F^\prime \Box f + \sigma \\
&= ( \kappa^2 - n \kappa + \sigma ) + \lambda ( 2 \kappa - n + p ) f^p + \lambda^2 f^{2 p} + \lambda^2 \cdot \mc{O} ( f^2 ) \text{.}
\end{split} \end{equation}

Contracting \eqref{eq.conj_1} with $S_\zeta \psi$ and recalling \eqref{eq.h_est}, we have
\begin{equation} \label{eq.conj_2} \begin{split}
\mc{L} \psi^I S_\zeta \psi_I &= \Box \psi^I S_\zeta \psi_I + 2 F' f^{-n + 3} | \nabla_S \psi |^2 + \mc{A} \cdot \psi^I \nabla_S \psi_I + h_\zeta \mc{A}_0 | \psi |^2 \\
&= \Box \psi^I S_\zeta \psi_I + 2 F' f^{-n + 3} | \nabla_S \psi |^2 + \mc{A} \cdot \psi^I \nabla_S \psi_I \\
&\qquad + \lambda^2 \cdot \mc{O} ( f^n ) \cdot | \psi |^2 \text{.}
\end{split} \end{equation}
where by \eqref{eq.h_est},
\begin{equation} \label{eq.conj_A} \begin{split}
\mc{A} &:= \mc{A}_0 + 2 F' f^{-n + 3} h_\zeta \\
&= ( \kappa^2 - n \kappa + \sigma ) + \lambda ( 2 \kappa - n + p ) f^p + \lambda^2 f^{2 p} + \lambda^2 \cdot \mc{O} ( f^2 ) \text{.}
\end{split} \end{equation}

Next, letting
\begin{equation} \label{eq.P_flat}
P^S_\beta := \frac{1}{2} \mc{A} S_\beta \cdot | \psi |^2 \text{,}
\end{equation}
and applying the product rule yields
\begin{equation} \label{eq.A_ibp} \begin{split}
\mc{A} \cdot \psi^I \nabla_S \psi_I &= \nabla^\beta P^S_\beta - \frac{1}{2} ( S \mc{A} + \mc{A} \nabla^\alpha S_\alpha ) \cdot | \psi |^2 \text{.}
\end{split} \end{equation}
Applying \eqref{eq.f_grad_sq}, \eqref{eq.error_est}, and \eqref{eq.S_divg} to \eqref{eq.conj_A} yields
\begin{equation} \label{eq.GG} \begin{split}
- \frac{1}{2} ( S \mc{A} + \mc{A} \nabla^\alpha S_\alpha ) &= ( \kappa^2 - n \kappa + \sigma ) f^{n - 2} + \frac{2 - p}{2} \lambda ( 2 \kappa - n + p ) f^{n - 2 + p} \\
&\qquad + (1 - p) \lambda^2 f^{n - 2 + 2p} + \lambda^2 \cdot \mc{O} ( f^n ) \text{.}
\end{split} \end{equation}
Thus, combining \eqref{eq.F_deriv}, \eqref{eq.conj_2}, \eqref{eq.A_ibp}, and \eqref{eq.GG}, we obtain
\begin{equation} \label{eq.conj_3} \begin{split}
\mc{L} \psi^I S_\zeta \psi_I &= \nabla^\beta P^S_\beta + \Box \psi^I S_\zeta \psi_I + 2 ( \kappa f^{-n + 2} + \lambda f^{-n + 2 + p} ) | \nabla_S \psi |^2 \\
&\qquad + ( \kappa^2 - n \kappa + \sigma ) f^{n - 2} | \psi |^2 + \frac{2 - p}{2} \lambda ( 2 \kappa - n + p ) f^{n - 2 + p} | \psi |^2 \\
&\qquad + (1 - p) \lambda^2 f^{n - 2 + 2p} | \psi |^2 + \lambda^2 \cdot \mc{O} ( f^n ) \cdot | \psi |^2 \text{.}
\end{split} \end{equation}

Note that \eqref{eq.f_grad_sq} and \eqref{eq.S} imply that
\begin{equation} \label{eq.S_N} | \nabla_S \psi |^2 = [ f^{2 n - 4} + \mc{O} ( f^{2 n - 2} ) ] | \nabla_N \psi |^2 \text{.} \end{equation}
Applying \eqref{eq.set}, \eqref{eq.conj_3}, and \eqref{eq.S_N} results in the identity
\begin{equation} \label{eq.conj} \begin{split}
\mc{L} \psi^I S_\zeta \psi_I &\geq [ ( 2 \kappa - n + 1 ) f^{n - 2} + 2 \lambda f^{n - 2 + p} + \lambda \cdot \mc{O} (f^n) ] | \nabla_N \psi |^2 \\
&\qquad + [ K f^{n - 2} \rho^2 + \mc{O} ( f^{n - 1} \rho^2 ) ] ( | \nabla_T \psi |^2 + | \nasla \psi |^2 ) \\
&\qquad + \left[ ( \kappa^2 - n \kappa + \sigma ) f^{n - 2} + \frac{2 - p}{2} \lambda ( 2 \kappa - n + p ) f^{n - 2 + p} \right] | \psi |^2 \\
&\qquad + [ (1 - p) \lambda^2 f^{n - 2 + 2p} + \lambda^2 \cdot \mc{O} ( f^n ) ] | \psi |^2 + \nabla^\beta ( P^Q_\beta + P^S_\beta ) \text{.}
\end{split} \end{equation}

\subsubsection{A Hardy-Type Inequality}

We can generate extra positive in the $| \psi |^2$-terms in \eqref{eq.conj} by using the positivity already present in the $| \nabla_N \psi |^2$.
This is done via a Hardy-type inequality, for which we give the pointwise precursor below.

\begin{lemma} \label{thm.hardy_ptwise}
For any $q \in \R$, the following inequality holds:
\begin{equation} \label{eq.hardy_ptwise} \begin{split}
f^q | \nabla_N \psi |^2 &\geq \frac{1}{4} (q - n)^2 f^q \cdot | \psi |^2 + \mc{O} ( f^{q + 2} ) \cdot ( | \nabla_N \psi |^2 + | \psi |^2 ) \\
&\qquad - \frac{1}{2} (q - n) \nabla^\beta ( f^{q - 1} \nabla_\beta f \cdot | \psi |^2 ) \text{.}
\end{split} \end{equation}
\end{lemma}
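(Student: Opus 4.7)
My plan is to read this as a pointwise precursor of a Hardy inequality adapted to the ``distance'' function $f$: the $1/r^2$-weight of a classical Hardy inequality is played here by $f^q$, and the sharp constant $\tfrac14(q-n)^2$ should emerge from a completion-of-square rather than a Young-type optimization. Accordingly, I would proceed in three steps: first, produce a divergence identity whose $|\psi|^2$-coefficient on the left is exactly $(q-n) f^q$ to leading order; second, complete the square pointwise to trade the resulting cross term $\psi^I \nabla_N \psi_I$ against $|\nabla_N \psi|^2$ and $|\psi|^2$; third, combine the two.

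For the first step, I expand by the product rule:
\[
\nabla^\beta\bigl(f^{q-1} \nabla_\beta f \cdot | \psi |^2\bigr) = \bigl[(q-1) f^{q-2} \nabla^\beta f \nabla_\beta f + f^{q-1} \Box f\bigr] | \psi |^2 + 2 f^{q-1} \nabla^\beta f \cdot \psi^I \nabla_\beta \psi_I .
\]
The asymptotic identities $\nabla^\alpha f \nabla_\alpha f = f^2 + \mc{O}(f^4)$ from \eqref{eq.f_grad_sq} and $\Box f = -(n-1) f + \mc{O}(f^3)$ from \eqref{eq.f_box} turn the bulk coefficient into $(q-n) f^q + \mc{O}(f^{q+2})$. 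For the cross term, I use $\nabla^\beta f \nabla_\beta \psi_I = |\nabla f|\,\nabla_N \psi_I$ together with $f^{q-1}|\nabla f| = f^q + \mc{O}(f^{q+2})$, absorbing the subleading piece via $2|\psi||\nabla_N \psi| \leq |\psi|^2 + |\nabla_N \psi|^2$. Rearranging yields
\[
(q - n) f^q | \psi |^2 + 2 f^q \psi^I \nabla_N \psi_I = \nabla^\beta\bigl(f^{q-1} \nabla_\beta f \cdot | \psi |^2\bigr) + \mc{O}(f^{q+2})\bigl(| \psi |^2 + | \nabla_N \psi |^2\bigr) .
\]

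For the second step, since $\nabla_N \psi$ is a horizontal tensor of the same rank as $\psi$, the pointwise horizontal-tensor inequality
\[
0 \leq \Bigl| \nabla_N \psi + \tfrac{q-n}{2}\, \psi \Bigr|^2 = | \nabla_N \psi |^2 + (q - n)\, \psi^I \nabla_N \psi_I + \tfrac{(q-n)^2}{4} | \psi |^2
\]
holds, which gives $-(q-n) f^q \psi^I \nabla_N \psi_I \leq f^q | \nabla_N \psi |^2 + \tfrac{(q-n)^2}{4} f^q | \psi |^2$. Solving the divergence identity from the previous step for $2 f^q \psi^I \nabla_N \psi_I$, substituting into this bound, and simplifying the $| \psi |^2$-coefficients via $\tfrac{(q-n)^2}{2} - \tfrac{(q-n)^2}{4} = \tfrac{(q-n)^2}{4}$ produces \eqref{eq.hardy_ptwise} exactly.

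The key conceptual point is that completing the square, rather than applying Young's inequality with a free parameter, captures the sharp Hardy constant $\tfrac14(q-n)^2$ in a single stroke, and does so uniformly in the sign of $q - n$. I do not anticipate any serious obstacle: the only genuine care required is in isolating the leading asymptotics so that all corrections land in $\mc{O}(f^{q+2})$, and in treating the cross-term errors symmetrically between $|\psi|^2$ and $|\nabla_N \psi|^2$ so that both survive on the right-hand side of the final bound.
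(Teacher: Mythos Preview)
Your proof is correct and is essentially the same argument as the paper's, just organized differently: the paper starts from the single square $0 \le f^{q-2}\,|\nabla^\beta f\,\nabla_\beta \psi + b f\,\psi|^2$ with a free parameter $b$, expands, converts the cross term to a divergence, and then optimizes to $b=\tfrac12(q-n)$, whereas you first write down the divergence identity for $\nabla^\beta(f^{q-1}\nabla_\beta f\,|\psi|^2)$ and then complete the square on $|\nabla_N\psi+\tfrac{q-n}{2}\psi|^2$ with the optimal constant fixed from the outset. The two routes are equivalent once one notes $\nabla^\beta f\,\nabla_\beta\psi = |\nabla f|\,\nabla_N\psi$ and $|\nabla f|=f+\mc{O}(f^3)$.
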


\begin{proof}
Let $b \in \R$ be a constant to be fixed later, and observe that
\begin{align*}
0 &\leq f^{q - 2} | \nabla^\beta f \nabla_\beta \psi + b f \cdot \psi |^2 \\
&= f^{q - 2} | \nabla^\beta f \nabla_\beta \psi |^2 + b f^{q - 1} \cdot \nabla^\beta f \nabla_\beta ( \psi^2 ) + b^2 f^q | \psi |^2 \\
&= f^{q - 2} | \nabla^\beta f \nabla_\beta \psi |^2 + [ b^2 f^q - b (q - 1) f^{q - 2} \nabla^\beta f \nabla_\beta f - b f^{q - 1} \Box f ] | \psi |^2 \\
&\qquad + \nabla^\beta ( b f^{q - 1} \nabla_\beta f \cdot | \psi |^2 ) \text{.}
\end{align*}
Recalling \eqref{eq.f_grad_sq} and \eqref{eq.f_box}, the above becomes
\begin{equation} \label{eq.hardy_ptwise_1} \begin{split}
[ f^q + \mc{O} ( f^{q + 2} ) ] | \nabla_N \psi |^2 &\geq [ - b ( b + n - q ) f^q + O ( f^{q + 2} ) ] | \psi |^2 \\
&\qquad - \nabla^\beta ( b f^{q - 1} \nabla_\beta f \cdot | \psi |^2 ) \text{.}
\end{split} \end{equation}
Finally, we observe that the constant $- b ( b + n - q )$ in \eqref{eq.hardy_ptwise_1} is maximized when $b = \frac{1}{2} (q - n)$.
Taking this choice of $b$ results in \eqref{eq.hardy_ptwise}.
\end{proof}

In particular, taking $q$ to be $n - 2$ and $n - 2 + p$ in \eqref{eq.hardy_ptwise}, we obtain
\begin{equation} \label{eq.hardy_1} \begin{split}
f^{n - 2} | \nabla_N \psi |^2 &\geq f^{n - 2} | \psi |^2 + \nabla^\beta ( f^{n - 3} \nabla_\beta f \cdot | \psi |^2 ) \\
&\qquad + \mc{O} ( f^n ) \cdot ( | \nabla_N \psi |^2 + | \psi |^2 ) \text{,} \\
f^{n - 2 + p} | \nabla_N \psi |^2 &\geq \frac{ (2 - p)^2 }{ 4 } f^{n - 2} | \psi |^2 + \frac{2 - p}{2} \nabla^\beta ( f^{n - 3} \nabla_\beta f \cdot | \psi |^2 ) \\
&\qquad + \mc{O} ( f^n ) \cdot ( | \nabla_N \psi |^2 + | \psi |^2 ) \text{.}
\end{split} \end{equation}
Letting
\begin{equation} \label{eq.P_natural}
P^H_\beta := ( 2 \kappa - n + 1 ) f^{n - 3} \nabla_\beta f \cdot | \psi |^2 + \frac{2 - p}{2} \lambda f^{n - 3 + p} \nabla_\beta f \cdot | \psi |^2 \text{,}
\end{equation}
then applying \eqref{eq.hardy_1} to \eqref{eq.conj} (note $2 \kappa - n + 1 \geq 0$ by \eqref{eq.p_kappa}) yields
\begin{equation} \label{eq.hardy_2} \begin{split}
\mc{L} \psi^I S_\zeta \psi_I &\geq \nabla^\beta ( P^Q_\beta + P^S_\beta + P^H_\beta ) + [ \lambda f^{n - 2 + p} + \lambda \cdot \mc{O} (f^n) ] | \nabla_N \psi |^2 \\
&\qquad + [ K f^{n - 2} \rho^2 + \mc{O} ( f^{n - 1} \rho^2 ) ] ( | \nabla_T \psi |^2 + | \nasla \psi |^2 ) \\
&\qquad + [ \kappa^2 - ( n - 2 ) \kappa + \sigma - (n - 1) ] f^{n - 2} | \psi |^2 \\
&\qquad + \frac{2 - p}{2} \, \lambda \left( 2 \kappa - n + 1 + \frac{p}{2} \right) f^{n - 2 + p} | \psi |^2 \\
&\qquad + [ (1 - p) \lambda^2 f^{n - 2 + 2p} + \lambda^2 \cdot \mc{O} ( f^n ) ] | \psi |^2 \text{,}
\end{split} \end{equation}

Next, noting from \eqref{eq.f_grad_sq}, \eqref{eq.S}, and \eqref{eq.h_est} that
\begin{align*}
\mc{L} \psi^I S_\zeta \psi_I &= \mc{L} \psi^I S \psi_I + \mc{L} \psi^I \cdot h_\zeta \psi_I \\
&\leq \lambda^{-1} f^{n - 2 - p} | \mc{L} \psi |^2 + \frac{1}{2} \lambda [ f^{n - 2 + p} + \mc{O} ( f^n ) ] | \nabla_N \psi |^2 + \lambda \cdot \mc{O} ( f^n ) \cdot | \psi |^2 \text{,}
\end{align*}
then combining the above with \eqref{eq.hardy_2} yields
\begin{equation} \label{eq.hardy_3} \begin{split}
\lambda^{-1} f^{n - 2 - p} | \mc{L} \psi |^2 &\geq \nabla^\beta ( P^Q_\beta + P^S_\beta + P^H_\beta ) \\
&\qquad + \left[ \frac{1}{2} \lambda f^{n - 2 + p} + \lambda \cdot \mc{O} (f^n) \right] | \nabla_N \psi |^2 \\
&\qquad + [ K f^{n - 2} \rho^2 + \mc{O} ( f^{n - 1} \rho^2 ) ] ( | \nabla_T \psi |^2 + | \nasla \psi |^2 ) \\
&\qquad + [ \kappa^2 - ( n - 2 ) \kappa + \sigma - (n - 1) ] f^{n - 2} | \psi |^2 \\
&\qquad + \frac{2 - p}{2} \, \lambda \left( 2 \kappa - n + 1 + \frac{p}{2} \right) f^{n - 2 + p} | \psi |^2 \\
&\qquad + [ (1 - p) \lambda^2 f^{n - 2 + 2p} + \lambda^2 \cdot \mc{O} ( f^n ) ] | \psi |^2 \text{.}
\end{split} \end{equation}
Thus, by recalling that $f \ll_{n, \tfr, p, K} 1$ and by recalling \eqref{eq.p_kappa}, we can find a constant $C > 0$, depending on $n$, $\tfr$, $p$, and $K$, such that
\begin{equation} \label{eq.hardy} \begin{split}
\lambda^{-1} f^{n - 2 - p} | \mc{L} \psi |^2 &\geq \nabla^\beta ( P^Q_\beta + P^S_\beta + P^H_\beta ) + C \lambda f^{n - 2 + p} | \nabla_N \psi |^2 \\
&\qquad + C f^{n - 2} \rho^2 ( | \nabla_T \psi |^2 + | \nasla \psi |^2 ) \\
&\qquad + [ \kappa^2 - ( n - 2 ) \kappa + \sigma - (n - 1) ] f^{n - 2} | \psi |^2 \\
&\qquad + C ( \lambda f^{n - 2 + p} + \lambda^2 f^{n - 2 + 2p} ) | \psi |^2 \text{.}
\end{split} \end{equation}

\subsubsection{Boundary Expansions}

By setting
\begin{equation} \label{eq.P}
P := P^Q + P^S + P^H \text{,}
\end{equation}
then \eqref{eq.hardy} is identical to \eqref{eq.psi_est}.
Thus, the proof of Lemma \ref{thm.psi_est} would be complete provided we verify $P$, as defined in \eqref{eq.P}, satisfies the estimate \eqref{eq.psi_est_boundary}.

Recall first that $\mc{N}$ has the asymptotic expansion \eqref{eq.rho_normal}.
Applying \eqref{eq.rho_normal}, \eqref{eq.f_deriv}, and \eqref{eq.f_deriv_trivial} to \eqref{eq.conj_A} and \eqref{eq.P_flat}, we see that
\begin{equation} \label{eq.boundary_11} \begin{split}
P^S ( \mc{N} ) &= \frac{1}{2} ( \kappa^2 - n \kappa + \sigma ) f^{n - 2} \cdot | \psi |^2 + \frac{1}{2} \lambda ( 2 \kappa - n + p ) f^{n - 2 + p} \cdot | \psi |^2 \\
&\qquad + \frac{1}{2} \lambda^2 f^{n - 2 + 2p} \cdot | \psi |^2 + \lambda^2 \cdot \mc{O} ( f^n ) \cdot | \psi |^2 \text{.}
\end{split} \end{equation}
Similarly, another application of \eqref{eq.rho_normal} and \eqref{eq.f_deriv} to \eqref{eq.P_natural} yields that
\begin{equation} \label{eq.boundary_12} \begin{split}
P^H ( \mc{N} ) &= ( 2 \kappa - n + 1 ) f^{n - 2} \cdot | \psi |^2 + \frac{2 - p}{2} \lambda f^{n - 2 + p} \cdot | \psi |^2 \\
&\qquad + \lambda \cdot \mc{O} ( f^n ) \cdot | \psi |^2 \text{.}
\end{split} \end{equation}
Combining \eqref{eq.boundary_11} and \eqref{eq.boundary_12}, we see that as long as $f$ is sufficiently small, there is some constant $\mc{C}$---depending on $n$, $\tfr$, $p$---such that
\begin{equation} \label{eq.boundary_1}
P^S ( \mc{N} ) + P^H ( \mc{N} ) \leq \mc{C} ( \lambda^2 + | \sigma | ) f^{n - 2} \cdot | \psi |^2 \text{.}
\end{equation}

For $P^Q$, we expand using \eqref{eq.P_sharp}:
\begin{equation} \label{eq.boundary_20} \begin{split}
P^Q ( \mc{N} ) &= \nabla_S \psi^I \nabla_{ \mc{N} } \psi_I - \frac{1}{2} g (S, \mc{N}) \cdot \nabla^\mu \psi^I \nabla_\mu \psi_I \\
&\qquad + h_\zeta \cdot \psi^I \nabla_{ \mc{N} } \psi_I - \frac{1}{2} \mc{N} h_\zeta \cdot | \psi |^2 \\
&:= B_1 + B_2 + B_3 + B_4 \text{.}
\end{split} \end{equation}
Using \eqref{eq.rho_normal}, \eqref{eq.f_deriv}, and \eqref{eq.h_est}, we see that
\begin{equation} \label{eq.boundary_21} \begin{split}
B_3 &\leq \mc{O} ( f^n ) \cdot ( \rho^2 | \nabla_\rho \psi |^2 + \rho^8 | \nabla_t \psi |^2 + \rho^8 | \nasla \psi |^2 + | \psi |^2 ) \text{,} \\
B_4 &= \mc{O} ( f^n ) \cdot | \psi |^2 \text{.}
\end{split} \end{equation}
For the remaining, we note from \eqref{eq.aads}, \eqref{eq.f_deriv_trivial}, \eqref{eq.f_grad}, and \eqref{eq.S} that
\begin{align*}
\nabla_S \psi &= f^{n - 2} \rho \left\{ [ 1 + \mc{O} (\rho^2) ] \cdot \nabla_\rho \psi + \mc{O} (f) \cdot \nabla_t \psi + \sum_{ A = 1 }^{n - 1} \mc{O} ( f \rho^2 ) \cdot \nabla_A \psi \right\} \text{,} \\
\nabla^\mu \psi^I \nabla_\mu \psi_I &= [ \rho^2 + O ( \rho^4 ) ] ( - | \nabla_t \psi |^2 + | \nabla_\rho \psi |^2 ) + | \nasla \psi |^2 \text{.}
\end{align*}
As a result, we see using the above, \eqref{eq.rho_normal}, and \eqref{eq.f_deriv_trivial} that
\begin{equation} \label{eq.boundary_22} \begin{split}
B_1 &\leq [ f^{n - 2} \rho^2 + \mc{O} ( f^{n - 2} \rho^4 ) ] \cdot | \nabla_\rho \psi |^2 + \mc{O} ( f^{n - 1} \rho^2 ) \cdot | \nabla_\rho \psi | | \nabla_t \psi | \\
&\qquad + \mc{O} ( f^{n - 1} \rho^3 ) \cdot | \nabla_\rho \psi | | \nasla \psi | + \mc{O} ( f^{n - 1} \rho^5 ) \cdot | \nabla_t \psi |^2 \\
&\qquad + \mc{O} ( f^{n - 1} \rho^4 ) \cdot | \nabla_t \psi | | \nasla \psi | + \mc{O} ( f^{n - 1} \rho^5  ) \cdot | \nasla \psi |^2 \text{,} \\
B_2 &= \frac{1}{2} f^{n - 2} [ \rho^2 + O (\rho^4) ] ( | \nabla_t \psi |^2 - | \nabla_\rho \psi |^2 ) - \frac{1}{2} f^{n - 2} | \nasla \psi |^2 \text{.}
\end{split} \end{equation}

From \eqref{eq.boundary_20}-\eqref{eq.boundary_22} (note the $| \nasla \psi |^2$-term is negative for $f < f_0$), we conclude
\begin{equation} \label{eq.boundary_2} \begin{split}
P^Q ( \mc{N} ) &\leq \left[ \frac{1}{2} f^{n - 2} \rho^2 + \mc{O} ( f^{n - 1} \rho^2 ) \right] ( | \nabla_t \psi |^2 + | \nabla_\rho \psi |^2 ) + \mc{O} ( f^n ) \cdot | \psi |^2 \\
&\leq \mc{C} f^{n - 2} \rho^2 ( | \nabla_t \psi |^2 + | \nabla_\rho \psi |^2 ) + \mc{C} f^{n - 2} \cdot | \psi |^2 \text{,}
\end{split} \end{equation}
for some $\mc{C} > 0$.
Finally, combining \eqref{eq.boundary_1} and \eqref{eq.boundary_2} completes the derivation of \eqref{eq.psi_est_boundary}, which completes the proof of Lemma \ref{thm.psi_est}.

\subsection{Completion of the Proof of \eqref{eq.carleman_lambda}} \label{sec.carleman_comp}

The next step is to convert the estimates for $\psi$ in Lemma \ref{thm.psi_est} into corresponding estimates for the original function $\phi$.
More specifically, we prove the following pointwise bound for $\phi$.

\begin{lemma} \label{thm.phi_est}
There exists $C > 0$, depending on $n$, $\tfr$, $p$, $K$, such that when $f < f_0$,
\begin{equation} \label{eq.phi_est} \begin{split}
\lambda^{-1} f^{-p} E^p_{\kappa, \lambda} | ( \Box + \sigma ) \phi |^2 &\geq C E^p_{\kappa, \lambda} ( \rho^4 | \nabla_t \phi |^2 + \rho^4 | \nabla_\rho \phi |^2 + \rho^2 | \nasla \phi |^2 ) \\
&\qquad + E^p_{\kappa, \lambda} [ \kappa ^2 - ( n - 2 ) \kappa + \sigma - (n - 1) ] | \phi |^2 \\
&\qquad + C \lambda^2 E^p_{\kappa, \lambda} f^{2p} | \phi |^2 + \nabla^\beta P_\beta \text{,}
\end{split} \end{equation}
where
\begin{equation} \label{eq.carleman_exp}
E^p_{\kappa, \lambda} := e^{-2 F} f^{n - 2} = f^{n - 2 - 2 \kappa} e^\frac{ - 2 \lambda f^p }{p} \text{,}
\end{equation}
and where the $1$-form $P$ satisfies, for some $\mc{C} > 0$ (depending on $n$, $\tfr$, $p$),
\begin{equation} \label{eq.phi_est_boundary}
\rho^{-n} \cdot P ( \mc{N} ) \leq \mc{C} [ | \nabla_t ( \rho^{-\kappa} \phi ) |^2 + | \nabla_\rho ( \rho^{-\kappa} \phi ) |^2 ] + \mc{C} ( \lambda^2 + | \sigma | ) | \rho^{-\kappa - 1} \phi |^2 \text{.}
\end{equation}
\end{lemma}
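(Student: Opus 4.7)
My plan is to substitute $\psi = e^{-F}\phi$ throughout the pointwise estimate \eqref{eq.psi_est} and translate the frame-adapted $\psi$-derivatives into coordinate $\phi$-derivatives. First, observe that the left-hand side matches directly: from $\mc{L}\psi = e^{-F}(\Box + \sigma)\phi$ and $E^p_{\kappa,\lambda} = e^{-2F}f^{n-2}$, one has $\lambda^{-1} f^{n-2-p}|\mc{L}\psi|^2 = \lambda^{-1} f^{-p} E^p_{\kappa,\lambda}|(\Box+\sigma)\phi|^2$. Next, since $T$ and each $E_A$ are tangent to the level sets of $f$ (so $\nabla_T f = \nabla_{E_A} f = 0$), the tangential derivatives transform trivially, $\nabla_T \psi = e^{-F}\nabla_T\phi$ and $\nasla\psi = e^{-F}\nasla\phi$, and $|\psi|^2 = e^{-2F}|\phi|^2$. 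To convert these frame-adapted quantities to the coordinate derivatives appearing in \eqref{eq.phi_est}, I would invoke the inversion identities \eqref{eq.vf_TN}, which yield, up to errors absorbable by the smallness of $f$,
\begin{align*}
\rho^2 |\nabla_t\phi|^2 \lesssim |\nabla_T \phi|^2 + f^2 |\nabla_N\phi|^2 + \rho^4 |\nasla\phi|^2,
\end{align*}
and a symmetric bound for $\nabla_\rho\phi$; multiplying by $\rho^2$ recovers the $\rho^4$-weights demanded by \eqref{eq.phi_est}.

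The subtle step is the normal direction, where $\nabla_N f = |\nabla f|$ gives $\nabla_N\psi = e^{-F}(\nabla_N\phi - F'|\nabla f|\,\phi)$. Applying $|a-b|^2 \geq \tfrac12 |a|^2 - |b|^2$ splits the term $C\lambda f^{n-2+p}|\nabla_N\psi|^2$ on the right of \eqref{eq.psi_est} into a positive piece $\tfrac{C}{2}\lambda f^{n-2+p} e^{-2F}|\nabla_N\phi|^2$ and a negative piece $-C\lambda f^{n-2+p}(F'|\nabla f|)^2 e^{-2F}|\phi|^2$. Applying \eqref{eq.vf_TN} once more converts the positive piece into $\tfrac{C}{2}\lambda f^p \cdot E^p_{\kappa,\lambda}\rho^2 |\nabla_\rho\phi|^2$ (modulo tangential errors that are absorbed by the $f^{n-2}\rho^2(|\nabla_T\psi|^2 + |\nasla\psi|^2)$ terms), and since $\rho\leq f$ and $p < 1$ imply $\lambda f^p \geq \rho^p \geq \rho^2$, this dominates $\tfrac{C}{2} E^p_{\kappa,\lambda}\rho^4 |\nabla_\rho\phi|^2$, as needed. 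For the negative piece, expanding via \eqref{eq.F_deriv} and \eqref{eq.f_grad_sq} gives $(F'|\nabla f|)^2 = (\kappa + \lambda f^p)^2 + \lambda^2\mc{O}(f^2)$, so its contribution is of size $\mc{O}(\lambda\kappa^2 f^{n-2+p} + \lambda^3 f^{n-2+3p})e^{-2F}|\phi|^2$, which is absorbed by the positive $C\lambda f^{n-2+p}|\psi|^2$ and $C\lambda^2 f^{n-2+2p}|\psi|^2$ terms already present on the right of \eqref{eq.psi_est}. Crucially, the leading coefficient $\kappa^2 - (n-2)\kappa + \sigma - (n-1)$ of $f^{n-2}|\psi|^2$ is carried over verbatim into \eqref{eq.phi_est}.

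For the boundary term, I would substitute $\psi = e^{-F}\phi$ directly into \eqref{eq.psi_est_boundary}: on $\{\rho = \rho_0\}$ one has $e^{-2F} = f^{-2\kappa}e^{-2\lambda f^p/p}$, which is comparable to $\rho^{-2\kappa}$ up to a bounded $f$-dependent factor, and the $\rho^{-n}$ weight on the left of \eqref{eq.phi_est_boundary} arises from the ratio between the ambient volume $dg$ and the induced boundary volume $d\mathring{g}$. The Leibniz rule then rewrites $|\nabla_t\psi|^2$ and $|\nabla_\rho\psi|^2$ as $|\nabla_t(\rho^{-\kappa}\phi)|^2$ and $|\nabla_\rho(\rho^{-\kappa}\phi)|^2$ plus $\mc{O}(\kappa^2)|\rho^{-\kappa-1}\phi|^2$ corrections, yielding \eqref{eq.phi_est_boundary} with the $\lambda^2 + |\sigma|$ factor intact. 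The main obstacle is coefficient bookkeeping: the Hardy-type conjugation error $(F'|\nabla f|)^2|\phi|^2$ is of precisely the right size to preserve (rather than spoil) the sharp constant in front of $|\phi|^2$, and this delicate cancellation is what forces both the smallness condition $f \ll_{n,\tfr,p,K} 1$ and the restriction $\kappa \geq (n-1)/2$ from \eqref{eq.p_kappa}.
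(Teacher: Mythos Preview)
Your overall strategy matches the paper's, but there is a genuine gap in the normal-direction conversion. When you apply $|a-b|^2 \geq \tfrac12|a|^2 - |b|^2$ directly to the term $C\lambda f^{n-2+p}|\nabla_N\psi|^2$, the negative piece is $C\lambda f^{n-2+p}(F'|\nabla f|)^2\,e^{-2F}|\phi|^2$. Since $(F'|\nabla f|)^2 \sim (\kappa + \lambda f^p)^2$ and $\lambda \geq 1+\kappa$, this error is of order $\lambda^3 f^{n-2+p}\,e^{-2F}|\phi|^2$ in the worst case (and at best $\lambda^3 f^{n-2+3p}$, as you write). Neither can be absorbed by the available positive term $C\lambda^2 f^{n-2+2p}|\psi|^2$: that would require $\lambda f^{-p} \lesssim 1$ (respectively $\lambda f^p \lesssim 1$), both of which fail because $\lambda$ is a free parameter that will be sent to infinity. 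The same issue contaminates the tangential errors in your frame-to-coordinate step: the cross terms from \eqref{eq.vf_TN} inherit the extra $\lambda$ and become $\lambda f^{n+p}\,e^{-2F}|\nabla_T\phi|^2$, which is not dominated by $f^{n-2}\rho^2\,e^{-2F}|\nabla_T\phi|^2$ in the region where $\rho \ll f$.

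The paper's remedy is to \emph{first} weaken the normal coefficient, replacing $C\lambda f^{n-2+p}|\nabla_N\psi|^2$ by $Cf^{n-2+2p}|\nabla_N\psi|^2$ (legitimate since $\lambda \geq 1 \geq f^p$), and only \emph{then} convert to $\phi$ via $e^{-2F}|\nabla_N\phi|^2 \lesssim |\nabla_N\psi|^2 + \lambda^2|\psi|^2$. The resulting error is now $\lambda^2 f^{n-2+2p}|\psi|^2$, which matches the positive zero-order term exactly and is absorbed with a $\lambda$-independent constant. After this, the frame-to-coordinate conversion \eqref{eq.TN_trho} goes through because all coefficients are $\lambda$-free and the required comparisons ($\rho^2 \leq f^2 \leq f^{2p}$, $f^2\rho^2 \leq f^{2p}$) hold by $\rho \leq f < 1$ and $p<1$. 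Your boundary-term argument is essentially correct; the bulk estimate is where the bookkeeping breaks.
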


\subsubsection{Proof of Lemma \ref{thm.phi_est}}

We begin with the bulk estimate \eqref{eq.psi_est}.
First, we use the largeness of $\lambda$ and the smallness of $f$ in order to obtain
\begin{equation} \label{eq.psi_est_0} \begin{split}
\lambda^{-1} f^{n - 2 - p} | \mc{L} \psi |^2 &\geq C f^{n - 2 + 2p} | \nabla_N \psi |^2 + C f^{n - 2} \rho^2 ( | \nabla_T \psi |^2 + | \nasla \psi |^2 ) \\
&\qquad + [ \kappa^2 - ( n - 2 ) \kappa + \sigma - (n - 1) ] f^{n - 2} | \psi |^2 \\
&\qquad + C \lambda^2 f^{n - 2 + 2p} | \psi |^2 + \nabla^\beta P_\beta \text{.}
\end{split} \end{equation}
In particular, we shrank the $| \nabla_N \psi |^2$-term and dropped one of the $| \psi |^2$-terms.

We now write \eqref{eq.psi_est_0} in terms of $\phi$.
By \eqref{eq.f_grad_sq}, \eqref{eq.F_deriv}, and the assumption $\lambda \geq 1 + \kappa$,
\begin{equation} \label{eq.psi_phi_deriv} e^{-2 F} | \nabla_N \phi |^2 = | \nabla_N \psi + F' \nabla_N f \cdot \psi |^2 \lesssim | \nabla_N \psi |^2 + \lambda^2 | \psi |^2 \text{.} \end{equation}
As a result, applying \eqref{eq.carleman_exp} and \eqref{eq.psi_phi_deriv} to \eqref{eq.psi_est_0} yields, for $f < f_0$,
\begin{equation} \label{eq.lower_1} \begin{split}
\lambda^{-1} f^{-p} E^p_{\kappa, \lambda} | ( \Box + \sigma ) \phi |^2 &\geq C E^p_{\kappa, \lambda} ( \rho^2 | \nabla_T \phi |^2 + \rho^2 | \nasla \phi |^2 + f^{2p} | \nabla_N \phi |^2 ) \\
&\qquad + E^p_{\kappa, \lambda} [ \kappa^2 - ( n - 2 ) \kappa + \sigma - (n - 1) ] | \phi |^2 \\
&\qquad + C \lambda^2 E^p_{\kappa, \lambda} f^{2p} | \phi |^2 + \nabla^\beta P_\beta \text{,}
\end{split} \end{equation}
where $C$ is some (possibly different) constant depending on $n$, $\tfr$, $p$, and $K$.

Next, applying \eqref{eq.vf_TN}, we observe that
\begin{equation} \label{eq.TN_trho} \begin{split}
\rho^4 | \nabla_t \phi |^2 &\lesssim \rho^2 | \nabla_T \phi |^2 + f^2 \rho^2 | \nabla_N \phi |^2 + \rho^6 | \nasla \phi |^2 \text{,} \\
\rho^4 | \nabla_\rho \phi |^2 &\lesssim \rho^2 | \nabla_N \phi |^2 + f^2 \rho^2 | \nabla_T \phi |^2 + \rho^6 | \nasla \phi |^2 \text{.}
\end{split} \end{equation}
Thus, recalling \eqref{eq.lower_1} and \eqref{eq.TN_trho}---and noting that $f \geq \rho$ by \eqref{eq.f_trivial}---results in \eqref{eq.phi_est}.\footnote{Again, the constant $C$ may change, but it depends still on $n$, $\tfr$, $p$, and $K$.}
(Here, we take $P$ to be the same $1$-form as was in Lemma \ref{thm.psi_est}.)

To complete the proof of Lemma \ref{thm.phi_est}, it remains to express the boundary estimate \eqref{eq.psi_est_boundary} in terms of $\phi$.
For convenience, we define the shorthand
\begin{equation} \label{eq.carly_exp}
\mc{E}_{p, \lambda} := e^{- \frac{ \lambda f^p }{p} } \leq 1 \text{.}
\end{equation}
Note that by \eqref{eq.f_deriv},
\begin{equation} \label{eq.carly_exp_deriv}
| \partial_\rho \mc{E}_{p, \lambda} | \lesssim \lambda f^p \rho^{-1} \text{,} \qquad | \partial_t \mc{E}_{p, \lambda} | \lesssim \lambda f^{p + 1} \rho^{-1} \text{.}
\end{equation}

Since $\rho \leq f$ by \eqref{eq.f_trivial}, then \eqref{eq.p_kappa} and \eqref{eq.carly_exp} imply
\begin{equation} \label{eq.upper_30}
f^{n - 2} \rho^{-n} \cdot | \psi |^2 = f^{n - 2 - 2 \kappa} \rho^{-n + 2 + 2 \kappa} \cdot | \mc{E}_{p, \lambda} \rho^{ - \kappa - 1 } \phi |^2 \leq | \rho^{ - \kappa - 1 } \phi |^2 \text{.}
\end{equation}
Next, since $f^{-1} \rho = \sin (\tfr t)$ depends only on $t$, then \eqref{eq.f_deriv}, \eqref{eq.carly_exp}, and \eqref{eq.carly_exp_deriv} yield
\begin{align*}
| \nabla_\rho \psi | &\lesssim f^{-\kappa} \rho^\kappa [ \mc{E}_{p, \lambda} | \nabla_\rho ( \rho^{-\kappa} \phi ) | + | \partial_\rho \mc{E}_{p, \lambda} | | \rho^{-\kappa} \phi | ] \\
&\lesssim f^{-\kappa} \rho^\kappa [ | \nabla_\rho ( \rho^{-\kappa} \phi ) | + \lambda f^p | \rho^{-\kappa - 1} \phi | ] \text{.}
\end{align*}
Again, \eqref{eq.f_trivial} and \eqref{eq.p_kappa} yields, for $f < f_0$,
\begin{equation} \label{eq.upper_31} \begin{split}
f^{n - 2} \rho^{-n + 2} | \nabla_\rho \psi |^2 &\lesssim f^{-2 \kappa + n - 2} \rho^{2 \kappa - n + 2} [ | \nabla_\rho ( \rho^{-\kappa} \phi ) |^2 + \lambda^2 | \rho^{-\kappa - 1} \phi |^2 ] \\
&\lesssim | \nabla_\rho ( \rho^{-\kappa} \phi ) |^2 + \lambda^2 | \rho^{-\kappa - 1} \phi |^2 \text{.}
\end{split} \end{equation}

Moreover, a similar computation using \eqref{eq.f_deriv} and \eqref{eq.carly_exp_deriv} yields
\begin{align*}
| \nabla_t \psi | &\lesssim f^{-\kappa} \rho^\kappa | \nabla_t ( \rho^{-\kappa} \phi ) | + | \nabla_t ( f^{-\kappa} \rho^\kappa ) | | \rho^{-\kappa} \phi | + f^{-\kappa} \rho^\kappa | \partial_t \mc{E}_{p, \lambda} | | \rho^{-\kappa} \phi | \\
&\lesssim f^{-\kappa} \rho^\kappa [ | \nabla_t ( \rho^{-\kappa} \phi ) | + \lambda f | \rho^{-\kappa - 1} \phi | ] \text{,}
\end{align*}
from which, along with \eqref{eq.p_kappa}, we obtain,
\begin{equation} \label{eq.upper_32} \begin{split}
f^{n - 2} \rho^{-n + 2} | \nabla_t \psi |^2 &\lesssim f^{-2 \kappa + n - 1} \rho^{2 \kappa - n + 1} [ | \nabla_t ( \rho^{-\kappa} \phi ) |^2 + \lambda^2 | \rho^{-\kappa - 1} \phi |^2 ] \\
&\lesssim | \nabla_\rho ( \rho^{-\kappa} \phi ) |^2 + \lambda^2 | \rho^{-\kappa - 1} \phi |^2 \text{.}
\end{split} \end{equation}
Combining \eqref{eq.psi_est_boundary} with \eqref{eq.upper_30}-\eqref{eq.upper_32} yields \eqref{eq.phi_est_boundary}.

\subsubsection{The Integrated Estimate}

To complete the proof of \eqref{eq.carleman_lambda}, we integrate the pointwise inequality \eqref{eq.phi_est} over $\Omega_{ f_0, \rho_0 }$, and we then apply the divergence theorem.
Observe $\partial \Omega_{ \rho_0 }$ contains two components: one from $\{ f = f_0 \}$, and the other from $\{ \rho = \rho_0 \}$.
Since both $\phi$ and $\nabla \phi$ vanish on $\{ f = f_0 \}$, then by \eqref{eq.phi_est_boundary}, we need only consider $\{ \rho = \rho_0 \}$.
The result of this computation is
\begin{equation} \label{eq.integral_1} \begin{split}
&\lambda^{-1} \int_{ \Omega_{ f_0, \rho_0 } } f^{-p} E^p_{\kappa, \lambda} | ( \Box + \sigma ) \phi |^2 + \int_{ \{ \rho = \rho_0 \} } P ( \mc{N} ) \\
&\quad \geq C \int_{ \Omega_{ f_0, \rho_0 } } E^p_{\kappa, \lambda} ( \rho^4 | \nabla_t \phi |^2 + \rho^4 | \nabla_\rho \phi |^2 + \rho^2 | \nasla \phi |^2 ) \\
&\quad \qquad + \int_{ \Omega_{ f_0, \rho_0 } } E^p_{\kappa, \lambda} [ \kappa^2 - ( n - 2 ) \kappa + \sigma - (n - 1) ] | \phi |^2 \\
&\quad \qquad + C \lambda^2 \int_{ \Omega_{ f_0, \rho_0 } } E^p_{\kappa, \lambda} f^{2p} | \phi |^2 \text{,}
\end{split} \end{equation}

It remains to estimate the boundary term in \eqref{eq.integral_1}.
Let $| \mathring{g} |$ and $| \mf{g} |$ denote the square roots of the determinants of the matrices $[ \mathring{g}_{a b} ]$ and $[ \mf{g}_{ab} ]$, respectively, defined with respect to some fixed coordinates $x^a$.
Then,
\begin{equation} \label{eq.boundary_volume} \begin{split}
\int_{ \{ \rho = \rho_0 \} } P ( \mc{N} ) &\leq \int_{ \{ \rho = \rho_0 \} } \rho^{-n} P ( \mc{N} ) \cdot | \mf{g} | d t d x^1 \dots d x^{n-1} \\
&= \int_{ \{ \rho = \rho_0 \} } \rho^{-n} P ( \mc{N} ) [ | \mathring{g} | + \mc{O} ( \rho^2 ) ] d t d x^1 \dots d x^{n-1} \text{,}
\end{split} \end{equation}
Recalling \eqref{eq.aads_bdd}, the estimate \eqref{eq.phi_est_boundary}, and the fact that the coordinate systems we deal with are bounded, we see, for sufficiently small $\rho_0$, that
\begin{equation} \label{eq.integral_2} \begin{split}
\int_{ \{ \rho = \rho_0 \} } P ( \mc{N} ) &\leq \mc{C} \int_{ \{ \rho = \rho_0 \} } [ | \nabla_t ( \rho^{-\kappa} \phi ) |^2 + | \nabla_\rho ( \rho^{-\kappa} \phi ) |^2 ] \\
&\qquad + \mc{C} ( \lambda^2 + | \sigma | ) \int_{ \{ \rho = \rho_0 \} } | \rho^{-\kappa - 1} \phi |^2 \text{.}
\end{split} \end{equation}
Finally, combining \eqref{eq.integral_1} and \eqref{eq.integral_2} results in \eqref{eq.carleman_lambda}.

\subsection{Proof of \eqref{eq.carleman_beta}} \label{sec.carleman_beta}

We briefly sketch the proof of \eqref{eq.carleman_beta}, which is mostly analogous to that of \eqref{eq.carleman_lambda}.
The setting is as before, except we take $\sigma = \lambda = 0$.
With this $\sigma$ and $\lambda$, we retrace the proof of Lemma \ref{thm.psi_est} up to \eqref{eq.conj}, from which we obtain\footnote{Here, we define $F := \kappa \cdot \log f$ and $\psi := e^{-F} \phi = f^{-\kappa} \phi$, while $S_\zeta$ and $h_\zeta$ are as before.}
\begin{equation} \label{eq.proof_beta_1} \begin{split}
\mc{L} \psi^I S_\zeta \psi_I &\geq \nabla^\beta P'_\beta + [ ( 2 \kappa - n + 1 ) f^{n - 2} + \kappa \cdot \mc{O} (f^n) ] | \nabla_N \psi |^2 \\
&\qquad + [ K f^{n - 2} \rho^2 + \mc{O} ( f^{n - 1} \rho^2 ) ] ( | \nabla_T \psi |^2 + | \nasla \psi |^2 ) \\
&\qquad + \left[ ( \kappa^2 - n \kappa ) f^{n - 2} + \kappa^2 \cdot \mc{O} ( f^n ) \right] | \psi |^2 \text{,}
\end{split} \end{equation}
where $P'$ now satisfies
\begin{equation} \label{eq.proof_beta_1P} \begin{split}
P' ( \mc{N} ) &\leq \mc{C} f^{n - 2} \rho^2 ( | \nabla_t \psi |^2 + | \nabla_\rho \psi |^2 ) + \mc{C} \kappa^2 f^{n - 2} | \psi |^2 \text{,}
\end{split} \end{equation}
for some constant $\mc{C} > 0$ depending on $n$ and $\tfr$.\footnote{Essentially, the only difference from the corresponding proof of \eqref{eq.carleman_lambda} is that $\sigma = \lambda = 0$, and that any factors of $\lambda$ in the error terms can be replaced by $\kappa$.}

From here, the proof proceeds slightly differently.
Applying once again \eqref{eq.f_grad_sq}, \eqref{eq.S}, and \eqref{eq.h_est}, we estimate the left-hand side of \eqref{eq.proof_beta_1}:
\begin{align*}
\mc{L} \psi^I S_\zeta \psi_I &= \mc{L} \psi^I S \psi_I + \mc{L} \psi^I \cdot h_\zeta \psi_I \\
&\leq \kappa^{-1} f^{n - 2} | \mc{L} \psi |^2 + \frac{1}{2} \kappa [ f^{n - 2} + \mc{O} ( f^n ) ] | \nabla_N \psi |^2 + \kappa \cdot \mc{O} ( f^n ) \cdot | \psi |^2 \text{.}
\end{align*}
Combining the above with \eqref{eq.proof_beta_1} and using that $\kappa \gg n$ yields
\begin{equation} \label{eq.proof_beta_2} \begin{split}
\kappa^{-1} f^{n - 2} | \mc{L} \psi |^2 &\geq \nabla^\beta P'_\beta + [ \kappa f^{n - 2} + \kappa \cdot \mc{O} (f^n) ] | \nabla_N \psi |^2 \\
&\qquad + \left[ K f^{n - 2} \rho^2 + \mc{O} ( f^{n - 1} \rho^2 ) \right] ( | \nabla_T \psi |^2 + | \nasla \psi |^2 ) \\
&\qquad + \left[ \frac{1}{2} \kappa^2 f^{n - 2} + \kappa^2 \cdot \mc{O} ( f^n ) \right] | \psi |^2 \text{.}
\end{split} \end{equation}

Using analogues of \eqref{eq.psi_phi_deriv} and \eqref{eq.TN_trho} as before, we obtain from \eqref{eq.proof_beta_2},
\begin{equation} \label{eq.proof_beta_3} \begin{split}
\kappa^{-1} f^{n - 2 - 2 \kappa} | \Box \phi |^2 &\geq C f^{n - 2 - 2 \kappa} ( \rho^4 | \nabla_t \phi |^2 + \rho^4 | \nabla_\rho \phi |^2 + \rho^2 | \nasla \phi |^2 ) \\
&\qquad + C f^{n - 2 - 2 \kappa} \kappa^2 | \phi |^2 + \nabla^\beta P'_\beta \text{,}
\end{split} \end{equation}
where the constant $C > 0$ depends on $n$, $\tfr$, and $K$.
Furthermore, from analogues of \eqref{eq.upper_30}, \eqref{eq.upper_31}, and \eqref{eq.upper_32}, we can bound
\begin{equation} \label{eq.proof_beta_4}
\rho^{-n} \cdot P ( \mc{N} ) \leq \mc{C} \kappa^2 [ | \nabla_t ( \rho^{-\kappa} \phi ) |^2 + | \nabla_\rho ( \rho^{-\kappa} \phi ) |^2 + | \rho^{-\kappa - 1} \phi |^2 ] \text{,} \end{equation}
where the constant $\mc{C} > 0$ again depends on $n$ and $\tfr$.

We now integrate \eqref{eq.proof_beta_3} over $\Omega_{ f_0, \rho_0 }$ and apply the divergence theorem.
Moreover, from \eqref{eq.proof_beta_4}, we obtain the analogue of \eqref{eq.boundary_volume}, from which we obtain
\begin{equation} \label{eq.proof_beta_5}
\int_{ \{ \rho = \rho_0 \} } P' ( \mc{N} ) \leq \mc{C} \kappa^2 \int_{ \{ \rho = \rho_0 \} } [ | \nabla_t ( \rho^{-\kappa} \phi ) |^2 + | \nabla_\rho ( \rho^{-\kappa} \phi ) |^2 + | \rho^{-\kappa - 1} \phi |^2 ] \text{.}
\end{equation}
This handles the boundary term and results in the desired estimate \eqref{eq.carleman_beta}.

\section{Proof of Uniqueness} \label{sec.proof}

In this section, we provide the argument which proves unique continuation using the Carleman estimates of Theorem \ref{thm.carleman}.
In Section \ref{sec.proof_gen}, we state and prove the most general version (Theorem \ref{theo:ads}) of our uniqueness results.
In Section \ref{sec.proof_inf}, we briefly discuss a variant of our result (Theorem \ref{theo:ads5}) which deals with infinite-order vanishing and generally bounded potentials.

\subsection{The General Result} \label{sec.proof_gen}

Before stating our most general result, we first recall the analogues of the constants $\beta_\pm$ discussed in Section \ref{sec:vanishing} in $(n + 1)$-dimensions:

\begin{definition}
In $(n+1)$-dimensions, we define the constants $\beta_\pm$ by\footnote{These can be derived in an analogous manner as in Section \ref{sec:vanishing} for $n = 3$.}
\begin{equation} \label{eq.beta_pm}
\beta_\pm := \frac{n}{2} \pm \sqrt{ \frac{n^2}{4} - \sigma } \text{.}
\end{equation}
\end{definition}

Our general unique continuation can now be stated as follows:

\begin{theorem} \label{theo:ads}
Let $n \in \N$ and $y, r_0 > 0$.
Let $( \mc{I}_\tfr, \mathring{g} )$ be an $n$-dimensional segment of bounded static AdS infinity (see Definition \ref{def.aads_infinity}), and let $\mc{M} := (r_0, \infty) \times \mc{I}_\tfr$.
Suppose in addition that:
\begin{itemize}
\item $( \mc{M}, g )$ is an admissible aAdS segment, as described in Definition \ref{def.aads}.

\item The $\zeta$-pseudoconvex property (see Definition \ref{def.pseudoconvex_ass}) is satisfied for some function $\zeta = \mc{O} (1)$ on $\mc{M}$, with constant $K > 0$.
\end{itemize}
Moreover, suppose $\phi \in \Gamma \underline{T}^0_l \mathcal{M}$ satisfies the following:
\begin{enumerate}[(i)]
\item There exist constants $0 < p < 1$, $\sigma \in \R$, and $C > 0$ such that
\begin{equation} \label{we}
| \Box_g \phi + \sigma \phi |^2 \leq C \rho^p ( \rho^4 | \nabla_t \phi |^2 + \rho^4 | \nabla_\rho \phi |^2 + \rho^2 | \nasla \phi |^2 + \rho^{2p} | \phi |^2 ) \text{.}
\end{equation}

\item The following vanishing condition holds:
\begin{equation}  \label{bca}
\lim_{ \rho_0 \searrow 0 } \int_{ \mc{M} \cap \{ \rho = \rho_0 \} } [ | \nabla_t ( \rho^{ - \kappa } \phi ) |^2 + | \nabla_\rho ( \rho^{ - \kappa } \phi ) |^2  + | \rho^{- \kappa - 1} \phi |^2 ] d \mathring{g} = 0 \text{,}
\end{equation}
where $\kappa$ is given by
\begin{equation}
\label{bckappa} \kappa = \begin{cases} (n - 1) - \beta_- = \beta_+ - 1 & \text{if } \sigma \leq \frac{n^2 - 1}{4} \text{,} \\ \frac{n-1}{2} & \text{if } \frac{n^2 - 1}{4} < \sigma \text{.} \end{cases}
\end{equation}

\item The following finiteness condition holds:
\begin{equation} \label{ficoa}
\int_{ \mc{M} } \rho^{2+p} | \nasla \phi |^2 < \infty \text{.} 
\end{equation}
\end{enumerate}
Then, there exists $0 < f_0 \ll_{n, y, p, K} 1$ such that $\phi \equiv 0$ in $\mathcal{M} \cap \{ f < f_0 / 2 \}$.
\end{theorem}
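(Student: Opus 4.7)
The plan is the now-standard Carleman-plus-cutoff scheme: I will apply the estimate \eqref{eq.carleman_lambda} of Theorem \ref{thm.carleman} to a smooth cutoff of $\phi$, use the PDE bound \eqref{we} to absorb the bulk left-hand side into the bulk right-hand side once $\lambda$ is taken large, kill the inner boundary term at $\{\rho = \rho_0\}$ using the vanishing hypothesis \eqref{bca}, and finally send $\lambda \to \infty$ to force $\phi$ to vanish in a neighborhood of infinity. Concretely, I would choose $f_0 \ll_{n,\tfr,p,K} 1$ so that \eqref{eq.carleman_lambda} is available on $\Omega_{f_0,\rho_0}$, fix a cutoff $\chi \in C^\infty(\mc{M})$ equal to $1$ on $\{f \leq f_0/2\}$ and vanishing on $\{f \geq 3f_0/4\}$, and apply \eqref{eq.carleman_lambda} to $\chi\phi$ (which automatically satisfies the vanishing hypothesis on $\{f = f_0\}$), with exponent $\kappa$ prescribed by \eqref{bckappa} and $\lambda \geq 1 + \kappa$ to be taken large later.

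The key algebraic observation is that the prescription \eqref{bckappa} renders the indefinite coefficient $\kappa^2 - (n-2)\kappa + \sigma - (n-1)$ non-negative. Indeed, using the Frobenius relation $\beta_+^2 - n\beta_+ + \sigma = 0$, a short computation shows that in the subcritical regime $\sigma \leq (n^2-1)/4$ with $\kappa = \beta_+ - 1$ this coefficient is exactly zero, while in the supercritical regime with $\kappa = (n-1)/2$ it equals $\sigma - (n^2-1)/4 > 0$. Expanding $(\Box_g + \sigma)(\chi\phi) = \chi(\Box_g + \sigma)\phi + [\Box_g,\chi]\phi$, the commutator is supported in the cutoff annulus $\mc{A} := \{f_0/2 < f < 3f_0/4\}$, where the Carleman weight satisfies $e^{-2\lambda f^p/p} \leq e^{-2\lambda (f_0/2)^p/p}$; hence its contribution to the bulk left-hand side is $O(e^{-c\lambda})$ times a fixed integral of $|\nabla\phi|^2 + |\phi|^2$ over $\mc{A} \cap \{\rho > \rho_0\}$, which is finite by smoothness of $\phi$. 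On the set $\{\chi \equiv 1\}$, the PDE bound \eqref{we} yields
\[
f^{-p}\,|(\Box_g+\sigma)\phi|^2 \;\leq\; C f^{-p}\rho^p\bigl(\rho^4|\nabla_t\phi|^2 + \rho^4|\nabla_\rho\phi|^2 + \rho^2|\nasla\phi|^2 + \rho^{2p}|\phi|^2\bigr),
\]
and $\rho \leq f$ gives $f^{-p}\rho^p \leq 1$ and $\rho^{2p} \leq f^{2p}$. Term by term, this integrand is bounded pointwise by a constant multiple of the gradient and potential integrands on the right-hand side of \eqref{eq.carleman_lambda}, so choosing $\lambda \geq \lambda_\ast(n,\tfr,p,K,C)$ sufficiently large allows the full bulk left-hand side to be absorbed. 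The finiteness hypothesis \eqref{ficoa} is precisely what ensures the $\rho^2|\nasla\phi|^2$-integral is finite and thus all rearrangements are legitimate.

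After this absorption, the resulting inequality takes the schematic form
\[
\mc{B}_\lambda(\rho_0) + E_\lambda \;\geq\; C'\lambda \int_{\mc{M} \cap \{f < f_0\}} f^{n-2-2\kappa} e^{-2\lambda f^p/p}\bigl(\rho^4|\nabla_t\phi|^2 + \rho^4|\nabla_\rho\phi|^2 + \rho^2|\nasla\phi|^2 + f^{2p}|\phi|^2\bigr),
\]
where $\mc{B}_\lambda(\rho_0)$ is the $\{\rho = \rho_0\}$ boundary term of \eqref{eq.carleman_lambda} applied to $\chi\phi$ (identical to the corresponding term for $\phi$ since $\chi \equiv 1$ near $\rho = 0$), and $E_\lambda \leq M e^{-2\lambda (f_0/2)^p/p}$ captures the $O(e^{-c\lambda})$ cutoff error from $\mc{A}$. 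For any fixed $\lambda$, the hypothesis \eqref{bca} forces $\mc{B}_\lambda(\rho_0) \to 0$ as $\rho_0 \searrow 0$. Fixing next any $0 < f_1 < f_0/2$ and restricting the bulk integral on the right-hand side to $\{f \leq f_1\}$, the weight there is at least $e^{-2\lambda f_1^p/p}$; dividing through by $\lambda e^{-2\lambda f_1^p/p}$ and letting $\lambda \to \infty$, the upper bound $(M/\lambda) e^{-2\lambda[(f_0/2)^p - f_1^p]/p}$ tends to zero, forcing the bulk integrand to vanish on $\{f \leq f_1\}$. Since $f_1 < f_0/2$ was arbitrary, $\phi \equiv 0$ on $\mc{M} \cap \{f < f_0/2\}$, as claimed.

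The principal obstacle is the delicate interplay in the second step: the prescription \eqref{bckappa} must simultaneously (a) turn the otherwise sign-indefinite zero-order coefficient on the right-hand side of \eqref{eq.carleman_lambda} into a non-negative quantity, and (b) be compatible with the vanishing rate encoded in \eqref{bca}. This is precisely why the threshold $\sigma = (n^2-1)/4$---at which the two indicial roots $\beta_\pm$ nearly coincide---marks the transition in \eqref{bckappa} between the mass-tracking choice $\kappa = \beta_+ - 1$ and the symmetric choice $\kappa = (n-1)/2$. A secondary but nontrivial technical task is the bookkeeping of $\rho$- versus $f$-powers in the absorption step, which requires the inequalities $\rho \leq f$ and $\rho^{2p} \leq f^{2p}$ to match each integrand of \eqref{we} against the correctly weighted term on the right-hand side of \eqref{eq.carleman_lambda}.
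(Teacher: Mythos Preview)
Your proposal is correct and follows essentially the same Carleman-plus-cutoff scheme as the paper's proof: cut off in $f$, apply \eqref{eq.carleman_lambda} to $\chi\phi$ with $\kappa$ from \eqref{bckappa} (making the zero-order coefficient non-negative), absorb the interior bulk via \eqref{we} and $\rho \leq f$, kill the $\{\rho=\rho_0\}$ boundary term with \eqref{bca}, then let $\lambda\to\infty$.

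One point deserves more care than you give it. The cutoff annulus $\mc{A}=\{f_0/2<f<3f_0/4\}$ is \emph{not} precompact in $\mc{M}$: since $f=\rho/\sin(\tfr t)$, the region $\mc{A}$ reaches $\rho=0$ at the time endpoints $t\to 0,\,\tfr^{-1}\pi$. Hence ``finite by smoothness of $\phi$'' does not suffice once you send $\rho_0\searrow 0$, and you need your constant $M$ in $E_\lambda\leq M e^{-2\lambda(f_0/2)^p/p}$ to be uniform in $\rho_0$. You correctly invoke \eqref{ficoa} for the $\rho^{2+p}|\nasla\phi|^2$ contribution, but the remaining annulus integrals of $|\phi|^2$, $\rho^2|\nabla_t\phi|^2$, $\rho^2|\nabla_\rho\phi|^2$ also need justification; the paper obtains these from \eqref{bca} by integrating the $\{\rho=\rho_0\}$ bound in $\rho_0$ (e.g.\ $\int_0^{r_0^{-1}}\bar\rho^{-1+\varepsilon}\int_{\{\rho=\bar\rho\}}\rho^{-2\kappa}|\nabla_t\phi|^2\,d\mathring g\,d\bar\rho<\infty$ gives $\int_{\mc M}\rho^2|\nabla_t\phi|^2<\infty$ since $n-2\kappa\leq 1$). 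The paper also stresses, and you implicitly respect, that the absorption of $L_2$ must occur \emph{before} the limit $\rho_0\searrow 0$, since the weight $f^{n-2-2\kappa-p}$ on the interior region is unbounded near $\rho=0$ and one cannot a priori guarantee finiteness of $L_2$ in that limit.
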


\begin{remark}
The choice of $\kappa$ in \eqref{bckappa} in particular ensures both $\kappa \geq \frac{n-1}{2}$ and that the quantity $\kappa^2 - (n - 2) \kappa + \sigma - (n - 1)$ on the right hand side of \eqref{eq.carleman_lambda} is non-negative.
\end{remark}

\begin{proof}
Let $f_0 \ll_{n, y, p, K} 1$ be as in the statement of Theorem \ref{thm.carleman}, where $n$, $y$, $p$, $K$ are as given in the assumptions of the present theorem.
Let $\chi : \left[0,f_0\right] \rightarrow [0, 1]$ be a smooth cut-off function satisfying
\[ \chi ( s ) = \begin{cases} 1 & 0 \leq s \leq \frac{f_0}{2} \text{,} \\ 0 & s > \frac{3 f_0}{4} \text{.} \end{cases} \]
We wish to derive a wave equation for $\chi (f) \cdot \phi$; for notational convenience, we write $\chi$ for $\chi (f) = \chi \circ f$, and we use $\prime$ to denote differentiation in $f$.

A direct computation yields the identity
\begin{align} \label{we2}
\Box_g ( \chi \cdot \phi ) + \sigma ( \chi \cdot \phi ) = \mathcal{F} \text{,}
\end{align}
with the right hand side $\mathcal{F}$ given by
\begin{align}
\mathcal{F} &= 2 \nabla^\alpha \chi \nabla_\alpha \phi + \phi \Box_g \chi + \chi \left(\Box_g \phi + \sigma \phi\right) \nonumber \\
&= \chi^\prime ( 2 \nabla^\alpha f \nabla_\alpha \phi + \Box_g f \cdot \phi ) + \chi^{\prime \prime} \nabla^\alpha f \nabla_\alpha f \cdot \phi + \chi ( \Box_g \phi + \sigma \phi ) \text{.}
\end{align}
Note that $\chi^\prime$ and $\chi^{\prime \prime}$ are both supported in $[ \frac{1}{2} f_0, \frac{3}{4} f_0 ]$ only.
Thus, recalling \eqref{eq.f_grad}, \eqref{eq.f_grad_sq}, \eqref{eq.error_est}, \eqref{we}, and the smallness of $f$, we compute
\begin{equation} \label{int_ext}
|\mathcal{F}|^2 \begin{cases} \lesssim_{n, y, f_0} ( \rho^2 | \nabla_\rho \phi |^2 + \rho^2 | \nabla_t \phi |^2 + \rho^{2+p} | \nasla \phi |^2 + |\phi|^2 ) & \frac{f_0}{2} \leq f \leq \frac{3f_0}{4} \text{,} \\ \lesssim \rho^p ( \rho^4 | \nabla_t \phi |^2 + \rho^4 | \nabla_\rho \phi |^2 + \rho^2 | \nasla \phi |^2 + \rho^{2p} | \phi |^2 ) & 0 \leq f \leq \frac{f_0}{2} \text{.} \end{cases}
\end{equation}

Recall the definition of the region $\Omega_{f_0, \rho_0}$ from \eqref{eq.f0_rho0}, and define the subregions
\[ \Omega_{f_0, \rho_0}^{int} = \Omega_{f_0, \rho_0} \cap \left\{ f < \frac{f_0}{2} \right\} \text{,} \qquad \Omega_{f_0, \rho_0}^{ext} = \Omega_{f_0, \rho_0} \cap \left\{ \frac{f_0}{2} < f < \frac{3 f_0}{4} \right\} \text{,} \]
for sufficiently small $\rho_0 > 0$.
Since the function $\bar{\phi} := \chi \phi$ vanishes near $\{ f = f_0 \}$ and satisfies \eqref{we2}, we can apply \eqref{eq.carleman_lambda} of Theorem \ref{thm.carleman} for $\bar{\phi}$, with $\kappa$ chosen as in \eqref{bckappa}.
The left-hand side $L$ of \eqref{eq.carleman_lambda} can then be estimated
\begin{equation} \label{lhsd} \begin{split}
L &\lesssim \int_{ \Omega_{f_0, \rho_0}^{ext} } f^{n - 2 - 2 \kappa} e^\frac{ - 2 \lambda f^p }{p} f^{-p} | \mathcal{F} |^2 + \int_{ \Omega_{f_0, \rho_0}^{int} } f^{n - 2 - 2 \kappa} e^\frac{ - 2 \lambda f^p }{p} f^{-p} | \mathcal{F} |^2 \\
\qquad &\qquad + \lambda ( \lambda^2 + | \sigma | ) \int_{ \{ \rho = \rho_0 \} } [ | \nabla_t ( \rho^{ - \kappa } \bar{\phi} ) |^2 + | \nabla_\rho ( \rho^{ - \kappa } \bar{\phi} ) |^2 + | \rho^{- \kappa - 1} \bar{\phi} |^2 ] d \mathring{g} \\
&:= L_1 + L_2 + L_3 \text{.}
\end{split} \end{equation}
For the right-hand side $R$ of \eqref{eq.carleman_lambda}, we recall \eqref{eq.f_trivial} to estimate
\begin{equation} \label{rhsd} \begin{split}
R &\gtrsim_{n, y, p, K} \lambda^3 \int_{ \Omega_{f_0, \rho_0}^{int} } f^{n - 2 - 2 \kappa} e^\frac{ - 2 \lambda f^p }{p} \rho^{2p} | \phi |^2 \\
&\qquad + \lambda \int_{ \Omega_{f_0,\rho_0}^{int} } f^{n - 2 - 2 \kappa} e^\frac{ - 2 \lambda f^p }{p} ( \rho^4 | \nabla_t \phi |^2 + \rho^4 | \nabla_\rho \phi |^2 + \rho^2 | \nasla \psi |^2 ) \\
&:= R_1 + R_2 \text{.}
\end{split} \end{equation}
Here, we used that all terms on the right hand side of \eqref{eq.carleman_lambda} are non-negative by our choice of $\kappa$.
Moreover, we restricted all integrals to $\Omega^{int}_{f_0, \rho_0}$, where $\bar{\phi} = \phi$.

By \eqref{lhsd} and \eqref{rhsd}, we can write \eqref{eq.carleman_lambda} as
\[ L_1 + L_2 + L_3 \gtrsim_{n, y, p, K} R_1 + R_2 \text{.} \]
Using the second part of \eqref{int_ext} and the bound $| f^{-1} \rho | \leq 1$ from \eqref{eq.f_trivial}, we can then absorb $L_2$ into $R_1 + R_2$ for $\lambda$ large (independently of $\rho_0$!).
Thus, for large $\lambda$,
\begin{equation}
\label{LR_intermediate} L_1 + L_3 \gtrsim_{n, y, p, K} R_1 + R_2 \text{.}
\end{equation}
Next, we take the limit $\rho_0 \searrow 0$, and we claim that $L_3 \rightarrow 0$ in this limit.\footnote{Note it is important that the absorption step occurs before this limit $\rho_0 \searrow 0$. In particular, because of the weight $f^{n - 2 - 2 \kappa - p}$, it is not a priori clear that the limit of $L_2$ as $\rho_0 \searrow 0$ is finite. In contrast, such an issue does not arise with $L_1$, since $f$ is bounded from below on $\Omega_{f_0, \rho_0}^{ext}$.}
To see this, we express $L_3$ in terms of $\phi$.
Noting in particular from \eqref{eq.f_deriv} that
\[ | \partial_\rho \chi | + | \partial_t \chi | \lesssim_{ n, y, f_0 } \rho^{-1} \text{,} \]
then the vanishing assumption \eqref{bca} indeed implies the claim.

From the above argument, we can conclude that
\begin{equation} \label{LR_interm_2} \begin{split}
&\int_{ \left\{ \frac{f_0}{2} < f < \frac{3 f_0}{4} \right\} } f^{n - 2 - 2 \kappa} e^\frac{ - 2 \lambda f^p }{p} \left( |\phi|^2 + \rho^2 | \nabla_\rho \phi|^2 + \rho^2 | \nabla_t \phi|^2  + \rho^{2+p} | \slashed{\nabla} \phi|^2 \right) \\
&\quad \gtrsim_{n, y, p, K, f_0} \lambda^3 \int_{ \left\{ f < \frac{ f_0 }{2} \right\} } f^{n - 2 - 2 \kappa} e^\frac{ - 2 \lambda f^p }{p} \rho^{2p} | \phi |^2 \\
&\quad \qquad + \lambda \int_{ \left\{ f < \frac{ f_0 }{2} \right\} } f^{n - 2 - 2 \kappa} e^\frac{ - 2 \lambda f^p }{p} ( \rho^4 | \nabla_t \phi |^2 + \rho^4 | \nabla_\rho \phi |^2 + \rho^2 | \nasla \psi |^2 ) \text{.}
\end{split} \end{equation}
We can now pull out the $f$-weights in \eqref{LR_interm_2}: since $n - 2 - 2 \kappa \leq 0$ by assumption,
\[ f^{n-2-2\kappa} e^\frac{ - 2 \lambda f^p }{p} \begin{cases} \leq \left( \frac{f_0}{2} \right)^{n - 2 - 2 \kappa} e^\frac{ - 2 \lambda \left( \frac{f_0}{2} \right)^p }{p} & \frac{f_0}{2} \leq f \leq \frac{3 f_0}{4} \text{,} \\ \geq \left( \frac{f_0}{2} \right)^{n - 2 - 2 \kappa} e^\frac{ - 2 \lambda \left( \frac{f_0}{2} \right)^p }{p} & f < \frac{f_0}{2} \end{cases} \text{.} \]
As a result, we obtain that (recall that $\lambda$ is large)
\begin{equation} \label{LR_interm_3} \begin{split}
&\int_{ \left\{ \frac{f_0}{2} < f < \frac{3 f_0}{4} \right\} } \left( |\phi|^2 + \rho^2 | \nabla_\rho \phi|^2 + \rho^2 | \nabla_t \phi|^2+ \rho^{2+p} | \slashed{\nabla} \phi|^2  \right) \\
&\quad \gtrsim \lambda \int_{ \left\{ f < \frac{ f_0 }{2} \right\} } ( \rho^4 | \nabla_t \phi |^2 + \rho^4 | \nabla_\rho \phi |^2 + \rho^2 | \nasla \psi |^2 + \rho^{2p}|\phi|^2 ) \text{.}
\end{split} \end{equation}

Clearly, boundedness of the left-hand side of \eqref{LR_interm_3} will imply that $\phi$ vanishes on $\{ f < f_0 / 2 \}$ by taking the limit $\lambda \nearrow \infty$.
The former is in turn implied directly by \eqref{ficoa} for the $| \nasla \phi |^2$-term, and can be deduced from \eqref{bca} for the remaining terms.
For instance, for the $\nabla_t \phi$-term, \eqref{bca} implies that
\begin{align*}
\infty &> \int_0^{r_0^{-1}} \int_{ \{ \rho = \rho' \} } \bar{\rho}^{-1 + \varepsilon} \bar{\rho}^{-2 \kappa} | \nabla_t \phi |^2 d \mathring{g} d \bar{\rho} \gtrsim \int_{ \mc{M} } \rho^{n + 1} \rho^{-1 + \varepsilon} \rho^{-2 \kappa} | \nabla_t \phi |^2
\end{align*}
for any $\varepsilon > 0$.
Since $n - 2 \kappa + \varepsilon \leq 1 + \varepsilon$ by our assumptions on $\kappa$, this indeed yields
\[ \int_{ \mc{M} } \rho^2 | \nabla_t \phi |^2 < \infty \text{,} \]
as desired.
The remaining terms can be handled analogously.
\end{proof}

\subsection{The Infinite-Order Vanishing Theorem} \label{sec.proof_inf}

We saw in Theorem \ref{theo:ads} above that the order of vanishing for $\phi$ at the boundary ensuring unique continuation depends crucially on the mass $\sigma$: the smaller (more negative) $\sigma$ is, the larger $\kappa$, and hence the required order of vanishing, is.
The next theorem, which is a consequence of the second Carleman estimate \eqref{eq.carleman_beta}, states in particular that if we only assume\footnote{Recall that in the setting of Theorem \ref{theo:ads} the potential $V$ in $\Box_g \phi + V\phi=0$ is given by an exact mass term plus a decaying part which vanishes at a quantitative rate near the boundary.} $L^\infty$-boundedness for $V$ in $\Box_g \phi + V \phi = 0$, then infinite-order vanishing of $\phi$ near the boundary ensures unique continuation.

\begin{theorem} \label{theo:ads5}
Let $n \in \N$ and $y, r_0 > 0$.
Let $( \mc{I}_\tfr, \mathring{g} )$ be an $n$-dimensional segment of bounded static AdS infinity (see Definition \ref{def.aads_infinity}), and let $\mc{M} := (r_0, \infty) \times \mc{I}_\tfr$.
Suppose in addition that:
\begin{itemize}
\item $( \mc{M}, g )$ is an admissible aAdS segment, as described in Definition \ref{def.aads}.

\item The $\zeta$-pseudoconvex property (see Definition \ref{def.pseudoconvex_ass}) is satisfied for some function $\zeta = \mc{O} (1)$ on $\mc{M}$, with constant $K > 0$.
\end{itemize}
Moreover, suppose $\phi \in \Gamma \underline{T}^0_l \mathcal{M}$ satisfies the following:
\begin{enumerate}[(i)]
\item There exists $C > 0$ such that
\begin{equation} \label{we5}
| \Box_g \phi |^2 \leq C ( \rho^4 | \nabla_t \phi |^2 + \rho^4 | \nabla_\rho \phi |^2 + \rho^2 | \nasla \phi |^2 + | \phi |^2 ) \text{.}
\end{equation}

\item $\phi$ vanishes to infinite order on the boundary, i.e., for any $\kappa \gg n$, we have
\begin{equation}  \label{bca5}
\lim_{ \tilde{\rho} \searrow 0 } \int_{ \mc{M} \cap \{ \rho = \tilde{\rho} \} } [ | \nabla_t ( \rho^{ - \kappa } \phi ) |^2 + | \nabla_\rho ( \rho^{ - \kappa } \phi ) |^2  + | \rho^{- \kappa - 1} \phi |^2 ] d \mathring{g} = 0 \text{.}
\end{equation}

\item The following finiteness condition holds:
\begin{equation}
\int_{ \mathcal{M} } \rho^2 | \nasla \phi |^2 < \infty \text{.}
\end{equation}
\end{enumerate}
Then, there exists $0 < f_0 \ll_{n, y, K} 1$ such that $\phi \equiv 0$ in $\mathcal{M} \cap \{ f < f_0 / 2 \}$.
\end{theorem}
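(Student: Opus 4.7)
The plan is to follow the template of the proof of Theorem \ref{theo:ads}, but invoke the second Carleman estimate \eqref{eq.carleman_beta} in place of \eqref{eq.carleman_lambda}. The estimate \eqref{eq.carleman_beta} carries only the polynomial weight $f^{n-2-2\kappa}$, and its large parameter is $\kappa$ itself; it is thus naturally suited to infinite-order vanishing, with $\kappa\nearrow\infty$ playing the role that $\lambda\nearrow\infty$ played in the polynomial-rate setting. Fix $f_0 \ll_{n,y,K} 1$ as in Theorem \ref{thm.carleman}, and let $\chi:[0,f_0]\to[0,1]$ be a smooth cutoff with $\chi(s)=1$ for $s\leq f_0/2$ and $\chi(s)=0$ for $s\geq 3f_0/4$. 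Then $\bar\phi := \chi(f)\cdot\phi$ and $\nabla\bar\phi$ vanish in a neighborhood of $\{f=f_0\}$, so \eqref{eq.carleman_beta} applies to $\bar\phi$ for every $\kappa\gg n$.

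Exactly as in \eqref{we2}, a direct computation gives
\[
\Box_g\bar\phi = \chi\cdot\Box_g\phi + \chi^\prime\bigl(2\nabla^\alpha f\,\nabla_\alpha\phi + \Box_g f\cdot\phi\bigr) + \chi^{\prime\prime}\,\nabla^\alpha f\,\nabla_\alpha f\cdot\phi,
\]
where the last two terms are supported in the annulus $\mc{A} := \{f_0/2 \leq f \leq 3f_0/4\}$. On the interior $\{f<f_0/2\}$ only the first term remains, and hypothesis \eqref{we5} gives the pointwise bound $|\chi\cdot\Box_g\phi|^2 \leq C(\rho^4|\nabla_t\phi|^2 + \rho^4|\nabla_\rho\phi|^2 + \rho^2|\nasla\phi|^2 + |\phi|^2)$ with a constant independent of $\kappa$. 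Substituting into the left-hand side of \eqref{eq.carleman_beta} applied to $\bar\phi$, the interior contribution absorbs into the right-hand side once $\kappa$ is sufficiently large: the three gradient terms are absorbed by the corresponding $C\kappa$-prefactor, and the unweighted $|\phi|^2$ term by the $C\kappa^3$-prefactor. It is crucial here that \eqref{eq.carleman_beta} controls $|\phi|^2$ without an auxiliary $f^{2p}$-weight (unlike \eqref{eq.carleman_lambda}); this is precisely what matches the structure of \eqref{we5} and is the reason only infinite-order vanishing, rather than a fixed polynomial rate, is available for generally bounded potentials.

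Next I would send $\rho_0\searrow 0$: for each fixed $\kappa$, the infinite-order vanishing hypothesis \eqref{bca5} applied with that value of $\kappa$ forces the boundary term $\mathcal{C}\kappa^3\int_{\{\rho=\rho_0\}}[\ldots]$ in \eqref{eq.carleman_beta} to zero. What remains is an inequality of the schematic form
\begin{align*}
\int_{\mc{A}} f^{n-2-2\kappa}\,|\Box_g\bar\phi|^2
\gtrsim\; & \kappa\int_{\{f<f_0/2\}} f^{n-2-2\kappa}\bigl(\rho^4|\nabla_t\phi|^2 + \rho^4|\nabla_\rho\phi|^2 + \rho^2|\nasla\phi|^2\bigr) \\
& {}+ \kappa^3\int_{\{f<f_0/2\}} f^{n-2-2\kappa}\,|\phi|^2.
\end{align*}
Since $n-2-2\kappa<0$ for $\kappa\gg n$, the weight $f^{n-2-2\kappa}$ is bounded above by $(f_0/2)^{n-2-2\kappa}$ on $\mc{A}$ and bounded below by the same constant on $\{f<f_0/2\}$; dividing through by this common factor eliminates the weights entirely. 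The resulting left-hand side is a $\kappa$-independent finite quantity, the finiteness being justified exactly as in the closing steps of the proof of Theorem \ref{theo:ads}, using hypothesis (iii) together with \eqref{bca5} applied with some sufficiently large $\kappa$ to control the $\nabla\phi$-integrals arising from $\chi^\prime$ on $\mc{A}$. Letting $\kappa\nearrow\infty$ then forces all three integrals on the right to vanish, so $\phi\equiv 0$ on $\{f<f_0/2\}$.

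The main technical point, as in Theorem \ref{theo:ads}, is the ordering of limits: the absorption on the interior must be carried out with constants independent of $\rho_0$ so that one may first send $\rho_0\searrow 0$ and only afterwards send $\kappa\nearrow\infty$. Since \eqref{we5} carries no $\rho_0$- or $\kappa$-dependence and the $\kappa$-gains in \eqref{eq.carleman_beta} are structurally sufficient, this ordering proceeds without further complication, and the argument concludes.
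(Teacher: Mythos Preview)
Your proposal is correct and follows precisely the approach the paper outlines: the paper's own proof is a one-sentence sketch stating that the argument is analogous to that of Theorem \ref{theo:ads}, replacing the Carleman estimate \eqref{eq.carleman_lambda} by \eqref{eq.carleman_beta} and taking $\kappa\nearrow\infty$ in place of $\lambda\nearrow\infty$. You have accurately reconstructed the cutoff-absorb-limit argument, correctly identified why the unweighted $|\phi|^2$-control in \eqref{eq.carleman_beta} matches hypothesis \eqref{we5}, and handled the ordering of limits in the right way.
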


The proof of Theorem \ref{theo:ads5} is analogous to that of Theorem \ref{theo:ads}, except we now apply the Carleman estimate \eqref{eq.carleman_beta} instead of \eqref{eq.carleman_lambda}, and we now take at the end the limit $\kappa \nearrow \infty$ (as opposed to $\lambda \nearrow \infty$ in the proof of Theorem \ref{theo:ads}).

The assumption of infinite-order vanishing is also necessary unless further restrictions on the size of $V$ are being imposed.
Indeed, for any $\beta \in \R$, one has
\begin{equation} \label{eq.r_box_beta} \Box \rho^\beta = ( \beta^2 - n \beta ) \rho^\beta + O ( \rho^{ \beta + 2 } ) \text{.} \end{equation}
In particular, by letting $\beta \nearrow \infty$, we see that infinite-order vanishing is required in order to handle wave equations with arbitrary bounded potentials.\footnote{More generally, since $\Box \rho^\beta + \sigma \rho^\beta = ( \beta^2 - n \beta + \sigma ) \rho^\beta + O ( \rho^{ \beta + 2 } ) \text{,}$
we see that prescribing vanishing of order $\beta$ allows one to handle at best wave equations with bounded potentials which are not too large in the $L^\infty$-norm (with the size determined by $\beta^2 - n \beta + \sigma$).}

\subsection{Global Argument for Pure AdS} \label{sec:gloads}

Here, we briefly sketch a proof of Corollary \ref{thm.global}.
The main idea is to use the conformal equivalence of a portion of AdS spacetime with Minkowski spacetime, and to take advantage of the unique continuation results and related observations of \cite{alex_schl_shao:uc_inf}.
Recall in particular that AdS spacetime conformally embeds into half of the Einstein cylinder, $\R \times \Sph^n$, while Minkowski spacetime conformally embeds into a relatively compact ``triangular" region in the same Einstein cylinder; see Figure \ref{fig.mink_ads}.

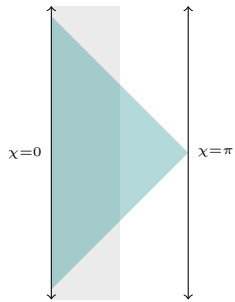
\begin{figure}
\centering
\begin{tikzpicture}[scale=1.3]
\fill[color=lightgray, opacity=0.3] (-0.7, -1.5) rectangle (0, 1.5);
\fill[color=teal, opacity=0.3] (-0.7, 1.4) -- (0.7, 0) -- (-0.7, -1.4) -- cycle;
\draw[<->] (-0.7, 1.5) -- (-0.7, -1.5);
\draw[<->] (0.7, 1.5) -- (0.7, -1.5);
\node[left] at (-0.7, 0) {$\scriptscriptstyle \chi = 0$};
\node[right] at (0.7, 0) {$\scriptscriptstyle \chi = \pi$};
\end{tikzpicture}
\caption{Part of Einstein cylinder $\R \times \Sph^n$, modulo spherical symmetry. The gray region indicates a conformally embedded AdS spacetime, while the teal region indicates a conformally embedded Minkowski spacetime.}
\label{fig.mink_ads}
\end{figure}

From the local unique continuation result, we know that $\phi$ and $d \phi$ vanishes on a region $\{ f < f_0 \}$ for some small $f_0$; this is the gray-shaded region in Figure \ref{fig.setup}.
Since the time interval used to generate $f$ has length greater than $\pi$, then this shaded region contains a point of both future and past null infinity in the conformal embedding of Minkowski spacetime; these points are denoted by $P_\pm$ in Figure \ref{fig.setup}.
Now, for some $\varepsilon > 0$, the level sets of the function
\[ \mf{f} = \frac{1}{4} ( r - t + 2 \varepsilon ) ( r + t + 2 \varepsilon ) \]
in \emph{Minkowski} spacetime form hyperboloids which terminate at $P_\pm$.

\begin{figure}
\centering
\begin{tikzpicture}[scale=1.5]
\fill[color=red, opacity=0.2] (-0.5, 2) -- (0.5, 1) -- (0.5, -1) -- (-0.5, -2) -- cycle;
\fill[color=gray, opacity=0.3] (0.5, 1.5) -- (0.5, -1.5) arc (250:110:0.7 and 1.6) -- cycle;
\draw[<->] (-0.5, 2) -- (-0.5, -2);
\node[left] at (-0.5, 0) {$\scriptscriptstyle r = 0$};
\draw[style=dashed, <->] (0.5, 2) -- (0.5, -2);
\node[right] at (0.5, 0) {$\scriptscriptstyle r = \infty$};
\draw (0.5, -1.5) arc (250:110:0.7 and 1.6);
\node at (0, 0) {$\scriptscriptstyle f = f_0$};
\draw[style=dashed, color=red] (-0.5, 2) -- (0.5, 1);
\draw[style=dashed, color=red] (-0.5, -2) -- (0.5, -1);
\node[draw, circle, fill, inner sep=0.7pt, color=purple] at (0.36, 1.14) {};
\node[draw, circle, fill, inner sep=0.7pt, color=purple] at (0.36, -1.14) {};
\node[above, right, color=purple] at (0.36, 1.14) {$\scriptscriptstyle P_+$};
\node[below, right, color=purple] at (0.36, -1.14) {$\scriptscriptstyle P_-$};
\node[right] at (0.5, 1.5) {$\scriptscriptstyle t = t_1$};
\node[right] at (0.5, -1.5) {$\scriptscriptstyle t = t_0$};

\fill[color=gray, opacity=0.3] (5.5, 1.5) -- (5.5, -1.5) arc (250:110:0.7 and 1.6) -- cycle;
\draw[<->] (4.5, 2) -- (4.5, -2);
\node[left] at (4.5, 0) {$\scriptscriptstyle r = 0$};
\draw[style=dashed, <->] (5.5, 2) -- (5.5, -2);
\node[right] at (5.5, 0) {$\scriptscriptstyle r = \infty$};
\draw (5.5, -1.5) arc (250:110:0.7 and 1.6);
\draw[style=dashed, color=red] (4.5, 2) -- (5.5, 1);
\draw[style=dashed, color=red] (4.5, -2) -- (5.5, -1);
\node[draw, circle, fill, inner sep=0.7pt, color=purple] at (5.36, 1.14) {};
\node[draw, circle, fill, inner sep=0.7pt, color=purple] at (5.36, -1.14) {};
\node[above, right, color=purple] at (5.36, 1.14) {$\scriptscriptstyle P_+$};
\node[below, right, color=purple] at (5.36, -1.14) {$\scriptscriptstyle P_-$};
\node[above, left, color=purple] at (5.1, 1) {$\scriptscriptstyle \mf{f} = c$};
\draw[color=purple] (5.36, -1.14) arc (250:110:1 and 1.23);
\node[color=blue] at (4.9, 0) {$U$};
\end{tikzpicture}
\caption{{\bf Left:} Part of AdS spacetime, with an embedded part of Minkowski spacetime (shaded red). The local unique continuation result applies to the gray-shaded region $f < f_0$. The purple $P_\pm$ are the terminal points of the pseudoconvex hyperboloids. {\bf Right:} Same setting, with purple line denoting a level set of $\mf{f}$. The region $U$ lies between the black and purple lines.}
\label{fig.setup}
\end{figure}
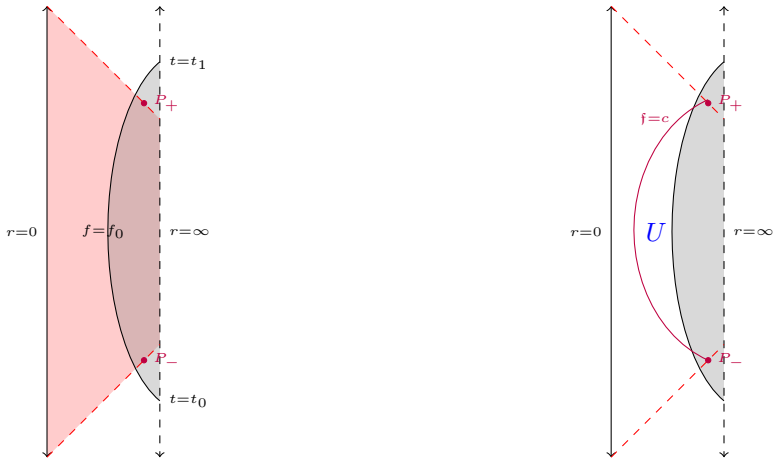

Recall from \cite[Section 3.1]{alex_schl_shao:uc_inf} that these level hyperboloids are strongly pseudoconvex, with this pseudoconvexity degenerating as one approaches $P_\pm$.
The idea is now to uniquely continue $\phi$ from the boundary $\{ f = f_0 \}$, where $\phi$ and $d \phi$ vanish, leftward to a level set $\{ \mf{f} = c \}$.
Consider the region $U$ bounded by these two hypersurfaces; see Figure \ref{fig.setup}.
We make the following observations:
\begin{itemize}
\item On $U$, the pseudoconvexity of the level sets of $\mf{f}$ are uniformly positive.
In other words, there is no degeneration of pseudoconvexity.

\item On $U$, the conformal factors associated with the embeddings of both AdS and Minkowski spacetimes into the Einstein cylinder are uniformly bounded from both above and below.

\item On $U$, the lower-order coefficients $a^\alpha$ and $V$ of the wave equation satisfied by $\phi$ are uniformly bounded.
\end{itemize}

As a result, classical uniqueness arguments (see, e.g., Proposition \ref{thm.uc_classical}) imply that $\phi$ can be uniquely continued into $U$, that is, $\phi$ vanishes identically on $U$.\footnote{This argument can be applied either directly to AdS spacetime or to the corresponding conformally related wave equation on Minkowski spacetime.}
Furthermore, the above observations, and hence the uniqueness argument, can be applied as long as the hyperboloid $\{ \mf{f} = c \}$ forming the left boundary of $U$ does not hit the line $\{ r = 0 \}$ in Minkowski spacetime.
Thus, by varying $c$, we obtain that $\phi$ and $d \phi$ vanishes in the dark gray region of Figure \ref{fig.globalize}.
Finally, resorting to the standard well-posedness theory (for Dirichlet boundary conditions), we obtain that $\phi$ indeed vanishes on all of AdS spacetime, as desired.

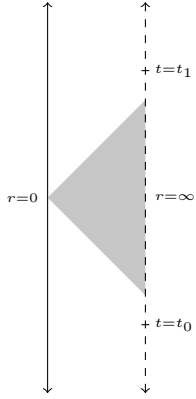
\begin{figure}
\centering
\begin{tikzpicture}[scale=1.3]
\fill[color=darkgray, opacity=0.3] (0.5, 1) -- (0.5, -1) -- (-0.5, 0) -- cycle;
\draw[<->] (-0.5, 2) -- (-0.5, -2);
\node[left] at (-0.5, 0) {$\scriptscriptstyle r = 0$};
\draw[style=dashed, <->] (0.5, 2) -- (0.5, -2);
\node[right] at (0.5, 0) {$\scriptscriptstyle r = \infty$};
\draw (0.46, 1.3) -- (0.54, 1.3);
\node[right] at (0.5, 1.3) {$\scriptscriptstyle t = t_1$};
\draw (0.46, -1.3) -- (0.54, -1.3);
\node[right] at (0.5, -1.3) {$\scriptscriptstyle t = t_0$};
\end{tikzpicture}
\caption{Classical uniqueness arguments using level sets of $\mf{f}$ show $\phi$ vanishes in the dark gray region. Well-posedness theory then implies $\phi$ vanishes everywhere in AdS spacetime.}
\label{fig.globalize}
\end{figure}

\section{Connections to the well-posedness theory} \label{sec.wp}

In this section, we deduce the assumptions of Theorem \ref{theo:ads} from natural assumptions in the context of the forward well-posedness theory for wave equations.
Recall that in this theory, given an arbitrary aAdS spacetime, one specifies initial data on a hypersurface of constant $t$ and seeks to contruct a solution near the conformal boundary (possibly depending on boundary conditions).
For simplicity, we restrict to the case where $\phi$ is scalar and satisfies the wave equation (\ref{freew}).
Generalizations to tensorial $\phi$ and more general wave equations are straightforward.

\begin{remark}
On the other hand, there is no established well-posedness theory for the wave equation $\Box_g \phi + V \phi = 0$ for general bounded potentials $V$.
As a result, there is no apparent restatement of Theorem \ref{theo:ads5} in terms of a well-posedness theory.
\end{remark}

Recall that in the context of the forward problem, one distinguishes three cases depending on the value of the mass $\sigma$:

\subsection{The range $n^2/4-1<\sigma<n^2/4$}

For such $\sigma$, which corresponds to the range $\frac{n}{2} - 1 < \beta_- < \frac{n}{2}$ and includes the conformally coupled case $\sigma = ( n^2 - 1 ) / 4$, i.e., $\beta_- = \frac{n - 1}{2}$, Warnick \cite{Warnick:2012fi} developed a general well-posedness theory for aAdS spacetimes based on propagation of the following energies on slices of constant $t$:
\begin{align*} 
E^{(1)}_{tw} [ \phi ] (\tau) &:= \int_{ \mc{M} \cap \{ t = \tau \} } \rho^{-1} \left\{ \rho^2 ( \partial_t \phi )^2 + \rho^{ 2 + 2 \beta_- } [ \partial_\rho ( \rho^{- \beta_-} \phi ) ]^2 + | \nasla \phi |^2 + \rho^2 \phi^2 \right\} \text{,} \\
E^{(2)}_{tw} [ \phi ] (\tau) &:= \int_{ \mc{M} \cap \{ t = \tau \} } \rho^{-1} \left\{ \rho^2 ( \partial_t \partial_t \phi )^2 + \rho^{ 2 + 2 \beta_- } [ \partial_\rho ( \rho^{-\beta_-} \partial_t \phi ) ]^2 + | \nasla \partial_t \phi|^2 \right. \\ 
&\qquad + \rho^{2+2\beta_-} | \rho^{-1} \nasla \partial_\rho ( \rho^{-\beta_-} \phi ) |^2  + | \rho^{-1} \nasla^2 \phi|^2 \\
&\qquad \left. + \rho^{ 2 n - 2 \beta_- } | \partial_\rho [ \rho^{2 \beta_- - n + 1} \partial_\rho ( \rho^{-\beta_-} \phi ) ] |^2 \right\} + E^{(1)}_{tw} [ \phi ] (t) \text{.}
\end{align*}
The above integrals are expressed in terms of the induced volume forms.\footnote{In pure AdS, integrals with respect to the induced metric can be stated more explicitly as
\[ \int_{ \{ t = \tau \} } \Phi \simeq \int_0^{ r_0^{-1} } \bar{\rho}^{-n} \int_{ \Sph^{n-1} } \Phi |_{ (t, \rho) = ( \tau, \bar{\rho} ) } d \mathring{\gamma} d \bar{\rho} \text{.} \]
An analogous expression holds for aAdS spacetimes.}

These renormalized energies (and their associated norms) are propagated by the forward evolution \emph{both} for the Dirichlet and the Neumann problem, that is, if one imposes \emph{either} of the two boundary conditions\footnote{Inhomogeneous versions of these conditions, as well as mixed Robin conditions, are also possible but omitted here for simplicity of the discussion.}
\begin{align} \label{bouc}
\rho^{-\beta_-} \phi \rightarrow 0 \quad \text{(Dirichlet),} \qquad \rho^{-2 + 2\beta_-} \partial_\rho ( \rho^{- \beta_-} \phi ) \rightarrow 0 \quad \text{(Neumann).}
\end{align}

\begin{remark}
This key observation goes back to the classical paper of Breitenlohner and Freedman \cite{BF} in the pure AdS case. The mathematical theory of ``twisted" Sobolev spaces and their associated energy estimates has been developed (and generalized to all aAdS spaces) by Warnick \cite{Warnick:2012fi}.
Note that the ``usual" energy arising from the timelike Killing field of AdS,
\[
E_{AdS}^{(1)} [\phi] (\tau) = \int_{ \{ t = \tau \} } \rho [ ( \partial_t \phi )^2 + ( \partial_\rho \phi )^2 + | \rho^{-1} \nasla \phi |^2 + \rho^{-2} \phi^2 ] \text{,}
\]
is only finite if homogenous Dirichlet conditions are imposed.
\end{remark}

The following theorem implies the informal Theorem \ref{thm.uc_rough_optimal2} as a special case:

\begin{theorem} \label{prop:deduce}
Consider a fixed $n$-dimensional aAdS spacetime $\left(\mathcal{M},g\right)$ as in Theorem \ref{theo:ads} and consider a solution of the scalar wave equation
\begin{align} \label{freew}
\Box_g \phi + \sigma \phi = V \phi \text{,} \qquad \textrm{with $V$ satisfying $|V| \leq C \rho^q$ for some $q > 0$,}
\end{align}
where $n^2 / 4 - 1 < \sigma < n^2 / 4$.
Assume that $\phi \in \mc{C}^\infty ( \mc{M} )$ satisfies:
\begin{itemize}
\item Dirichlet and Neumann conditions, i.e.,
\[
\lim_{ \tilde{\rho} \searrow 0 } \int_{ \mc{M} \cap \{ \rho = \tilde{\rho} \} } [ | \rho^{-\beta_-} \phi |^2 + | \nabla_t ( \rho^{ -\beta_- } \phi ) |^2 + | \rho^{-2 + 2 \beta_-} \nabla_\rho ( \rho^{ - \beta_- } \phi ) |^2 ] d \mathring{g} = 0 \text{.}
\]
\item $\sup\limits_{ t \in (0, y^{-1} \pi) } E^{(2)}_{tw} [ \phi ] (t) < \infty$ holds in $\mathcal{M}$.
\end{itemize}
Then, the assumptions (i)-(iii) in Theorem \ref{theo:ads} hold true.
\end{theorem}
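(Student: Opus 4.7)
The plan is to verify the three hypotheses (i), (ii), (iii) of Theorem \ref{theo:ads} in turn, with essentially all the work lying in (ii). Throughout, I will use the renormalized field $\psi := \rho^{-\beta_-}\phi$; in these variables the Dirichlet and Neumann hypotheses of the proposition read, in integrated $L^2$-form on slices $\{\rho = \tilde\rho\}$, that $\psi$, $\nabla_t\psi$, and $\rho^{-2+2\beta_-}\partial_\rho\psi$ all vanish as $\tilde\rho \searrow 0$, while the energy $E^{(2)}_{tw}$ gives uniform-in-$t$ control of weighted $H^2$-type norms of $\psi$.

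Hypothesis (i) is immediate from $|V| \leq C\rho^q$: one has $|\Box_g\phi + \sigma\phi|^2 \leq C^2\rho^{2q}|\phi|^2$, which is dominated by $C\rho^p \cdot \rho^{2p}|\phi|^2$ for any $p \in (0, \min(1, 2q/3))$. Hypothesis (iii) is also direct: $E^{(1)}_{tw}(\tau) \leq E^{(2)}_{tw}(\tau)$ already contains $\int_{\{t = \tau\}} \rho^{-1}|\nasla\phi|^2 \, d\mu_{\mathrm{slice}}$, and $\rho^{2+p} \leq \rho^{-1}$ near the boundary, so integrating in $\tau$ over $(0, y^{-1}\pi)$ yields $\int_{\mc{M}} \rho^{2+p} |\nasla\phi|^2 < \infty$ by the factorization of the spacetime volume form.

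The main step is hypothesis (ii). Expanding in $\psi$, the three integrands can be written as
\[ |\rho^{-\kappa-1}\phi|^2 = \rho^{2\beta_- - 2\kappa - 2}|\psi|^2, \qquad |\nabla_t(\rho^{-\kappa}\phi)|^2 = \rho^{2\beta_- - 2\kappa}|\nabla_t\psi|^2, \]
\[ |\nabla_\rho(\rho^{-\kappa}\phi)|^2 \leq 2\rho^{2\beta_- - 2\kappa}|\partial_\rho\psi|^2 + 2(\beta_- - \kappa)^2 \rho^{2\beta_- - 2\kappa - 2}|\psi|^2. \]
The $|\nabla_t\psi|^2$ and $|\partial_\rho\psi|^2$ pieces will be controlled by the $\nabla_t\psi$ and Neumann parts of the hypothesis, respectively, after absorbing the radial weights against the Neumann weight $\rho^{-4 + 4\beta_-}$ (the surplus power being non-negative for both choices of $\kappa$) or against the uniform $E^{(2)}_{tw}$ bound via Cauchy--Schwarz. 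The critical term is $\rho^{2\beta_- - 2\kappa - 2}|\psi|^2$: the exponent is at most $-1$ by the choice of $\kappa$ in \eqref{bckappa}, so Dirichlet alone is insufficient. Here the plan is to invoke a Hardy step: from $\psi|_{\rho=0} = 0$ and Cauchy--Schwarz with the Neumann weight one obtains
\[ |\psi(\tilde\rho, t, \omega)|^2 = \Bigl| \int_0^{\tilde\rho}\partial_\rho\psi \, d\rho \Bigr|^2 \lesssim \tilde\rho^{4\beta_- - 3}\int_0^{\tilde\rho}\rho^{-4 + 4\beta_-}|\partial_\rho\psi|^2\,d\rho, \]
and integrating $\tilde\rho^{2\beta_- - 2\kappa - 2}$ times this over $\mc{S}$ then produces the required limit via the Neumann vanishing.

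The main obstacle will be the precise matching of weights in the critical subcase $\sigma \leq (n^2-1)/4$, where $\kappa = \beta_+ - 1 = n - 1 - \beta_-$ is the sharp threshold: the Hardy-improved decay $\psi = o(\rho^{\beta_+ - \beta_-})$ extracted from the Neumann condition exactly cancels the singular weight $\rho^{-2(\beta_+ - \beta_-)}$, leaving no room to spare. The $\nabla_\rho(\rho^{-\kappa}\phi)$ term will require a parallel Hardy argument applied to the $\psi/\rho$ piece arising from differentiating the weight, or an interpolation against $E^{(2)}_{tw}$. Care must also be taken at the borderline $\sigma = (n^2-1)/4$, where the two cases meet and logarithmic Frobenius corrections formally appear; these should cancel at the level of the $L^2$-limits in question by a direct computation.
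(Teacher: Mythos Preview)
Your handling of (i) and (iii) is fine and matches the paper. The gap is in (ii), specifically in the Hardy step for the critical zero-order term.

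The Neumann hypothesis says that $\int_{\{\rho=\tilde\rho\}}|\rho^{-2+2\beta_-}\partial_\rho\psi|^2\,d\mathring{g}\to 0$ as $\tilde\rho\searrow 0$; it gives no control on the $\rho$-integral $\int_0^{\tilde\rho}\rho^{-4+4\beta_-}|\partial_\rho\psi|^2\,d\rho$ appearing on the right of your Hardy inequality, so ``the required limit via the Neumann vanishing'' does not follow. There is also a slip in the weight: Cauchy--Schwarz with $\rho^{4-4\beta_-}$ produces the prefactor $\tilde\rho^{5-4\beta_-}$ (not $\tilde\rho^{4\beta_--3}$), and this factor diverges once $\beta_-\geq 5/4$, which happens throughout the range for $n\geq 4$ and for $\sigma$ near $n^2/4$ when $n=3$. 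Your ``surplus power'' claim for the $|\partial_\rho\psi|^2$ piece fails for the same reason: the exponent $2\beta_--2\kappa-(-4+4\beta_-)=4-2\beta_--2\kappa$ is negative for $n\geq 4$ and also for $n=3$ in the case $\kappa=(n-1)/2$.

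The missing idea is that the second-order energy $E^{(2)}_{tw}$ must be used directly, via its control of $\partial_\rho Y$ with $Y:=\rho^{2\beta_--n+1}\partial_\rho\psi$. The paper argues as follows: from $Y\to 0$ (Neumann) and the fundamental theorem of calculus,
\[
\int_{\{\rho=\rho_1\}}|Y|^2\,d\mathring{g}\ \leq\ 2\sup_{\bar\rho\leq\rho_1}\Bigl(\int|Y|^2\,d\mathring{g}\Bigr)^{1/2}\int_0^{\rho_1}\Bigl(\int|\partial_\rho Y|^2\,d\mathring{g}\Bigr)^{1/2}d\bar\rho,
\]
and a weighted Cauchy--Schwarz on the last integral against the term $\rho^{2n-2\beta_-}|\partial_\rho Y|^2$ in $E^{(2)}_{tw}$ yields the \emph{quantitative} rate $\int_{\{\rho=\rho_1\}}|Y|^2\,d\mathring{g}\lesssim E^{(2)}_{tw}\cdot\rho_1^{\,2\beta_--n+2}$ (the exponent is positive precisely because $\beta_->n/2-1$). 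Iterating the same FTC scheme with $X:=\psi$, using the Dirichlet condition $X\to 0$ and $\partial_\rho X=\rho^{\,n-1-2\beta_-}Y$, gives $\int|X|^2\,d\mathring{g}\lesssim\rho_1^{\,n-2\beta_-+2}$; one more pass gives $\int|\partial_t X|^2\,d\mathring{g}\lesssim\rho_1^{\,n-2\beta_-}$. Plugging these three rates into your decomposition of the integrands in (ii) then closes for either value of $\kappa$ in \eqref{bckappa}. In short: qualitative Dirichlet/Neumann vanishing alone cannot produce the needed improved decay of $\psi$; one has to feed in the $H^2$-level control from $E^{(2)}_{tw}$ to upgrade the $o(1)$ statements to explicit power rates.
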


\begin{remark}
Clearly, imposing \emph{both} Dirichlet \emph{and} Neumann conditions is a necessary condition for unique continuation to apply, because the well-posedness theory discussed above constructs (at least for potentials $V$ decaying like $|V| \leq C \rho^2$) a large class of nontrivial solutions of finite renormalized energy $E_{tw}^{(2)} [\phi]$ with only one of these conditions satisfied.
\end{remark}

\begin{proof}
Assumption (i) is immediately seen to be satisfied. Assumption (iii) already holds by the uniform boundedness of $E^{(1)}_{tw}\left[\phi\right]\left(t\right)$. Let 
\[
X = \phi \rho^{-\beta_-} \text{,} \qquad Y = \rho^{2 \beta_- - n + 1} \partial_\rho ( \phi \rho^{-\beta_-} ) \text{.}
\]

From the fundamental theorem of calculus, we derive, for any fixed $\rho_1 \in (0, r_0^{-1})$,
\begin{align*}
\int_{ \{ \rho = \rho_1 \} } |Y|^2 d \mathring{g} &= \int_0^{ \rho_1 } d \bar{\rho} \int_{ \{ \rho = \bar{\rho} \} } d \mathring{g} \partial_\rho ( | Y |^2 ) \\
&\leq 2 \sup_{ \bar{\rho} \in (0, \rho_1) } \sqrt{ \int_{ \{ \rho = \bar{\rho} \} } |Y|^2 d \mathring{g} } \int_0^{ \rho_1 } d \bar{\rho} \sqrt{ \int_{ \{ \rho = \bar{\rho} \} } | \partial_\rho Y |^2 d \mathring{g} } \text{,}
\end{align*}
and after another application of the Cauchy-Schwarz inequality,
\begin{equation} \label{eq.vanish_Y}
\sup_{ \bar{\rho} \in (0, \rho_1) } \int_{ \{ \rho = \bar{\rho} \} } | Y |^2 d \mathring{g} \leq 4 y^{-1} \pi \left[ \sup_{ t \in (0, y^{-1} \pi) } E^{(2)}_{tw} [ \phi ] (t) \right] \cdot \frac{ \rho_1^{2 \beta_- - n + 2} }{ 2 \beta_- - n + 2} \text{.}
\end{equation}
Recalling that $0 < 2 \beta_- - n + 2 < 2$, then dropping the supremum in \eqref{eq.vanish_Y} gives a quantitative vanishing condition for $Y$ as $\rho_1 \rightarrow 0$.

Repeating the above argument with $X$, using the bound for $Y = \rho^{2 \beta_- - n + 1} \partial_\rho X$ in the last step, yields also a quantitative estimate for $X$, namely,
\begin{equation} \label{eq.vanish_X}
\int_{ \{ \rho = \rho_1 \} } |X|^2 d \mathring{g} \lesssim_{\beta_-, y} \left[ \sup_{ t \in (0, y^{-1} \pi) } E^{(2)}_{tw} [ \phi ] (t) \right] \cdot \rho_1^{-2 \beta_- + n + 2} \text{.}
\end{equation}
Finally, repeating the estimate for $\partial_t X$ yields
\begin{align} \label{eq.vanish_Xt}
\int_{ \{ \rho = \rho_1 \} } | \partial_t X |^2 d \mathring{g} \lesssim_{\beta_-, y} \left[ \sup_{ t \in (0, y^{-1} \pi) } E^{(2)}_{tw} [ \phi ] (t) \right] \cdot \rho_1^{-2 \beta_- + n} \text{.}
\end{align}
It is now easy to see from \eqref{eq.vanish_Y}-\eqref{eq.vanish_Xt} that assumption (ii) in Theorem \ref{theo:ads} holds.  
\end{proof}

\subsection{The range $\sigma \leq n^2/4 - 1$} \label{sec:5/4comments}

In this range, the finiteness of $E^{(1)}_{tw} [ \phi ]$ already excludes the Neumann solution: the latter behaves like $\psi \sim \rho^{\beta_-}$ near the boundary, producing a divergence in $E^{(1)}_{tw} [ \phi ]$ for $\beta_- \leq \frac{n}{2} - 1$.
Consequently, there is no freedom to impose boundary conditions in the forward well-posedness theory of (\ref{freew}) when working with the energy $E^{(1)}_{tw} [\phi]$.

Turning to Theorem \ref{theo:ads}, we see that the vanishing assumption on $\phi$ ensures the necessary condition that the $\rho^{\beta_+}$-branch of the solution vanishes.\footnote{Solutions to the forward problem with $\rho^{-\beta_+} \phi$ having nontrivial trace on the boundary are easily constructed from \cite{Hol:wp}.}
The remaining conditions on $\partial_\rho ( \rho^{-\beta_+} \phi )$, $\partial_t ( \rho^{-\beta_+} \phi )$, and $\rho^{-1} \nasla ( \rho^{-\beta_+} \phi )$ in (ii) and (iii) are very mild, since one actually expects these quantities to extend continuously to the boundary for sufficiently regular solutions.

\subsection{The range $\sigma \geq \frac{n^2}{4}$} \label{sec:9/4comments}

There is no classical forward well-posedness theory for this range of $\sigma$.

\section{Application to the linearized Einstein equations on AdS} \label{sec.linear}

We conclude the paper with an application of our results to gravitational perturbations of AdS spacetime in $(3+1)$-dimensions. Recall that if one linearizes the Einstein equation near pure AdS, the linearized Bianchi equation takes (in view of the fact that pure AdS is conformally flat) the form
\begin{equation} \label{Weyl}
\nabla^a W_{abcd} = 0 \text{,}
\end{equation}
where $\nabla$ is the background connection of pure AdS, and where $W$ is a tensor having the symmetries and algebraic properties of a Weyl tensor (also called a Weyl field).

Using the standard orthonormal frame for the pure AdS metric (\ref{ads}),
\[
e_0 = \frac{1}{\sqrt{1+r^2}} \cdot \partial_t \text{,} \qquad e_{ \bar{r} } = \sqrt{1 + r^2} \cdot \partial_r \text{,} \qquad e_1 \text{,} \qquad e_2 \text{,}
\]
where $e_A$, $1 \leq A \leq 2$, is an orthonormal frame on the level spheres $S_{t, r}$ of $(t, r)$, we can decompose $W$ into its electric and magnetic part:
\[
E = W ( e_0, \cdot, e_0, \cdot ) \text{,} \qquad H = {}^\star W ( e_0, \cdot, e_0, \cdot ) \text{.}
\]
$E$ can then be further decomposed into horizontal fields on the $S_{t, r}$'s:
\begin{itemize}
\item A horizontal $2$-tensor $E_{A B}$.

\item A horizontal $1$-form $E_{\bar{r} A}$.

\item A scalar $E_{\bar{r} \bar{r}}$.
\end{itemize}
The dual magnetic part $H$ can be similarly decomposed.

Furthermore, we let the horizontal $2$-tensors $\hat{E}, \hat{H} \in \Gamma \ul{T}^0_2 \mc{M}_\AdS$ denote the traceless parts of $E$ and $H$ (with respect to the induced metrics $\slashed{g} = r^2 \mathring{\gamma}$ on the $S_{t, r}$'s),
\begin{equation} \label{eq.traceless}
\hat{E}_{AB} = E_{AB} - \frac{1}{2} \slashed{g}^{CD} E_{CD} \cdot \slashed{g}_{AB} \text{,}
\end{equation}
and analogously for $\hat{H}$.

Our main results then imply the following:

\begin{corollary} \label{cor:adscft}
Let $W$ be a smooth Weyl field satisfying the Bianchi equation (\ref{Weyl}) on pure AdS spacetime.
Suppose $W$ satisfies, on a segment $I$ of infinity of time length $T > \pi$, the vanishing condition
\begin{equation} \label{eq.W_vanish}
r^3 |W| + r^3|\nabla_t W| + r^3| \nabla_{\mathbb{S}^2}W| \rightarrow 0 \text{.}
\end{equation}
Then $W$ vanishes globally in the interior of the segment $I$. 
\end{corollary}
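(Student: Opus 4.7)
The plan is to reduce Corollary \ref{cor:adscft} to an application of Theorem \ref{theo:ads} for the Teukolsky-type wave equations governing the extremal null-frame components of $W$, followed by propagation of vanishing through the rest of the Bianchi system and a globalization argument analogous to the one sketched in Section \ref{sec:gloads}. The key structural fact is that, for $\Box_g W = 0$-type tensorial systems on pure AdS, certain traceless horizontal components satisfy \emph{decoupled} wave equations with specific scalar masses $\sigma$, at which point Theorem \ref{theo:ads} applies directly.

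First I would introduce a null frame $(e_3, e_4, e_A)$ adapted to the level sets of $r$ and $t$ (with $e_3, e_4$ null and $e_A$ tangent to the $S_{t,r}$'s) and carry out the standard null decomposition of $W$ into horizontal tensor fields $\hat{\alpha}, \hat{\underline{\alpha}}, \beta, \underline{\beta}, \rho_W, \sigma_W$, where $\hat\alpha_{AB} = W(e_4, e_A, e_4, e_B)_{\mathrm{tf}}$ and analogously for $\hat{\underline\alpha}$. A direct computation from \eqref{Weyl} and its dual, using the constant curvature of pure AdS to avoid the usual Kerr-type coupling obstruction, yields decoupled equations of the form
\[
\Box_g \hat\alpha + \sigma_0 \hat\alpha = 0, \qquad \Box_g \hat{\underline\alpha} + \sigma_0 \hat{\underline\alpha} = 0,
\]
for a specific $\sigma_0 \in \R$ in the $n=3$ setting. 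Theorem \ref{theo:ads} then applies to each of these (in its horizontal-tensorial and system form described in Sections \ref{sec:tensorwave} and \ref{sec.intro_gen}). The hypothesis \eqref{eq.W_vanish} gives $\rho^{-3}$-normalized vanishing of all components of $W$ together with their $t$- and $\Sph^2$-derivatives, which in particular controls the weighted quantities $\rho^{-\kappa}\hat\alpha$, $\rho^{-\kappa}\hat{\underline\alpha}$ appearing in \eqref{bca} with room to spare, for the relevant $\kappa = \beta_+ - 1$.

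Once $\hat\alpha \equiv \hat{\underline\alpha} \equiv 0$ in a boundary neighborhood $\{f < f_0/2\}$ of $I$, the remaining Bianchi equations reduce to a first-order symmetric hyperbolic transport system for $\beta, \underline\beta, \rho_W, \sigma_W$ whose "top" sources vanish. Using the standard Bianchi hierarchy---commuting $\nabla_3, \nabla_4$ suitably and applying Gronwall-type transport estimates near infinity---one propagates vanishing successively down the hierarchy, so that $W \equiv 0$ in the same neighborhood of $I$. The globalization to the "interior of $I$" then proceeds as in Section \ref{sec:gloads}: one foliates the domain of influence of $I$ by strongly pseudoconvex hyperboloids (either in the AdS picture, or after conformally embedding into a portion of the Einstein cylinder on which a Minkowski-like foliation is strongly pseudoconvex with non-degenerate constants), and invokes the classical Proposition \ref{thm.uc_classical} iteratively to propagate vanishing inward. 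Well-posedness for the linearized Bianchi system fills in what remains.

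The main obstacle is the first step: verifying that the Teukolsky-type scalar wave equations for $\hat\alpha, \hat{\underline\alpha}$ on AdS genuinely decouple in the form demanded by Theorem \ref{theo:ads}, and pinning down $\sigma_0$ together with the rescaling convention under which the vanishing assumption \eqref{eq.W_vanish} aligns with \eqref{bca}--\eqref{bckappa}. A secondary technical point is handling the couplings that arise at subleading order for the non-extremal components $\beta, \underline\beta, \rho_W, \sigma_W$: although these could also be attacked directly via the system version of Theorem \ref{theo:ads}, routing through the transport-propagation argument above is cleaner, provided one tracks the $\rho$-weights with enough care to close the Gronwall step uniformly up to infinity.
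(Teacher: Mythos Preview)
Your architecture matches the paper's: isolate the extremal curvature components that satisfy decoupled Teukolsky-type wave equations, apply Theorem \ref{theo:ads} to those, then recover the remaining components. The paper works with the electric/magnetic combinations $\Phi^\pm = \hat E \pm \hat H^\star$ rather than the null-frame $\hat\alpha, \hat{\underline\alpha}$, but these are equivalent packagings of the same extreme components. Two points, however, need attention.

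The genuine gap is in verifying \eqref{bca}. The hypothesis \eqref{eq.W_vanish} controls only $W$ and its \emph{tangential} derivatives $\nabla_t W$, $\nabla_{\Sph^2}W$, whereas \eqref{bca} also demands vanishing of $\nabla_\rho(\rho^{-\kappa}\phi)$. You cannot extract this ``with room to spare'' from \eqref{eq.W_vanish} alone; some input from the equation is required. The paper closes this by using the Bianchi system itself: the first-order structure of \eqref{Weyl} converts the tangential control into the normal condition $|r^2\nabla_r(r^3\Phi^\pm)| \to 0$, after which a calculus argument in the style of Theorem \ref{prop:deduce} upgrades to the full \eqref{bca}. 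Your proposal skips this step. A lesser point: the decoupled equation is not of the clean form $\Box_g\hat\alpha + \sigma_0\hat\alpha = 0$; on pure AdS the equation for $r^2\Phi^\pm$ carries a first-order term $\pm\frac{4}{(1+r^2)r}\nabla_t(r^2\Phi^\pm)$ and a potential decaying like $r^{-2}$ (see \eqref{apw}). This is still covered by \eqref{we}, so Theorem \ref{theo:ads} applies, but you should not expect the lower-order terms to be absent.

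For the remaining components and the globalization, the paper's route differs from yours and is cleaner. Rather than a local transport/Gronwall hierarchy followed by a separate globalization, the paper first globalizes the vanishing of $\Phi^\pm$ throughout the interior of $I$ via Corollary \ref{thm.global}, and only then turns to the other components. With $\hat E = \hat H = 0$ globally, the Bianchi \emph{constraint} equations on each constant-$t$ slice yield homogeneous second-order elliptic equations for $E_{\bar r\bar r}$ and $H_{\bar r\bar r}$ with zero boundary data, forcing them to vanish; one further integration of the constraints from the boundary then kills $E_{\bar r A}$ and $H_{\bar r A}$. Your transport approach may be workable, but the elliptic-constraint argument avoids tracking $\rho$-weights through a Gronwall step near infinity and leverages the already-proved global result for the wave equation.
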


\begin{remark}
The vanishing assumption \eqref{eq.W_vanish} imposes that the suitably weighted tensor $W$ and its derivatives tangential to the boundary vanish.
Recall (see, for instance, \cite{adsdissipative}) that fixing the conformal class of the metric on the boundary to be (to linear order) that of AdS itself corresponds to the boundary condition
\[
|r^3 \hat{H}_{AB}| + |r^3 E_{A\bar{r}}| + |r^3 H_{\bar{r}\bar{r}}| \rightarrow 0 \text{,}
\]
while fixing the holographic stress energy tensor of a solution on the boundary to be zero corresponds to
\[
|r^3 \hat{E}_{AB}| + |r^3 H_{A\bar{r}}| + |r^3 E_{\bar{r}\bar{r}}| \rightarrow 0 \text{.}
\]
This justifies the nomenclature used in Section \ref{sec:tensorwave} of the introduction.
\end{remark}

\begin{proof}
From (\ref{Weyl}), one can see that the quantities $\Phi^\pm = \hat{E} \pm \hat{H}^\star$, where $\star$ denotes the Hodge star with respect to $\slashed{g}$, each satisfy a decoupled tensorial wave equation (in the sense of Section \ref{sec.aads_tensor}).\footnote{We use here the notation of \cite[Sect. 6.2]{adsdissipative}, where these decoupled equations are written out explicitly using spherical coordinates.} In fact, considering the (weighted) fields $r^2 \Phi^\pm$, we see that the resulting tensorial wave equations are of the form
\begin{equation} \label{apw}
\Box_g ( r^2 \Phi^\pm ) + 2 ( r^2 \Phi^\pm ) = \pm \frac{4}{ (1 + r^2) r } \nabla_t ( r^2 \Phi^\pm ) + V^\pm ( r^2 \Phi^\pm ) \text{,}
\end{equation}
where the potential $V^\pm$ decays like $r^{-2}$ near infinity (see also Section \ref{sec.linear_remark} below for further discussions on \eqref{apw} and its equivalent forms).

The tensorial equation (\ref{apw}) is of a form to which Theorem \ref{theo:ads} applies.
As shown in \cite{adsdissipative}, the assumption \eqref{eq.W_vanish} together with the Bianchi equations imply that $r^3 \Phi^+$ and $r^3 \Phi^-$ both satisfy the boundary conditions
\begin{equation} \label{bcfull}
| r^2 \nabla_r ( r^3 \Phi^\pm ) | \rightarrow 0 \text{,} \qquad | r^3 \Phi^\pm | \rightarrow 0 \text{.}
\end{equation}
Moreover, a calculus argument similar to that in the proof of Theorem \ref{prop:deduce} implies additional decay for $\Phi^\pm$, so that the vanishing assumptions of Theorem \ref{theo:ads} hold for $r^2 \Phi^\pm$.
Thus, by Theorem \ref{theo:ads} and Corollary \ref{thm.global}, we must have $\Phi^\pm = 0$ (and hence $\hat{E} = \hat{H} = 0$) in the interior of the segment $I$.

With $\hat{E}$ and $\hat{H}$ globally vanishing, we turn to the constraint equations arising from (\ref{Weyl}) on constant $t$-hypersurfaces in the interior of $I$.
From the vanishing of $\hat{E}$ and $\hat{H}$, one derives homogeneous second order elliptic equations for $E_{\bar{r}\bar{r}}$ and $H_{\bar{r}\bar{r}}$, with zero boundary conditions leading to the conclusion that $H_{\bar{r} \bar{r}} = 0$ and $E_{\bar{r} \bar{r}}=0$ globally for any hypersurface of constant $t$ in the interior of $I$.
A further integration of the constraint equations from the boundary with zero boundary data yields $H_{\bar{r}A}=0$ and $E_{\bar{r}A}=0$ as well.
\end{proof}

\subsection{A Remark on the Teukolsky Equations} \label{sec.linear_remark}

While the tensorial Teukolsky equations \eqref{apw} can be derived directly from \eqref{Weyl}, in physics literature, the equations are usually written in terms of complex scalar quantities.
Here, we demonstrate that \eqref{apw} is in fact equivalent to the scalar representation.

For this, we use the scalar complex representation given in \cite[Eq. (65)]{adsdissipative},
\begin{equation} \label{eq.teukolsky} \begin{split}
0 &= - \frac{r^2}{1 + r^2} \partial_t^2 \psi^\pm \pm \frac{4 r}{1 + r^2} \partial_t \psi^\pm + \frac{1 + r^2}{ r^3 } \partial_r \left\{ \frac{r^4}{1 + r^2} \partial_r [ r (1 + r^2) \psi^\pm ] \right\} \\
&\qquad + \frac{1}{ \sin \theta } \partial_\theta ( \sin \theta \cdot \partial_\theta \psi^\pm ) + \frac{1}{ \sin^2 \theta } \partial_\varphi^2 \psi^\pm - \frac{4 i \cos \theta}{ \sin^2 \theta } \partial_\varphi \psi^\pm \\
&\qquad - \left( \frac{4}{ \sin^2 \theta } - 2 \right) \psi^\pm \text{,}
\end{split} \end{equation}
where the scalars $\psi^\pm$ can be connected to the $\Phi^\pm$ in \eqref{apw} by the formulas
\begin{equation} \label{eq.teukonnection}
\psi^\pm = 2 ( \Phi^\mp ) ( e_1, e_1 ) - 2 i ( \Phi^\mp ) ( e_1, e_2 ) \text{.}
\end{equation}
Our objective below will be to \emph{assume \eqref{apw} and then derive \eqref{eq.teukolsky}.}

By \eqref{eq.mixed_deriv}, we can express $\Box_g \Phi^\pm$, contracted with respect to two frame elements $e_A$, $e_B$, in terms of the scalar wave operator on the function $\Phi^\pm ( e_A, e_B )$:
\begin{equation} \label{eq.tensor_wave} \begin{split}
\Box_g \Phi^\pm ( e_A, e_B ) &= \partial^\alpha [ \nabla_{ \partial_\alpha } \Phi^\pm ( e_A, e_B ) ] - \nabla_{ \nabla_{ \partial^\alpha } \partial_\alpha } \Phi^\pm ( e_A, e_B ) \\
&\qquad - \nabla_{ \partial_\alpha } \Phi^\pm ( \nasla_{ \partial^\alpha } e_A, e_B ) - \nabla_{ \partial_\alpha } \Phi^\pm ( e_A, \nasla_{ \partial^\alpha } e_B ) \\
&= \Box_g [ \Phi^\pm (e_A, e_B) ] - 2 \partial^\alpha [ \Phi^\pm ( \nasla_{ \partial_\alpha } e_A, e_B ) ] \\
&\qquad - 2 \partial^\alpha [ \Phi^\pm (e_A, \nasla_{ \partial_\alpha } e_B ) ] + \Phi^\pm ( \nasla_{ \nabla_{ \partial^\alpha } \partial_\alpha } e_A, e_B ) \\
&\qquad + \Phi^\pm (e_A, \nasla_{ \nabla_{ \partial^\alpha } \partial_\alpha } e_B ) + \Phi^\pm ( \nasla_{ \partial_\alpha } ( \nasla_{ \partial^\alpha } e_A ), e_B ) \\
&\qquad + \Phi^\pm ( e_A, \nasla_{ \partial_\alpha } ( \nasla_{ \partial^\alpha } e_B ) ) + 2 \Phi^\pm ( \nasla_{ \partial_\alpha } e_A, \nasla_{ \partial^\alpha } e_B ) \text{.}
\end{split} \end{equation}
On pure AdS, we can restrict our attention to the orthonormal frame
\begin{equation} \label{eq.ads_frame} e_1 := \frac{1}{r} \cdot \partial_\theta \text{,} \qquad e_2 := \frac{ 1 }{ r \sin \theta } \cdot \partial_\varphi \text{.} \end{equation}
Using \eqref{eq.ads_frame}, the explicit expression \eqref{eq.ads} for $g_\AdS$, and the fact that $\Phi_\pm$ is symmetric and trace-free, we can show that \eqref{eq.tensor_wave} reduces to
\begin{equation} \label{eq.tensor_wave_0} \begin{split}
( \Box_g \Phi^\pm )_{A B} &= - \frac{1}{ 1 + r^2 } \partial_t^2 ( \Phi^\pm_{A B} ) + r^{-2} \partial_r [ r^2 (1 + r^2) \cdot \partial_r ( \Phi^\pm_{A B} ) ] \\
&\qquad + \frac{1}{r^2 \sin \theta } \partial_\theta [ \sin \theta \cdot \partial_\theta ( \Phi^\pm_{A B} ) ] + \frac{1}{r^2 \sin^2 \theta } \partial_\varphi^2 ( \Phi^\pm_{A B} ) \\
&\qquad - \frac{ 4 \cos \theta }{ r^2 \sin^2 \theta } \partial_\varphi ( \Phi^{\pm \star}_{A B} ) + 4 r^{-2} ( 1 - \sin^{-2} \theta ) \cdot \Phi^\pm_{AB} \text{.}
\end{split} \end{equation}

The left-hand side of \eqref{eq.tensor_wave_0} can then be expanded using \eqref{apw}.
Applying the resulting equation, with $A = 1$ and $B = 1, 2$, and recalling \eqref{eq.teukonnection} results in \eqref{eq.teukolsky}.

\raggedright
\bibliographystyle{amsplain}
\bibliography{AdS}

\end{document}